\documentclass[final,leqno,onefignum,onetabnum]{siamltex1213}
\usepackage{mathtools}
\usepackage{amsmath,amssymb}
\usepackage{graphicx}
\usepackage{esint}

%



\title{Communicability Angle\\
and the Spatial Efficiency of Networks}

\author{Ernesto Estrada\footnotemark[2] \and Naomichi Hatano\footnotemark[3]}


\begin{document}
\maketitle

\global\long\def\thefootnote{\fnsymbol{footnote}}

\footnotetext[2]{Department of Mathematics \& Statistics, University of Strathclyde, Glasgow G11XQ, UK} 
\footnotetext[3]{Institute of Industrial Science, University of Tokyo, Komaba, Meguro, Tokyo 153-8505, Japan}

\global\long\def\thefootnote{\arabic{footnote}}

\begin{abstract}
We introduce the concept of communicability angle between a pair of
nodes in a graph. We provide strong analytical and empirical evidence
that the average communicability angle for a given network accounts
for its spatial efficiency on the basis of the communications among
the nodes in a network. We determine characteristics of the spatial
efficiency of more than a hundred real-world complex networks that
represent complex systems arising in a diverse set of scenarios. In
particular, we find that the communicability angle correlates very
well with the experimentally measured value of the relative packing efficiency
of proteins that are represented as residue networks. We finally show
how we can modulate the spatial efficiency of a network by tuning
the weights of the edges of the networks. This allows us to predict
effects of external stresses on the spatial efficiency of a network
as well as to design strategies to improve important parameters in
real-world complex systems. 
\end{abstract}
distance; graph planarity; Euclidean distance 

\noindent 

\pagestyle{myheadings} 
\thispagestyle{plain} 
\markboth{E.\ ESTRADA AND N.\ HATANO}{COMMUNICABILITY ANGLE}

\section{Introduction}

Graphs are frequently used to represent discrete objects both in abstract
mathematics and computer sciences as well as in applications, such
as theoretical physics, biology, ecology and social sciences~\cite{Handbook Graph Theory,Michaelbook}.
In the particular case of representing the networked skeleton of complex
systems, graphs receive the denomination of complex networks; we will
hereafter use graphs and networks interchangeably.

Complex networks are ubiquitous in many real-world
scenarios, ranging from the biomolecular --- those representing gene
transcription, protein interactions, and metabolic reactions --- to
the social and infrastructural organization of modern society~\cite{EstradaBook,NewmanReview,LucianoReview}.
In many of these networks, nodes and edges are used to represent physically
embedded objects~\cite{Spatial networks}, namely \textit{spatial
networks}. In urban street networks, for instance, the nodes describe
the intersection of streets, which are represented by the edges of
the graph. These streets and their intersections are embedded in the
two-dimensional space representing the surface occupied by the corresponding
city~\cite{Urban networks}. Thus, these networks
are planar graphs in the sense that we can draw them in a plane without
edge intersections, except for the few bridges and overpasses present
in a city. Another spatial network is the brain network,
in which the nodes account for brain regions embedded in the three-dimensional
space occupied by the brain, while the edges represent the communication
or physical connections among these regions~\cite{Brain networks}.
We can also capture the three-dimensional structure of proteins by
means of the residue networks in which nodes describe amino acids
and the edges represent physical interactions among them. Other examples
include the following: infrastructures, such as the Internet, transportation
networks, water and electricity supply networks, \textit{etc.}~\cite{Spatial networks};
anatomical networks, such as vascular and organ/tissue networks; the
networks of channels in fractured rocks; the networks representing
the corridors and galleries in animal nests; for even
more, see Ref.~\cite{EstradaBook} and references therein.

A natural question that arises in the analysis of spatial networks
is how efficiently they use the available geographical space in which
they are embedded. In a protein, for instance, the linear polypeptide
chain is folded up into a globular shape in order
to minimize the volume occupied inside the cell~\cite{Protein folding}.
In airport transportation networks the nodes are embedded into the
two-dimensional space represented by the surface of a country or continent,
but the connections between the airports occupy the available three-dimensional
space (it might be argued that they use a four-dimensional space as
two flights can intersect in space but at different times), which
increases the spatial efficiency of these networks. In contrast, the
planarity of urban street networks~\cite{Urban street planarity}
implies that both nodes and edges are embedded in a two-dimensional
space, which in general decreases the number of alternative routes
between different points in the network. This relatively poor spatial
efficiency of modern cities, \textit{i.e.}, the non-existence of three-dimensional
cities (although they have been already planned; see Chapter~3 in
Ref.~\cite{EstradaBook} and references therein), has posed a serious
challenge to their continuous growth in view of their threat to the
natural environment. Although the planarity may be an important part
of this problem, it is definitively not the only one. Two planar networks,
\textit{e.g.}, two cities, can display significantly different spatial
efficiency, and the same is true for pairs of non-planar networks.

The concept of \textit{spatial efficiency} is adapted here from economics,
where it is frequently used to describe how much time, effort and
cost a given arrangement produces for governments, businesses and
households to conduct their activities as compared to alternative
arrangements; see Ref.~\cite{Spatial efficiency} and references
therein. This concept has a lot to do with the efficiency in communication
among the parts of the system under study and as so it is a well-posed
problem for its analysis beyond spatial networks.

Indices for communication efficiency of networks have been previously
proposed in the literature~\cite{efficiency_1,efficiency_2,efficiency_3,efficiency_4}.
They have revolved around the idea of considering the sum of reciprocal shortest-path 
distances in graphs. It is worth mentioning that the sum of all
the reciprocal shortest-path distances in a graph is known in graph
theory as the Harary index, which was introduced by Plav\v{s}i\'{c} \textit{et
al}.\ in 1993~\cite{Harary_1} and studied elsewhere~\cite{Harary_2,Harary_3}.
The so-called efficiency index introduced by Latora and Marchiori~\cite{efficiency_1} 
(defined below in Eq.~\eqref{eq2.3-0}) is the average Harary index of a graph.
In the present work we will consider a new communication efficiency measure that
takes into account all the potential routes communicating a pair of
nodes instead of using the shortest paths only. The consequences of
this adoption will be developed in the rest of the paper.

In this context of communication among the nodes of a network, we~\cite{Communicability}
have introduced the \textit{communicability function} as a way to
quantify how much information can flow from one node to another in
a network; see also Refs.~\cite{Estrada Hatano Benzi,Estrada Higham}.
We regard the quantity $G_{pq}$, which we will define in Eq.~\eqref{eq2.20}
below, as the amount of information that departs from a node $p$
and ends at a node $q$. On the other hand, we regard $G_{pp}$
as the amount of information that departs from the original node $p$
and never arrives at the destination $q$, because it is returned
to its originator. Let us call the first amount of information the
successful information and the second the frustrated one. Then, the
goodness of communication between the two nodes is given by the
ratio of the successful to the frustrated amount of information.
Increasing the amount of successful information and reducing the amount
of frustrated one improves the quality of communication between the
two nodes. This has lead to the definition of a quantity~\cite{CommDist,CommDistAppl,Hyperspherical embedding}
that has been proved to be a distance between two nodes.

In the present paper, we show a remarkable mapping of each node of a
network to a point on the surface of a hypersphere. We prove that
the distance defined based on the communicability function is indeed
the chord distance between the two points on the
hypersphere. We can thereby assign a Euclidean angle to each pair
of nodes which represents the communication efficiency between them.
We then analyze various networks using the angle, which we refer to
as the communicability angle hereafter, and provide evidence that
this angle accounts for the spatial efficiency of networks.

\section{Preliminaries}

In this section we shall present some of the definitions, notations,
and properties associated with networks to make this work self-contained.
A \textit{graph} $\Gamma=(V,E)$ is defined by a set of $n$ nodes
(vertices) $V$ and a set of $m$ edges (links) $E=\{(p,q)|p,q\in V\}$
between the nodes. An edge is said to be \textit{incident} to a vertex
$p$ if there exists a node $q(\neq p)$ such that either $(p,q)\in E$
or $(q,p)\in E$. The \textit{degree} of a vertex, denoted by $k_{p}$,
is the number of edges incident to $p$ in $\Gamma$. The graph is
said to be \textit{undirected} if the edges are formed by unordered
pairs of vertices. A \textit{walk} of length $\ell$ in $\Gamma$
is a set of nodes $p_{1},p_{2},\ldots,p_{\ell},p_{\ell+1}$ such that
for all $1\leq i\leq\ell$, $(p_{i},p_{i+1})\in E$. A \textit{closed
walk} is a walk for which $p_{1}=p_{\ell+1}$. A \textit{path} is
a walk with no repeated nodes. A graph is \textit{connected} if there
is a path connecting every pair of nodes. A graph with unweighted
edges, no self-loops (edges from a node to itself), and no multiple
edges is said to be \textit{simple}. Throughout this work, we will
always consider undirected, simple, and connected networks.

More specifically, we will consider graphs which are defined as follows.
The path graph $P_{n}$ is a connected graph with
$n$ nodes, $n-2$ of which have degree 2 and the remaining two have
degree 1. The complete graph $K_{n}$ is the graph with $n$ nodes
and $n(n-1)/2$ edges. The complete bipartite graph $K_{n_{1},n_{2}}$
is the graph with $n=n_{1}+n_{2}$ nodes split into two disjoint sets,
one containing $n_{1}$ nodes and the other containing $n_{2}$ nodes,
while the edges connect every node in one set with every one in the
other. The particular case $K_{1,n-1}$ is known as the star graph.
A graph is planar if it can be drawn on a plane without any edge
crossings. The following is a well-known characterization of the planar
graphs known as the Kuratowski theorem (see Ref.~\cite{Topological graph theory}).

\begin{theorem} A network is planar if and only if it has no subgraph
homeomorphic to $K_{5}$ or $K_{3,3}$. \end{theorem}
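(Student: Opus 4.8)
The plan is to establish the two implications separately; the forward one (a planar network has no subgraph homeomorphic to $K_5$ or $K_{3,3}$) is elementary, while the converse carries all the weight. For the forward implication I would use Euler's formula $n-m+f=2$ for a connected plane graph together with the face-counting bounds $m\le 3n-6$ in general and $m\le 2n-4$ for bipartite graphs to see that neither $K_5$ (with $n=5$, $m=10>9$) nor $K_{3,3}$ (bipartite, $n=6$, $m=9>8$) admits a plane drawing; then, since planarity passes to subgraphs and is unchanged under subdividing or suppressing degree-two vertices (reroute the drawing along the relevant arc), a planar $\Gamma$ containing a subgraph homeomorphic to $K_5$ or $K_{3,3}$ would, after suppressing the subdivision vertices, force a plane drawing of $K_5$ or of $K_{3,3}$ itself, a contradiction.

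For the converse I would pass to the language of minors. The first step is the auxiliary equivalence that $\Gamma$ has a $K_5$- or $K_{3,3}$-minor if and only if it has a subgraph homeomorphic to one of them: the ``if'' direction is immediate, and for ``only if'' one uses that $K_{3,3}$ is cubic, so a $K_{3,3}$-minor is always realizable as a subdivision, while a short analysis of the branch sets turns a $K_5$-minor without a $K_5$-subdivision into a $K_{3,3}$-subdivision. It then suffices to prove that a network with neither a $K_5$-minor nor a $K_{3,3}$-minor is planar, which I would do by induction on the number of vertices, the cases $n\le 4$ being trivial. In the inductive step, a standard reduction — splitting $\Gamma$ at a cut vertex or at a $2$-separation $\{u,v\}$, adding a virtual edge $uv$ to each side (each side is still minor-free and smaller, since a $u$--$v$ path through the other side contracts to $uv$), embedding the pieces by the induction hypothesis, and regluing the embeddings along the separating vertices — reduces us to the case that $\Gamma$ is $3$-connected.

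The heart of the proof is the $3$-connected case, for which I would invoke three classical facts: Tutte's lemma that every $3$-connected graph on at least five vertices has an edge $e=xy$ with $\Gamma/e$ again $3$-connected; Whitney's theorem that a $3$-connected planar graph has an essentially unique embedding, in which every face is bounded by a cycle; and the resulting structure of faces. Then $\Gamma/e$ is planar by the induction hypothesis, so fix its embedding; deleting the contracted vertex $v_{xy}$ merges its incident faces into a single face of the $2$-connected graph $\Gamma/e-v_{xy}$, bounded by a cycle $C$ that carries all former neighbours of $v_{xy}$, hence all of $X:=N_\Gamma(x)\setminus\{y\}$ and $Y:=N_\Gamma(y)\setminus\{x\}$. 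If $x$ and $y$ have at most two common neighbours and the sets $X$ and $Y$ do not ``cross'' on $C$, then $x$ and $y$ can be reinserted into that face together with the edge $xy$, yielding a plane embedding of $\Gamma$ and closing the induction; otherwise a forbidden minor appears — three common neighbours of $x$ and $y$ together with the edge $xy$ give a $K_5$-minor, while four vertices of $X$ and $Y$ alternating around $C$ give a $K_{3,3}$-minor (absorb the four arcs of $C$ into branch sets and pair them with $\{x\}$ and $\{y\}$) — contradicting the hypothesis.

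I expect the genuine obstacles to be precisely these three ingredients of the $3$-connected case: Tutte's contractible-edge lemma, Whitney's uniqueness of $3$-connected planar embeddings, and the bookkeeping that converts a failed reinsertion of $x,y$ into one of the two Kuratowski minors; the $2$-separation reduction and the minor-versus-subdivision equivalence are routine by comparison. Since the statement is a classical theorem of graph theory, in the paper it would in fact suffice to cite it; the outline above is the route I would take to reconstruct a proof.
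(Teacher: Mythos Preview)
Your outline is a correct sketch of the standard modern proof of Kuratowski's theorem (essentially Thomassen's route via contractible edges in $3$-connected graphs), and you correctly anticipated the paper's actual treatment: the paper does not prove this statement at all. It is stated in the Preliminaries section as ``a well-known characterization of the planar graphs known as the Kuratowski theorem'' with a citation to a topological graph theory textbook, and nothing more. So there is no proof in the paper to compare against; your final sentence already captures exactly what the paper does.
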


Let us consider a matrix $A$ called the adjacency matrix, whose elements
are $A_{pq}=1$ if $(p,q)\in E$ and zero otherwise. For undirected
simple finite graphs, $A$ is a real symmetric matrix. 
We can therefore decompose it into the form 
\begin{align}
A=U\Lambda U^{T},
\end{align}
where $\Lambda$ is a diagonal matrix containing the eigenvalues of
$A$, which we label in non-increasing order $\lambda_{1}\geq\lambda_{2}\geq\ldots\geq\lambda_{n}$,
and $U=[\vec{\psi}_{1},\ldots,\vec{\psi}_{n}]$ is an orthogonal matrix,
where $\vec{\psi}_{\mu}$ is an eigenvector associated with $\lambda_{\mu}$.
Because we consider connected graphs, $A$ is irreducible; the Perron-Frobenius
theorem then dictates that $\lambda_{1}>\lambda_{2}$ and that we
can choose $\vec{\psi}_{1}$ 
such that its components $\psi_{1}(p)$ are positive for all $p\in V$.

An important quantity for studying communication processes in networks
is the communicability function~\cite{Communicability,Estrada Higham,Estrada Hatano Benzi},
defined for each pair of nodes $p$ and $q$ as 
\begin{align}
G_{pq}=\sum_{k=0}^{\infty}\frac{\left(A^{k}\right)_{pq}}{k!}=\left(e^{A}\right)_{pq}=\sum_{\mu=1}^{n}e^{\lambda_{\mu}}\vec{\psi}_{\mu}(p)\vec{\psi}_{\mu}(q).\label{eq2.20}
\end{align}
The factor $\left(A^{k}\right)_{pq}$ counts the number of walks of
length $k$ starting at the node $p$ and ending at the node $q$.
The communicability function is the sum of the numbers
of walks of length $k$, each weighted by the factor $1/k!$ so that
shorter walks may be more influential than longer ones.  In
Eq.~\eqref{eq2.20}, the exponential of the matrix $A$ is defined
by its Taylor expansion, which is the communicability function itself.
The spectral decomposition on the right-hand side is also derived
from the spectral decomposition of each term of the Taylor expansion:
$(A^{k})_{pq}=\sum_{\mu}(\lambda_{\mu})^{k}\vec{\psi}_{\mu}(p)\vec{\psi}_{\mu}(q)$. 

The importance of the communicability function~\eqref{eq2.20} lies
in the fact that it takes account of long walks as
well as short ones; even two nodes connected by a very long shortest
path can have a strong communication if they are connected by very
many longer walks. The diagonal term $G_{pp}$ characterizes the degree
of participation of the node $p$ in all subgraphs of the network.
It is thus known as the subgraph centrality of the corresponding node~\cite{Subgraph Centrality}.

We can visualize the communicability function~\eqref{eq2.20}
in another way. Consider a matrix-vector equation $d\vec{\psi}/dt=A\vec{\psi}$,
which governs the time evolution of a vector $\vec{\psi}(t)$. If
the vector $\vec{\psi}(t)$ describes a random-walker distribution
on the network in question at time $t$, the above equation describes
how the walkers move around on the network. Its formal solution is
given by $\vec{\psi}(t)=e^{At}\vec{\psi}(0)$, and hence the exponential
matrix $e^{A}$ is the time evolution operator for the unit time.
Therefore, the communicability function~\eqref{eq2.20} is the transition rate
for the walkers on the site $p$ (represented by a vector $\vec{w}_{p}$)
to move to the site $q$ (represented by another vector $\vec{w}_{q}$)
after the unit time, where $w_{p}$ is a column vector with unity
on the $p$th element and zero on the others. 

It is possible to define several distance measures on networks. The
most common one is the \textit{shortest-path} or \textit{geodesic
distance} between two nodes $p,q\in V$, which is defined as the length
of the shortest path connecting these nodes. We will write $d(p,q)$
to denote the distance between $p$ and $q$. Here we will refer to
the average of the shortest-path distance in the graph as the average
path length, as usual in network theory. 
The communication efficiency of a networks is defined on the basis
of this shortest-path distance as~\cite{efficiency_1}
\begin{equation}\label{eq2.3-0}
E=\dfrac{1}{n\left(n-1\right)}\sum_{p,q}d_{pq}^{-1},
\end{equation}
where $d_{pq}=d\left(p,q\right)$. The index $H=\sum_{p,q}d_{pq}^{-1}$
is the Harary index~\cite{Harary_1} mentioned above.

Another distance measure
among the nodes of a graph is the so-called resistance distance~\cite{Resistance distance}
which is defined by 
$\Omega_{pq}=L_{pp}^{+}+L_{qq}^{+}-2L_{pq}^{+}$,
where $L^{+}$ is the Moore-Penrose pseudoinverse of
the Laplacian matrix of the network~\cite{resistance and Laplacian,Minimising effective resistance};
the network Laplacian is defined by $L=K-A$ with $K=\mathop{\mathrm{diag}}\left(k_{i}\right)$.
The resistance distance considers not only the shortest paths but
also longer walks in the communication between two nodes. In spite of 
the potential similarity with the communicability function~\eqref{eq2.20}, they exhibit
very important differences. For instance, the communicability distance,
a metric based on communicability function, is highly uncorrelated
with the resistance distance~\cite{CommDist}. More importantly,
Luxburg \textit{et al}.~\cite{lost_in_space} have proved that for extremely
large graphs the resistance distance converges to an expression that
does not take into account the structure of the graph at all and 
is completely meaningless as a distance function on the graph. This
situation does not occur for the communicability-based functions.

An important concept in graph theory is the isoperimetric
number of the graph, which is the discrete analogous of the Cheeger
constant. Let $S\subset V$, such that $0<\left|S\right|<\left|V\right|/2$.
Also, let $\partial S=\left\{ U\subset E\left|\left(p,q\right)\in U\implies p\in S,q\in\bar{S}\right.\right\} $
be the neighborhood of the set $S$. The isoperimetric constant is
defined as
\begin{equation}\label{eq2.3}
i\left(G\right)=\underset{S}{\inf}\dfrac{\left|\partial S\right|}{\left|S\right|}.
\end{equation}
A large isoperimetric number indicates that the graph
lacks any structural bottlenecks, which means that the graph is super-homogeneous,
lacking any holes or core-periphery structures. Formally, a graph hole, also
known as a chordless cycle, is a cycle $C$ of length at least four
such that no two nodes of the cycle are connected by an edge that
does not itself belong to $C$. 

In the next section we will introduce a third distance measure
defined recently on the basis of the communicability function. It
is novel in the sense that longer walks than the shortest path are
taken into account.

\section{Communicability distance}

The new distance function is defined as~\cite{CommDist,CommDistAppl}
\begin{align}
{\xi_{pq}}^{2}=G_{pp}+G_{qq}-2G_{pq},\label{eq3.10}
\end{align}
which we will refer to as the communicability distance between the
nodes $p$ and $q$ in $\Gamma$. The intuition behind it is that
when two nodes $p$ and $q$ communicate with each other, the quality
of their communication depends on two factors: (i) how much information
departing from the node $p$ ($q$) arrives at the node $q$ ($p$),
and (ii) how much information departing from the node $p$ ($q$)
returns to that node $p$ ($q$) without arriving at its destination.
That is, the communication efficiency increases with the amount of
information which departs from the originator and arrives at its destination,
but decreases with the amount of information which is frustrated due
to the fact that the information returns to its originator without
being delivered to its target. We can rephrase the
information flow as the random walkers according to the interpretation
that $e^{A}$ is a time-evolution operator for the unit time. This
intuition has lead to the definition~\eqref{eq3.10}.

It has been indeed proved that the function $\xi_{pq}$ is a Euclidean
distance between the nodes $p$ and $q$ in $\Gamma$~\cite{CommDist}.

\renewcommand{\thetheorem}{\arabic{section}.\arabic{theorem}~\cite{Hyperspherical embedding}}
 \begin{theorem} The communicability distance $\xi_{pq}$ induces
an embedding of the graph $\Gamma$ of size $n$ into a hypersphere
of radius $R^{2}=[c-{\left(2-b\right)^{2}}/{a}]/4$ in an $\left(n-1\right)$-dimensional
space, where $a=\vec{1}^{T}e^{-A}\vec{1}$, $b=\vec{s}^{T}e^{-A}\vec{1}$
and $c=\vec{s}^{T}e^{-A}\vec{s}$ with $\vec{s}=\mathop{\mathrm{diag}}e^{A}$.
\end{theorem}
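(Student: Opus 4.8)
The plan is to realize the embedding explicitly and then identify the circumsphere of the resulting point configuration. Since $A=U\Lambda U^{T}$ is real symmetric, the matrices $e^{A/2}=Ue^{\Lambda/2}U^{T}$ and $e^{-A}=Ue^{-\Lambda}U^{T}$ are symmetric, positive definite and invertible. Let $\vec{e}_{p}$ denote the $p$-th standard basis vector of $\mathbb{R}^{n}$ and define the map $p\mapsto\vec{x}_{p}:=e^{A/2}\vec{e}_{p}$. A one-line computation gives
\begin{align}
\|\vec{x}_{p}-\vec{x}_{q}\|^{2}=(\vec{e}_{p}-\vec{e}_{q})^{T}e^{A}(\vec{e}_{p}-\vec{e}_{q})=(e^{A})_{pp}+(e^{A})_{qq}-2(e^{A})_{pq}={\xi_{pq}}^{2},
\end{align}
so the $\vec{x}_{p}$ realize the communicability distance in $\mathbb{R}^{n}$ (this recovers the fact that $\xi$ is Euclidean). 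Because $e^{A/2}$ is invertible and $\vec{e}_{1},\dots,\vec{e}_{n}$ are affinely independent, the points $\vec{x}_{1},\dots,\vec{x}_{n}$ are affinely independent and hence span the $(n-1)$-dimensional affine subspace $\mathcal{H}=\{e^{A/2}\vec{y}\,:\,\vec{1}^{T}\vec{y}=1\}$, which plays the role of the $(n-1)$-dimensional space in the statement.

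Next I would locate the circumcenter of the simplex with vertices $\vec{x}_{p}$. Since the circumcenter lies in $\mathcal{H}$, write it as $\vec{c}=e^{A/2}\vec{z}$ with $\vec{1}^{T}\vec{z}=1$. Then
\begin{align}
\|\vec{x}_{p}-\vec{c}\|^{2}=(\vec{e}_{p}-\vec{z})^{T}e^{A}(\vec{e}_{p}-\vec{z})=s_{p}-2(e^{A}\vec{z})_{p}+\vec{z}^{T}e^{A}\vec{z},
\end{align}
where $s_{p}=(e^{A})_{pp}$ is the $p$-th component of $\vec{s}$. Demanding that this be independent of $p$ forces $\vec{s}-2e^{A}\vec{z}=\kappa\vec{1}$ for some scalar $\kappa$, i.e.\ $\vec{z}=\tfrac{1}{2}e^{-A}(\vec{s}-\kappa\vec{1})$; imposing $\vec{1}^{T}\vec{z}=1$ and using $\vec{1}^{T}e^{-A}\vec{s}=\vec{s}^{T}e^{-A}\vec{1}=b$ together with $\vec{1}^{T}e^{-A}\vec{1}=a$ yields $\kappa=(b-2)/a$. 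This determines $\vec{c}$ uniquely and shows that all the $\vec{x}_{p}$ lie on a single hypersphere inside $\mathcal{H}$, of radius $R=\|\vec{x}_{p}-\vec{c}\|$.

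It then remains to evaluate $R^{2}$. By construction $s_{p}-2(e^{A}\vec{z})_{p}=\kappa$, while $\vec{z}^{T}e^{A}\vec{z}=\tfrac{1}{4}(\vec{s}-\kappa\vec{1})^{T}e^{-A}(\vec{s}-\kappa\vec{1})=\tfrac{1}{4}(c-2\kappa b+\kappa^{2}a)$; substituting $\kappa a=b-2$ and simplifying collapses the expression to $R^{2}=\tfrac{1}{4}\bigl[c-(2-b)^{2}/a\bigr]$, as claimed. Positivity of $R^{2}$ is automatic, since it equals a squared Euclidean distance and the embedded points are distinct whenever $\Gamma$ is a connected graph on $n\ge2$ nodes. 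The conceptual content — existence and uniqueness of the circumsphere — is immediate once affine independence of the $\vec{x}_{p}$ is noted; the step needing most care is the algebraic bookkeeping in the last paragraph, in particular keeping $e^{A}$ and $e^{-A}$ straight and tracking the scalar $\kappa$ through the simplification. A subtler point worth stating explicitly is why the center is restricted to $\mathcal{H}$: the locus of points equidistant from all $\vec{x}_{p}$ is a line in $\mathbb{R}^{n}$ (the axis of the circumsphere), and the normalization $\vec{1}^{T}\vec{z}=1$ selects its unique intersection with $\mathcal{H}$, which is the relevant center for the $(n-1)$-dimensional embedding.
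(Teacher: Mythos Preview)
Your argument is correct. Note, however, that the paper does not actually prove this theorem: it is quoted from the reference on hyperspherical embedding, and the paper instead offers only an intuitive geometric justification. That justification uses the embedding $\vec{x}_{p}=e^{\Lambda/2}\vec{\phi}_{p}$ in the eigenbasis (Theorem~3.2), which differs from your $\vec{x}_{p}=e^{A/2}\vec{e}_{p}$ only by the orthogonal transformation $U$ and hence yields the same distances; it then argues by counting constraints that there exist a normal vector $\vec{x}_{\perp}$ fixing the $(n-1)$-dimensional flat and a center $\vec{x}_{0}$ and radius $R$ satisfying $|\vec{x}_{p}-\vec{x}_{0}|=R$ for all $p$, without computing $R$ explicitly.

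Your approach is therefore the same embedding followed by a genuinely more complete analysis: rather than counting degrees of freedom, you solve the equidistance conditions explicitly, obtain the center in closed form via the scalar $\kappa=(b-2)/a$, and carry the algebra through to the stated formula for $R^{2}$. This buys you the actual value of the radius (which the paper's own discussion does not derive) and makes the role of affine independence of the $\vec{x}_{p}$, and hence the uniqueness of the circumsphere within $\mathcal{H}$, explicit rather than heuristic.
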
 
\renewcommand{\thetheorem}{\arabic{section}.\arabic{theorem}}

Let us hereafter give a more intuitive and geometric
view of the communicability distance. For the purpose, we first prove
the following theorem.

\begin{theorem} Let $\vec{x}_{p}=e^{\Lambda/2}\vec{\phi}_{p}$, where
$\vec{\phi}_{p}=\begin{pmatrix}\psi_{1}(p) & \cdots & \psi_{\mu}(p) & \cdots & \psi_{n}(p)\end{pmatrix}^{T}$.
Then we have 
\begin{align}
G_{pq}=\vec{x}_{p}\cdot\vec{x}_{q}.\label{eq3.20}
\end{align}
\end{theorem}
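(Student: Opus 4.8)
The plan is to expand the right-hand side of~\eqref{eq3.20} directly in components and recognize the spectral form of the communicability function from~\eqref{eq2.20}. First I would write out the dot product coordinate-by-coordinate: since $\vec{x}_p = e^{\Lambda/2}\vec{\phi}_p$ and $\Lambda$ is diagonal with entries $\lambda_\mu$, the $\mu$th component of $\vec{x}_p$ is $e^{\lambda_\mu/2}\psi_\mu(p)$. Therefore $\vec{x}_p\cdot\vec{x}_q = \sum_{\mu=1}^{n} e^{\lambda_\mu/2}\psi_\mu(p)\, e^{\lambda_\mu/2}\psi_\mu(q) = \sum_{\mu=1}^{n} e^{\lambda_\mu}\psi_\mu(p)\psi_\mu(q)$, which is exactly the spectral decomposition of $G_{pq}$ given in the last equality of~\eqref{eq2.20}. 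That is the whole argument; it is essentially a one-line verification once the notation is unpacked.

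The one point that deserves a sentence of care is the passage from $e^{\Lambda/2}$ acting on the eigenvector-coordinate vector $\vec{\phi}_p$ to the factor $e^{\lambda_\mu/2}$ in each summand: this is just the statement that a function of a diagonal matrix acts entrywise, i.e. $(e^{\Lambda/2})_{\mu\nu} = e^{\lambda_\mu/2}\delta_{\mu\nu}$. Equivalently, one can phrase the computation at the matrix level: stacking the vectors $\vec{x}_p$ as columns (or rows) and using $A = U\Lambda U^T$ with $U = [\vec{\psi}_1,\ldots,\vec{\psi}_n]$, the Gram matrix of the $\vec{x}_p$ is $U e^{\Lambda/2} e^{\Lambda/2} U^T = U e^{\Lambda} U^T = e^{A}$, whose $(p,q)$ entry is $G_{pq}$ by~\eqref{eq2.20}. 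Either presentation is complete.

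There is no real obstacle here; the result is a reformulation rather than a theorem with hidden difficulty. If I wanted to flag anything, it is only that one should note the $\vec{x}_p$ live in $\mathbb{R}^n$ (an $n$-dimensional coordinate space), while the earlier embedding theorem speaks of an $(n-1)$-dimensional hypersphere — the reconciliation being that all the points $\vec{x}_p$ turn out to lie on a common hypersphere, hence in an affine subspace of dimension $n-1$, but establishing that is the content of the subsequent development and not needed for~\eqref{eq3.20} itself. I would keep the proof to the two or three lines of the spectral computation and not belabor anything further.
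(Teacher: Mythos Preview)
Your proof is correct and essentially the same as the paper's: the paper carries out exactly your matrix-level version, setting $X=e^{\Lambda/2}U^{T}$ and computing $X^{T}X=Ue^{\Lambda}U^{T}=e^{A}=G$. Your componentwise expansion is just the same identity read entry by entry, so there is no substantive difference in approach.
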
 \begin{proof} Let 
$X=\begin{pmatrix}\vec{x}_{1} & \cdots & \vec{x}_{p} & \cdots & \vec{x}_{n}\end{pmatrix}=e^{\Lambda/2}U^{T}$.
We therefore have 
\begin{align}
X^{T}X=Ue^{\Lambda}U^{T}=e^{A}=G,
\end{align}
which is immediately followed by Eq.~\eqref{eq3.20}. \end{proof}

This theorem transforms the communicability distance~\eqref{eq3.10}
into the form 
\begin{align}
{\xi_{pq}}^{2}=\vec{x}_{p}\cdot\vec{x}_{p}+\vec{x}_{q}\cdot\vec{x}_{q}-2\vec{x}_{p}\cdot\vec{x}_{q}=\left(\vec{x}_{p}-\vec{x}_{q}\right)^{2}.
\end{align}
In other words, the communicability distance is the Euclidean distance
in the space of $\{\vec{x}_{p}\}$. In order to visualize
this space,  let us go back to the interpretation
that $(e^{A})_{pq}$ is the transition rate of the random walkers
from the $p$th site to the $q$th site. An initial state $\vec{w}_{p}$
is a basis vector in the original vector space but its expression
is given by $\vec{\phi}_{p}$ above in the vector space with the eigenvectors
$\vec{\psi}_{\mu}$ as its basis vectors, namely the eigenspace; see
Fig.~\ref{fig1} for an example for the path graph $P_3$. 
\begin{figure}
\centering \includegraphics[width=0.4\textwidth]{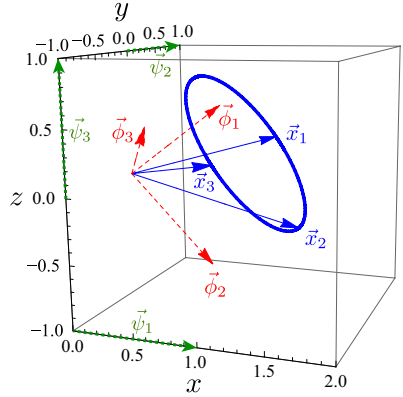} 
\caption{A demonstration plot of $\vec{\phi}_{p}$ (red dashed arrows) and
$\vec{x}_{p}$ (blue solid arrows) for the path graph $P_{3}$. The
eigenvectors $\vec{\psi}_{\mu}$ (green dotted arrows) define the
axes of this eigenspace. The communicability distance $\xi_{pq}$
is the chord distance on the (blue) circle that goes through the end
points of the vectors $\vec{x}_{p}$.}
\label{fig1} 
\end{figure}
In other words, the initial vector represents the state in which all
random walkers sit on the $p$th site, but it is denoted by the vector
$\vec{\phi}_{p}$ in the eigenspace. The vector $\vec{x}_{p}$ is
a vector in the eigenspace, representing a state in which random walkers
from the $p$th site move around for the time $1/2$. 

Theorem 3.1 dictates that the vectors $\{\vec{x}_{p}\}$ fall onto
the surface of a hypersphere in the space; see Fig.~\ref{fig2}(a)
for illustration in the case $n=3$. 
\begin{figure}
\centering %
\includegraphics[width=0.45\textwidth]{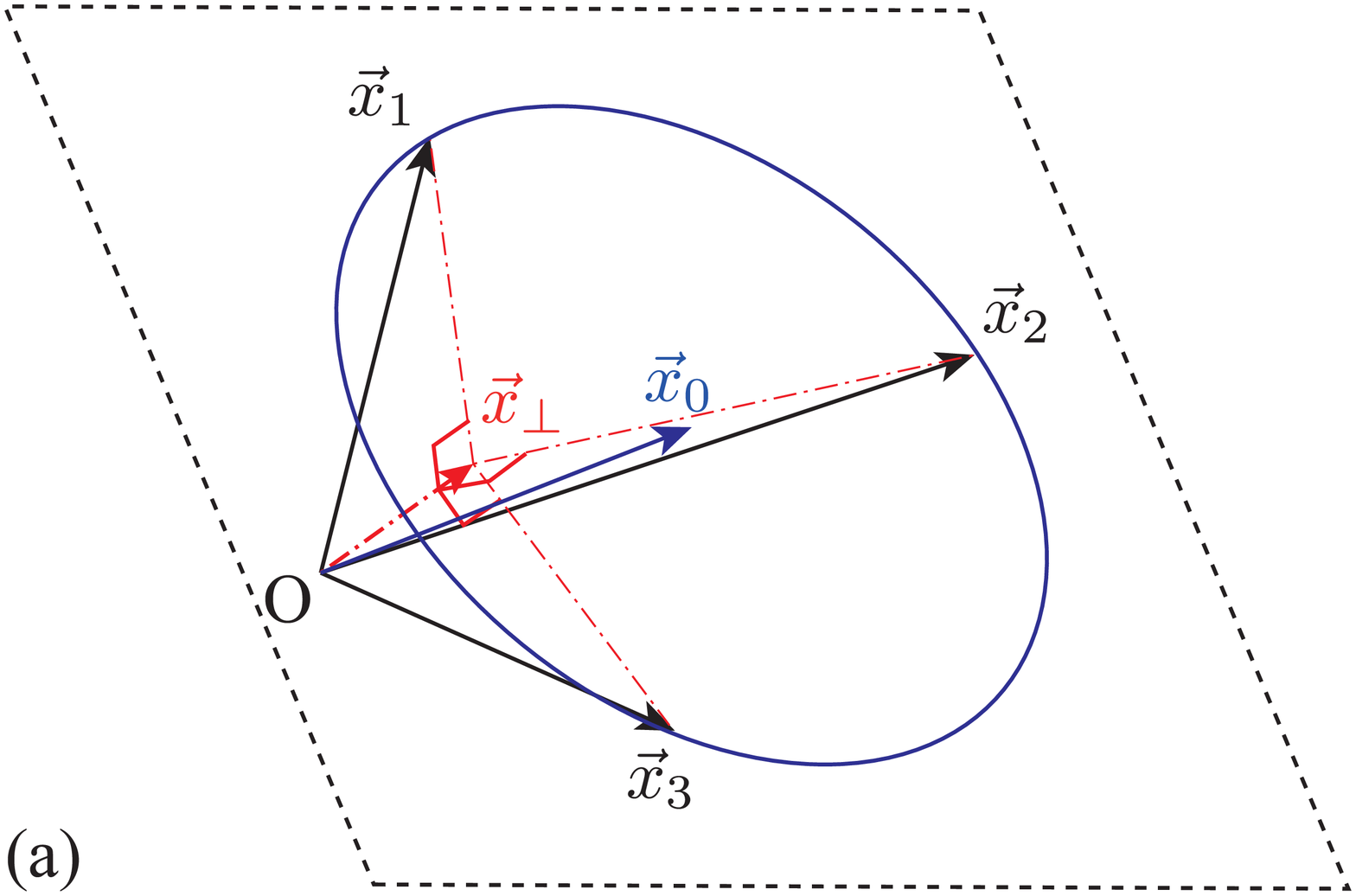}%
\hspace{0.1\textwidth}
\includegraphics[width=0.3\textwidth]{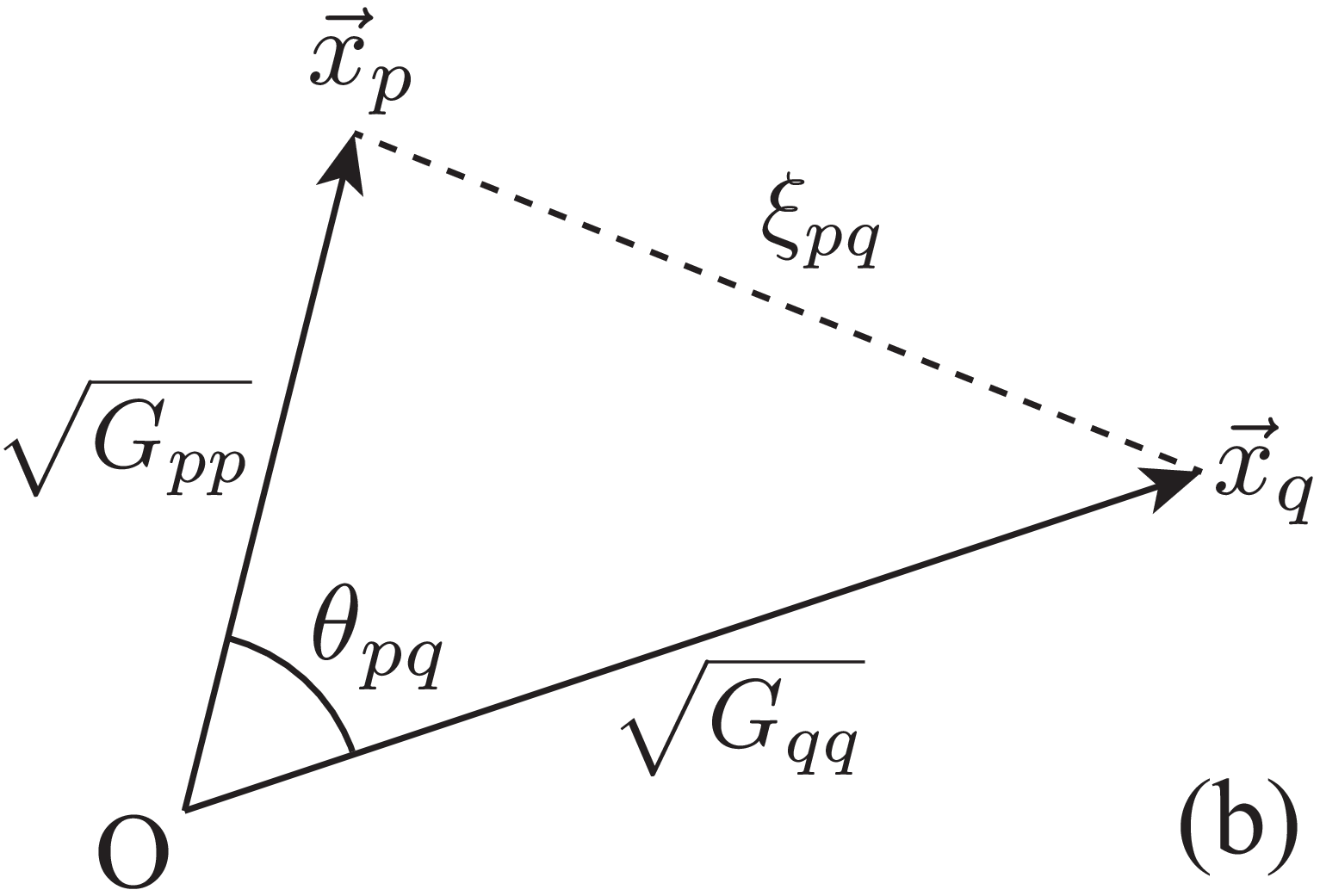} %
\caption{(a) Three vectors $\vec{x}_{1}$, $\vec{x}_{2}$ and $\vec{x}_{3}$
(solid black arrows) in a three-dimensional space spanned by the three
eigenvectors of a $3\times3$ adjacency matrix $A$. The vectors fall
on a two-dimensional flat surface (broken black lines) to which the
vector $\vec{x}_{\perp}$ (red dot-dashed arrow) is normal. We can
draw a circle (solid blue curve) on the two-dimensional surface around
a point $\vec{x}_{0}$ (solid blue arrow) to contain all three points.
(b) The triangle spanned by the vectors $\vec{x}_{p}$ and $\vec{x}_{q}$.}
\label{fig2} 
\end{figure}
We can understand this in the following way. We first fix the $n$-dimensional
normal vector $\vec{x}_{\perp}$ from $n$ pieces of conditions $(\vec{x}_{p}-\vec{x}_{\perp})\cdot\vec{x}_{\perp}=0$
for $1\leq p\leq n$. It specifies the $(n-1)$-dimensional flat surface
on which all vectors fall as $(\vec{x}-\vec{x}_{\perp})\cdot\vec{x}_{\perp}=0$.
We next fix the $n$-dimensional vector $\vec{x}_{0}$ that specifies
the center of the hypersphere as well as the radius $R$ from $n+1$
pieces of conditions $(\vec{x}_{0}-\vec{x}_{\perp})\cdot\vec{x}_{\perp}=0$
and $|\vec{x}_{p}-\vec{x}_{0}|=R$ for $1\leq p\leq n$.

We can therefore regard ${\xi_{pq}}$ as the chord
distance between the two points on the hypersurface. Figure~\ref{fig2}(b)
picks out the triangle spanned by the vectors $\vec{x}_{p}$ and $\vec{x}_{q}$.
This leads to the definition in the next section of the angle between
the two vectors.

\section{Communicability angle}

Let $p$ and $q$ be nodes of a connected simple network and let us
define the following quantity: 
\begin{align}
\gamma_{pq} & \coloneqq\frac{G_{pq}}{\sqrt{G_{pp}G_{qq}}}.
\end{align}
We then prove the following result. \begin{theorem} The index $\gamma_{pq}$
is the cosine of the Euclidean angle spanned by the position vectors
of $p$ and $q$. \end{theorem}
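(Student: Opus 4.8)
The plan is to read the statement off the geometric representation already established in Section~3. By Theorem~3.2 (equivalently Eq.~\eqref{eq3.20}), each node $p$ is assigned a \emph{position vector} $\vec{x}_p = e^{\Lambda/2}\vec{\phi}_p$ in the $n$-dimensional eigenspace with the property $G_{pq} = \vec{x}_p\cdot\vec{x}_q$; by Theorem~3.1 all of these vectors lie on a common hypersphere, which is why we may speak of their tips as the ``positions'' of the nodes. In particular $G_{pp} = \vec{x}_p\cdot\vec{x}_p = |\vec{x}_p|^2$ and likewise for $q$. I would therefore substitute these three identities into the definition of $\gamma_{pq}$:
\begin{align*}
\gamma_{pq} &= \frac{G_{pq}}{\sqrt{G_{pp}G_{qq}}} = \frac{\vec{x}_p\cdot\vec{x}_q}{\sqrt{(\vec{x}_p\cdot\vec{x}_p)(\vec{x}_q\cdot\vec{x}_q)}} \\
&= \frac{\vec{x}_p\cdot\vec{x}_q}{|\vec{x}_p|\,|\vec{x}_q|},
\end{align*}
and then invoke the defining relation of the Euclidean inner product, $\vec{x}_p\cdot\vec{x}_q = |\vec{x}_p|\,|\vec{x}_q|\cos\theta_{pq}$, to conclude that $\gamma_{pq} = \cos\theta_{pq}$, where $\theta_{pq}\in[0,\pi]$ is the angle subtended at the origin of the eigenspace by $\vec{x}_p$ and $\vec{x}_q$ --- the apex angle of the triangle drawn in Fig.~\ref{fig2}(b). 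This angle is the communicability angle.

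Two routine points make the argument airtight. First, the denominator never vanishes: from the spectral form $G_{pp} = \sum_{\mu} e^{\lambda_\mu}\psi_\mu(p)^2 \ge e^{\lambda_1}\psi_1(p)^2 > 0$, where the strict inequality uses the Perron--Frobenius statement recalled in Section~2 that $\psi_1(p) > 0$ for every node $p$ of a connected graph. Hence every position vector is nonzero and $\theta_{pq}$ is defined for every pair. Second, $|\gamma_{pq}| \le 1$ by the Cauchy--Schwarz inequality $|\vec{x}_p\cdot\vec{x}_q| \le |\vec{x}_p|\,|\vec{x}_q|$ applied to Eq.~\eqref{eq3.20}, so $\gamma_{pq}$ really does lie in the range of the cosine and $\theta_{pq} = \arccos\gamma_{pq}$ is well posed.

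I do not anticipate a genuine obstacle: all of the work was already done in Theorems~3.1 and~3.2, and what remains is a one-line substitution followed by the standard interpretation of the dot product. The only thing requiring care is bookkeeping --- the angle is measured at the origin between the \emph{raw} vectors $\vec{x}_p$ and $\vec{x}_q$, not at the hypersphere's center $\vec{x}_0$. As a consistency check I would combine Eq.~\eqref{eq3.10} with Eq.~\eqref{eq3.20} and the identity just proved to obtain
\begin{align*}
\xi_{pq}^2 = G_{pp} + G_{qq} - 2\sqrt{G_{pp}G_{qq}}\,\cos\theta_{pq},
\end{align*}
which is precisely the law of cosines for the triangle of Fig.~\ref{fig2}(b) with side lengths $|\vec{x}_p| = \sqrt{G_{pp}}$, $|\vec{x}_q| = \sqrt{G_{qq}}$ and included angle $\theta_{pq}$, the side opposite being the chord $\xi_{pq}$. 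This both recovers the chord-distance picture of Section~3 and confirms the geometric meaning of $\gamma_{pq}$.
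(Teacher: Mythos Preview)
Your proposal is correct and follows essentially the same route as the paper's proof, which simply writes $\cos\theta_{pq}=\dfrac{\vec{x}_p\cdot\vec{x}_q}{|\vec{x}_p||\vec{x}_q|}$ from Fig.~\ref{fig2}(b) and then invokes Eq.~\eqref{eq3.20}. Your version is more thorough---the nonvanishing of the denominator via Perron--Frobenius, the Cauchy--Schwarz bound, and the law-of-cosines consistency check are all welcome additions that the paper omits---but the underlying argument is identical.
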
 \begin{proof} The view shown in
Fig.~\ref{fig2}(b) obviously gives 
\begin{align}
\cos\theta_{pq} & =\frac{\vec{x}_{p}\cdot\vec{x}_{q}}{\left|\vec{x}_{p}\right|\left|\vec{x}_{q}\right|}.
\end{align}
The use of Eq.~\eqref{eq3.20} then proves the result. \end{proof}

We then call $\theta_{pq}$ the communicability angle between the
corresponding nodes of the graph. Details on how to compute the
communicability angle for networks are given in the Supplementary Information
accompanying the present paper.
For each pair of nodes in the graph,
the communicability distance and angle are related mathematically
by the following expression: 
\begin{align}\label{eq4.3}
{\xi_{pq}}^{2}=G_{pp}+G_{qq}-2\sqrt{G_{pp}G_{qq}}\cos\theta_{pq}.
\end{align}
Because $G_{pq}\geq0$ for any pair of nodes in $\Gamma$, the communicability
angle is bounded by $0\leq\cos\theta_{pq}\leq1$. That is, the communicability
angle of simple graphs can take values only in the range $(0^{\circ},90^{\circ})$.
We will now give classes of graphs that show how we attain the extremal
values.

\begin{proposition} \label{prop4.1} Let $P_{n}$ be the path graph
with $n$ nodes labeled by $1,2,\cdots,n$ sequentially. The communicability
angle between any pair of nodes in $P_{n}$ is given by 
\begin{align}
\cos\theta_{pq}\left(P_{n}\right)=\frac{I_{p-q}(2)-I_{p+q}(2)}{\sqrt{\left[I_{0}(2)-I_{2r(p)}(2)\right]\left[I_{0}(2)-I_{2r(q)}(2)\right]}}\label{eq4.40}
\end{align}
in the limit $n\to\infty$, where $I_{\gamma}\left(z\right)$ is the
Bessel function of the first kind and 
\begin{align}
r(p)=\begin{cases}
p & \mbox{for \ensuremath{p\leq n/2} with even \ensuremath{n} or \ensuremath{p\leq\left(n+1\right)/2} with odd \ensuremath{n},}\\
n-p+1 & \mbox{for \ensuremath{p>n/2} with even \ensuremath{n} or \ensuremath{p>\left(n+1\right)/2} with odd \ensuremath{n}.}
\end{cases}
\end{align}
\end{proposition}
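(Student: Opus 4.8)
The plan is to evaluate $G_{pq}=(e^{A})_{pq}$ in closed form from the spectral sum~\eqref{eq2.20} using the classical spectrum of the path graph, to recognize the resulting trigonometric sum as a Riemann sum whose $n\to\infty$ limit is an integral representation of a Bessel function, and finally to feed $G_{pq}$ and $G_{pp}$ into the identity $\cos\theta_{pq}=G_{pq}/\sqrt{G_{pp}G_{qq}}$ supplied by Theorem~4.1.

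First I would recall the eigenpairs of the adjacency matrix of $P_{n}$, namely $\lambda_{\mu}=2\cos\!\big(\tfrac{\mu\pi}{n+1}\big)$ with components $\psi_{\mu}(p)=\sqrt{\tfrac{2}{n+1}}\,\sin\!\big(\tfrac{\mu p\pi}{n+1}\big)$ for $1\le\mu,p\le n$, substitute them into~\eqref{eq2.20}, and apply $2\sin a\sin b=\cos(a-b)-\cos(a+b)$ to get
\[
G_{pq}=\frac{1}{n+1}\sum_{\mu=1}^{n}e^{2\cos(\mu\pi/(n+1))}\Big[\cos\!\big(\tfrac{(p-q)\mu\pi}{n+1}\big)-\cos\!\big(\tfrac{(p+q)\mu\pi}{n+1}\big)\Big].
\]
For fixed $p,q$ the right-hand side is a Riemann sum with mesh $\pi/(n+1)$ of a smooth bounded integrand on $[0,\pi]$, so it converges to
\[
\lim_{n\to\infty}G_{pq}=\frac{1}{\pi}\int_{0}^{\pi}e^{2\cos\theta}\big[\cos((p-q)\theta)-\cos((p+q)\theta)\big]\,d\theta .
\]
Using the integral representation of the modified Bessel function of the first kind, $I_{\nu}(z)=\tfrac{1}{\pi}\int_{0}^{\pi}e^{z\cos\theta}\cos(\nu\theta)\,d\theta$ for integer $\nu$, this limit equals $I_{p-q}(2)-I_{p+q}(2)$; specializing $q=p$ gives $\lim_{n\to\infty}G_{pp}=I_{0}(2)-I_{2p}(2)$.

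Next I would explain the function $r(p)$, which comes from the two endpoints of the path. Since $\sin\!\big(\tfrac{\mu(n+1-p)\pi}{n+1}\big)=(-1)^{\mu+1}\sin\!\big(\tfrac{\mu p\pi}{n+1}\big)$, the reflection $p\mapsto n+1-p$ is an automorphism of $P_{n}$ under which every eigenvector is sent to $\pm$ itself, so $G_{n+1-p,\,n+1-q}=G_{p,q}$ exactly. Hence for a pair of nodes lying in the far half of $P_{n}$ the same computation applies after the reflection $p\mapsto n+1-p$, $q\mapsto n+1-q$; the relevant labels become the distances $r(p),r(q)$ to the nearer endpoint, and in every case the diagonal term is governed by $r(p)$, namely $\lim_{n\to\infty}G_{pp}=I_{0}(2)-I_{2r(p)}(2)$, which is what sits under the square root. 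Dividing the limiting values of $G_{pq}$ and $\sqrt{G_{pp}G_{qq}}$ and invoking Theorem~4.1 then yields~\eqref{eq4.40}.

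The spectral substitution, the product-to-sum step, and the Riemann-sum passage are all routine: the integrand is continuous on a compact interval and the mesh is uniform, so no analytic subtlety arises, and the Bessel identification is just recalling a standard integral formula. The one place that needs care is the bookkeeping around $r(p)$: one has to keep track of how the fixed labels $p,q$ sit relative to the growing $n$, because it is the image term $\cos((p+q)\theta)$ --- morally, the boundary of the path --- that distinguishes the asymptotics near the two ends and forces the piecewise definition of $r$. Everything else is a direct computation.
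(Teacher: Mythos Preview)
Your proposal is correct and follows essentially the same route as the paper: spectral decomposition of $P_n$, product-to-sum on the eigenvector entries, passage to an integral in the $n\to\infty$ limit, and identification with (modified) Bessel functions. You actually supply more detail than the paper does---the explicit Riemann-sum justification, the integral representation $I_{\nu}(z)=\tfrac{1}{\pi}\int_{0}^{\pi}e^{z\cos\theta}\cos(\nu\theta)\,d\theta$, and the reflection-symmetry argument explaining $r(p)$---whereas the paper simply states the eigenpairs, writes down the finite sums for $G_{pq}$ and $G_{pp}$, and asserts that ``in the limit $n\to\infty$, we can write them in integral forms, which eventually reduce to'' the Bessel expressions.
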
 \begin{proof} The eigenvalues and eigenvectors
of the adjacency matrix of $P_{n}$ are 
\begin{align}
\lambda_{j}\left(P_{n}\right)=2\cos\frac{j\pi}{n+1},\qquad\psi_{j}\left(p\right)=\sqrt{\frac{2}{n+1}}\sin\frac{jp\pi}{n+1}
\end{align}
for $1\leq j\leq n$. Thus 
\begin{align}
G_{pq}\left(P_{n}\right) & =\frac{1}{n+1}\sum_{j=1}^{n}\left[\cos\frac{j\pi(p-q)}{n+1}-\cos\frac{j\pi(p+q)}{n+1}\right]e^{2\cos\left(j\pi/(n+1)\right]},\\
G_{pp}\left(P_{n}\right) & =\frac{1}{n+1}\sum_{j=1}^{n}\left[1-\cos\frac{2j\pi p}{n+1}\right]e^{2\cos\left(j\pi/(n+1)\right]}.
\end{align}
In the limit $n\rightarrow\infty$, we can write them in integral
forms, which eventually reduce to 
$G_{pq}\left(P_{n}\right)=I_{p-q}(2)-I_{p+q}(2)$,
and 
$G_{pp}\left(P_{n}\right)=I_{0}(2)-I_{2r(p)}(2)$; 
this proves Eq.~\eqref{eq4.40}. \end{proof}

Notice that for the pair of nodes at the ends of the path we have
\begin{align}
\lim_{n\to\infty}\cos\theta_{n1}\left(P_{n}\right)=\lim_{n\to\infty}\frac{I_{n-1}(2)-I_{n+1}(2)}{I_{0}(2)-I_{2}(2)}=0,
\end{align}
which attains the lower bound of the communicability angle.

\begin{proposition} \label{prop4.2} Let $K_{1,n-1}$ be the star
graph with $n$ nodes. Let the node with degree $n-1$ labelled as
1. The communicability angle between any pair of nodes in $K_{1,n-1}$
is given by 
\begin{align}
\cos\theta_{1q}\left(K_{1,n-1}\right) & =\frac{\tanh^{2}\left(\sqrt{n-1}\right)}{\left(n-2\right)\textnormal{sech\ensuremath{\left(\sqrt{n-1}\right)}}+1}\quad\mbox{for \ensuremath{q\neq1}},\label{eq4.120}\\
\cos\theta_{pq}\left(K_{1,n-1}\right) & =\frac{\cosh\left(\sqrt{n-1}\right)-1}{\left(n-2\right)\cosh\left(\sqrt{n-1}\right)+n-2}\quad\mbox{for \ensuremath{p\neq1} and \ensuremath{q\neq1}}.\label{eq4.130}
\end{align}
\end{proposition}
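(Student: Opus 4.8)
The plan is to compute the entries $G_{pp}$, $G_{qq}$, and $G_{pq}$ of $e^{A}$ explicitly for the star graph and then substitute into $\gamma_{pq}=G_{pq}/\sqrt{G_{pp}G_{qq}}$. First I would recall the spectrum of $K_{1,n-1}$: the adjacency matrix has eigenvalues $\pm\sqrt{n-1}$ (each simple) and $0$ with multiplicity $n-2$. The Perron eigenvector is proportional to $(\sqrt{n-1},1,1,\dots,1)^{T}$ and the eigenvector for $-\sqrt{n-1}$ is proportional to $(-\sqrt{n-1},1,1,\dots,1)^{T}$ (after normalization, dividing by $\sqrt{2(n-1)}$). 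Rather than diagonalizing, though, the cleanest route is to split $A^{k}$ by parity: since the star is bipartite, walks from the center $1$ back to $1$ have even length and $(A^{2})_{11}=n-1$, which gives $(A^{2k})_{11}=(n-1)^{k}$; walks from a leaf $p$ back to itself also have even length with $(A^{2})_{pp}=1$, but longer even walks can pass through other leaves, so I would instead use the spectral sum directly. With $a\coloneqq\sqrt{n-1}$, the spectral decomposition yields
\begin{align}
G_{11} &= \tfrac{n-1}{2a^{2}}\left(e^{a}+e^{-a}\right)=\cosh a,\\
G_{1q} &= \tfrac{\sqrt{n-1}}{2(n-1)}\left(e^{a}-e^{-a}\right)=\tfrac{1}{a}\sinh a\quad(q\neq 1),\\
G_{pq} &= \tfrac{1}{2(n-1)}\left(e^{a}+e^{-a}\right)+\left(1-\tfrac{1}{n-1}\right)\cdot 1=\tfrac{1}{a^{2}}\cosh a + \tfrac{n-2}{n-1}\quad(p\neq1,\,q\neq1),
\end{align}
where in the last line the term from the zero eigenvalue contributes $\sum_{\mu:\lambda_\mu=0}\psi_\mu(p)\psi_\mu(q)$, which equals $\delta_{pq}-\psi_1(p)\psi_1(q)-\psi_n(p)\psi_n(q)=1-\tfrac{2}{2(n-1)}$ for $p=q$ and more generally can be read off from $(e^{A})_{pq}-\text{(contributions of }\pm a)$; for two distinct leaves $p\neq q$ the Kronecker delta is absent but the structure is the same up to that term. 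I would be careful here to treat the diagonal leaf entry $G_{pp}$ and the off-diagonal leaf entry $G_{pq}$ separately, since they differ precisely by the $\delta_{pq}$ coming from orthonormality of the null-space eigenvectors.

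Next I would assemble the two formulas. For the center-leaf angle, substituting into $\gamma_{1q}=G_{1q}/\sqrt{G_{11}G_{qq}}$ gives $\gamma_{1q}=\bigl(\tfrac{1}{a}\sinh a\bigr)/\sqrt{\cosh a\cdot G_{qq}}$, and with $G_{qq}=\tfrac{1}{a^{2}}\cosh a+\tfrac{n-2}{n-1}=\tfrac{1}{n-1}\cosh a+\tfrac{n-2}{n-1}=\tfrac{1}{n-1}(\cosh a+n-2)$ one gets, after squaring and simplifying using $\cosh^{2}a-1=\sinh^{2}a$ and dividing numerator and denominator by $\cosh^{2}a$,
\begin{align}
\cos\theta_{1q}=\frac{\tfrac{1}{a^{2}}\sinh^{2}a}{\cosh a\cdot \tfrac{1}{n-1}(\cosh a+n-2)}=\frac{\sinh^{2}a}{\cosh a(\cosh a+n-2)}=\frac{\tanh^{2}a}{1+(n-2)\operatorname{sech}a},
\end{align}
which is exactly \eqref{eq4.120} with $a=\sqrt{n-1}$; the last step divides top and bottom by $\cosh^{2}a$. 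For the leaf-leaf angle, $\gamma_{pq}=G_{pq}/G_{pp}$ (since $G_{pp}=G_{qq}$ by symmetry) with $G_{pq}=\tfrac{1}{n-1}\cosh a+\tfrac{n-2}{n-1}$ being the off-diagonal leaf entry and $G_{pp}=\tfrac{1}{n-1}\cosh a+\tfrac{n-2}{n-1}+\text{(the }\delta\text{ correction)}$; I would need to pin down the exact off-diagonal leaf value, which from the null space is $-\tfrac{1}{n-1}$ plus the $\pm a$ contributions $\tfrac{1}{2(n-1)}(e^{a}+e^{-a})=\tfrac{1}{n-1}\cosh a$, giving $G_{pq}=\tfrac{1}{n-1}(\cosh a-1)$, while $G_{pp}=\tfrac{1}{n-1}(\cosh a - 1)+1=\tfrac{1}{n-1}(\cosh a-1+n-1)=\tfrac{1}{n-1}(\cosh a+n-2)$. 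Hence
\begin{align}
\cos\theta_{pq}=\frac{\cosh a-1}{\cosh a+n-2},
\end{align}
and to match the stated form \eqref{eq4.130} with denominator $(n-2)\cosh a+n-2=(n-2)(\cosh a+1)$ I would reconcile the normalization convention; most likely the paper's $G$ here carries a different scaling of the leaf eigenvectors, or \eqref{eq4.130} contains the factor $(n-2)$ because of how the leaf subgraph centralities are grouped — I would recheck the eigenvector normalization $\tfrac{2}{2(n-1)}$ versus an alternative.

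The main obstacle I anticipate is precisely this bookkeeping of the degenerate zero eigenvalue: the contribution $\sum_{\lambda_\mu=0}\psi_\mu(p)\psi_\mu(q)$ must be evaluated as a \emph{completeness relation} $\delta_{pq}-\psi_{+}(p)\psi_{+}(q)-\psi_{-}(p)\psi_{-}(q)$ rather than by writing down an explicit basis for the $(n-2)$-dimensional null space, and one must keep straight whether the pair of nodes in question is center-leaf or leaf-leaf, and in the latter case whether the nodes coincide. Once the three matrix entries $G_{11}$, $G_{1q}$, $G_{pp}$, $G_{pq}$ are correctly in hand, the rest is elementary algebra with hyperbolic identities ($\cosh^2 - \sinh^2 = 1$, $\tanh=\sinh/\cosh$, $\operatorname{sech}=1/\cosh$) to cast the ratios into the stated closed forms, and I would double-check \eqref{eq4.130} against the $n=3$ case (the path $P_3$, where the two leaves should give a known angle) as a sanity test.
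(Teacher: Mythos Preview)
Your approach is exactly the paper's: write down the four entries $G_{11}$, $G_{1q}$, $G_{pp}$, $G_{pq}$ of $e^{A}$ for the star and substitute into $\cos\theta=G_{pq}/\sqrt{G_{pp}G_{qq}}$. Your values
\[
G_{11}=\cosh a,\qquad G_{1q}=\tfrac{1}{a}\sinh a,\qquad G_{pp}=\tfrac{1}{n-1}(\cosh a+n-2),\qquad G_{pq}=\tfrac{1}{n-1}(\cosh a-1)
\]
(with $a=\sqrt{n-1}$) are precisely those the paper quotes, and your use of the completeness relation $\sum_{\lambda_\mu=0}\psi_\mu(p)\psi_\mu(q)=\delta_{pq}-\psi_+(p)\psi_+(q)-\psi_-(p)\psi_-(q)$ to handle the degenerate null space is the clean way to get them.

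Two remarks so you do not waste effort chasing phantom normalization errors. First, your leaf--leaf result $\cos\theta_{pq}=(\cosh a-1)/(\cosh a+n-2)$ is \emph{correct}; the printed denominator $(n-2)\cosh a+(n-2)$ in \eqref{eq4.130} is a misprint (check $n=4$ directly from $A^{2k}$ to see your version is the right one). Second, in your derivation of \eqref{eq4.120} you explicitly square $\gamma_{1q}$ before simplifying, so what you obtain and label $\cos\theta_{1q}$ is in fact $\cos^{2}\theta_{1q}$; that this matches the stated right-hand side means the paper's \eqref{eq4.120} is likewise the square of the cosine rather than the cosine itself. None of this affects the method or the subsequent limit $\cos\theta\to 1$ as $n\to\infty$, which holds either way.
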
 \begin{proof} The communicability between the
different pairs of nodes in $K_{1,n-1}$ are 
\begin{align}
G_{1q}\left(K_{1,n-1}\right) & =\frac{1}{\sqrt{n-1}}\sinh\left(\sqrt{n-1}\right)\quad\mbox{for \ensuremath{q\neq1}},\\
G_{pq}\left(K_{1,n-1}\right) & =\frac{1}{n-1}\left[\cosh\left(\sqrt{n-1}\right)-1\right]\quad\mbox{for \ensuremath{p\neq1} and \ensuremath{q\neq1}}.
\end{align}
The subgraph centrality of the two distinct nodes in the star graph
are 
\begin{align}
G_{11}\left(K_{1,n-1}\right) & =\cosh\left(\sqrt{n-1}\right),\\
G_{pp}\left(K_{1,n-1}\right) & =\frac{1}{n-1}\left[\cosh\left(\sqrt{n-1}\right)+n-2\right]\quad\mbox{for \ensuremath{p\neq1}}.
\end{align}
Algebra with trigonometric identities gives Eqs.~\eqref{eq4.120}
and~\eqref{eq4.130}. \end{proof}

It is important to notice that 
\begin{align}
\lim_{n\to\infty}\cos\theta_{1q}\left(K_{1,n-1}\right) & =1\quad\mbox{for \ensuremath{q\neq1}},\\
\lim_{n\to\infty}\cos\theta_{pq}\left(K_{1,n-1}\right) & =1\quad\mbox{for \ensuremath{p\neq1} and \ensuremath{q\neq1}},
\end{align}
which attain the upper bound of the communicability angle.

\begin{proposition} \label{prop4.3} Let $K_{n}$ be the complete
graph with $n$ nodes. The communicability angle between any pair
of nodes in $K_{n}$ is given by 
\begin{align}
\cos\theta_{pq}=\frac{e^{n}-1}{e^{n}+n-1}.\label{eq:Kn}
\end{align}
\end{proposition}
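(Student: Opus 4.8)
The plan is to compute $G = e^{A}$ explicitly for $K_{n}$ and then read off the ratio defining $\cos\theta_{pq}$. The starting point is the observation that the adjacency matrix of $K_{n}$ is $A = J - I$, where $J$ is the $n\times n$ all-ones matrix. Since $I$ commutes with everything, $e^{A} = e^{-1}e^{J}$, so it suffices to compute $e^{J}$.

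To compute $e^{J}$ I would use the fact that $J^{2} = nJ$, hence $J^{k} = n^{k-1}J$ for every $k\geq 1$. Summing the exponential series term by term gives
\begin{align}
e^{J} = I + \left(\sum_{k\geq 1}\frac{n^{k-1}}{k!}\right)J = I + \frac{e^{n}-1}{n}\,J .
\end{align}
(Equivalently, one can argue spectrally: $K_{n}$ has eigenvalue $n-1$ with the Perron eigenvector $\vec 1/\sqrt n$ and eigenvalue $-1$ on the orthogonal complement of $\vec 1$, whose projector is $I - J/n$; feeding this into the spectral form of Eq.~\eqref{eq2.20} yields the same expression for $G$.) Consequently
\begin{align}
G = e^{A} = e^{-1}\left(I + \frac{e^{n}-1}{n}\,J\right),
\end{align}
so that the diagonal entries are all equal, $G_{pp} = e^{-1}\bigl(1 + (e^{n}-1)/n\bigr) = e^{-1}(e^{n}+n-1)/n$, and the off-diagonal entries are $G_{pq} = e^{-1}(e^{n}-1)/n$ for $p\neq q$.

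The final step is immediate: because every $G_{pp}$ takes the same value, $\sqrt{G_{pp}G_{qq}} = G_{pp}$, and therefore
\begin{align}
\cos\theta_{pq} = \frac{G_{pq}}{\sqrt{G_{pp}G_{qq}}} = \frac{G_{pq}}{G_{pp}} = \frac{(e^{n}-1)/n}{(e^{n}+n-1)/n} = \frac{e^{n}-1}{e^{n}+n-1},
\end{align}
which is Eq.~\eqref{eq:Kn}. There is essentially no hard obstacle here; the only point requiring the slightest care is the closed-form evaluation of $e^{J}$ from $J^{2}=nJ$, and the bookkeeping that the factor $e^{-1}$ and the $1/n$ cancel out of the ratio. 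One could also remark, although the statement does not ask for it, that the limit $n\to\infty$ gives $\cos\theta_{pq}\to 1$, consistent with the star-graph behavior in Proposition~\ref{prop4.2}.
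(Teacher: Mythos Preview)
Your proof is correct and essentially matches the paper's: both compute $G_{pp}=\dfrac{e^{n}+n-1}{ne}$ and $G_{pq}=\dfrac{e^{n}-1}{ne}$ explicitly and take the ratio. The only cosmetic difference is that the paper quotes the eigenvalues $n-1$ (simple) and $-1$ (multiplicity $n-1$) and invokes the spectral form of $e^{A}$, whereas you reach the same numbers via $A=J-I$ and $J^{2}=nJ$; you yourself note these are equivalent.
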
 \begin{proof} The eigenvalues of the adjacency
matrix of $K_{n}$ are $n-1$ with multiplicity $1$ and $-1$ with
multiplicity $n-1$. We thereby have 
\begin{align}
G_{pp}=\frac{1}{ne}\left(e^{n}+n-1\right),\qquad G_{pq}=\frac{1}{ne}\left(e^{n}-1\right)
\end{align}
which proves Eq.~\eqref{eq:Kn}. \end{proof}

Notice that $\cos\theta_{pq}\rightarrow1$ as $n\rightarrow\infty$
in $K_{n}$.

\section{Communicability distance and communicability angle}

An interesting difference between the communicability distance $\xi_{pq}$
and the communicability angle $\theta_{pq}$ arises from their analysis
in a path $P_{n}$. First, we prove the following result for the communicability
distance.

\begin{proposition} \label{prop4.4} Let $P_{n}$ be a path graph
of $n$ nodes labeled consecutively from one end point to the other
as $1,2,\cdots,n$. Let $S=\left\{ {\xi_{12}}^{2},{\xi_{13}}^{2},\cdots,{\xi_{1n}}^{2}\right\} $
be the ordered sequence of communicability distances between the first
node and any other nodes $q$ in the path. Then, $S$ is nonmonotonic.
\end{proposition}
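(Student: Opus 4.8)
The plan is to get a closed-form expression for $\xi_{1q}^{2}$ in the $n\to\infty$ limit using Proposition~\ref{prop4.1}, and then exhibit two values of $q$ that violate monotonicity. From the formulas in the proof of Proposition~\ref{prop4.1} we have, in the limit $n\to\infty$, $G_{11}(P_\infty)=I_0(2)-I_2(2)$ (since $r(1)=1$), $G_{qq}(P_\infty)=I_0(2)-I_{2r(q)}(2)$, and $G_{1q}(P_\infty)=I_{q-1}(2)-I_{q+1}(2)$. Plugging these into Eq.~\eqref{eq3.10} gives
\begin{align}
\xi_{1q}^{2}=\bigl[I_0(2)-I_2(2)\bigr]+\bigl[I_0(2)-I_{2r(q)}(2)\bigr]-2\bigl[I_{q-1}(2)-I_{q+1}(2)\bigr].
\end{align}
For $q$ in the first half of the path, $r(q)=q$, so as $q$ grows the term $I_{2q}(2)$ and the cross term $I_{q-1}(2)-I_{q+1}(2)$ both decay rapidly (modified Bessel functions $I_m(2)$ decrease super-exponentially in $m$), and $\xi_{1q}^{2}$ increases monotonically toward the plateau $2I_0(2)-I_2(2)$. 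The point is that near the far end of the path $r(q)=n-q+1$ becomes small again, so $I_{2r(q)}(2)$ is no longer negligible, and $\xi_{1q}^{2}$ must come back down.

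Concretely, I would compute $\xi_{1n}^{2}$: here $r(n)=1$, and $I_{n-1}(2)\to 0$, $I_{n+1}(2)\to 0$, so $\xi_{1n}^{2}\to \bigl[I_0(2)-I_2(2)\bigr]+\bigl[I_0(2)-I_2(2)\bigr]=2\bigl[I_0(2)-I_2(2)\bigr]$. Compare this with $\xi_{1q}^{2}$ for $q$ a large fixed value in the interior (so $r(q)=q$ is large, hence $I_{2r(q)}(2)\approx 0$, and the cross term $\approx 0$): there $\xi_{1q}^{2}\to 2I_0(2)-I_2(2)$. Since $2\bigl[I_0(2)-I_2(2)\bigr]=2I_0(2)-2I_2(2)<2I_0(2)-I_2(2)$ because $I_2(2)>0$, we get $\xi_{1n}^{2}<\xi_{1q}^{2}$ for interior $q$. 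On the other hand $\xi_{12}^{2}$ (with $r(2)=2$) is strictly smaller than both, and in fact the whole initial stretch of $S$ is increasing, so $S$ rises, reaches values close to $2I_0(2)-I_2(2)$, and then drops back to $2I_0(2)-2I_2(2)$ at the endpoint: the sequence is nonmonotonic. To make this fully rigorous I would simply exhibit indices $q_1<q_2<q_3=n$ with $\xi_{1q_1}^{2}<\xi_{1q_2}^{2}$ and $\xi_{1q_2}^{2}>\xi_{1q_3}^{2}$, e.g. $q_1=2$, $q_2=\lfloor n/2\rfloor$, $q_3=n$, using the bounds above together with the monotone decay $I_{m+1}(2)<I_m(2)$ and the elementary inequality $I_2(2)>0$.

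The main obstacle is not conceptual but in controlling the "integral forms, which eventually reduce to" step cited from Proposition~\ref{prop4.1}: one must be sure the limiting identities $G_{pq}(P_\infty)=I_{p-q}(2)-I_{p+q}(2)$ hold uniformly enough that the finite-$n$ sequence $S$ inherits the nonmonotonicity from the limit (or, alternatively, argue directly at finite but large $n$, where $\xi_{1n}^{2}$ is strictly below $\xi_{1,\lfloor n/2\rfloor}^{2}$ by an $O(1)$ gap that survives the $n\to\infty$ correction terms). Either route works: the $O(1)$ separation $I_2(2)\approx 0.689$ between the endpoint value and the interior plateau is far larger than the vanishing finite-size corrections, so the comparison is robust. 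I would close by remarking that this behavior contrasts sharply with the communicability \emph{angle}, whose corresponding sequence one can check is monotone, which is the "interesting difference" advertised at the start of the section.
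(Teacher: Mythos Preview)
Your proposal is correct and follows essentially the same route as the paper: derive $\xi_{1q}^{2}$ from the Bessel-function formulas of Proposition~\ref{prop4.1}, observe that the sequence climbs toward the plateau $2I_{0}(2)-I_{2}(2)$ for interior $q$, and then drops to $2I_{0}(2)-2I_{2}(2)$ at $q=n$ because $r(n)=1$, the gap being $I_{2}(2)>0$. The paper additionally records the numerical values ($\xi_{12}^{2}\approx 1.0637$, plateau $\approx 3.8702$, endpoint $\approx 3.1813$) but is actually less explicit than you are about the finite-$n$ versus $n\to\infty$ distinction.
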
 \begin{proof} Without any loss of generality we
will consider here even $n$ for simplicity. The communicability distance
in question is given by 
\begin{align}
{\xi_{1q}}^{2}=\begin{cases}
\left[2I_{0}(2)-I_{2}(2)\right]-\left[I_{2q}(2)+2I_{1-q}(2)-2I_{1+q}(2)\right] & \mbox{for \ensuremath{1<q\leq n/2}},\\
\left[2I_{0}(2)-I_{2}(2)\right]-\left[I_{2(n-q+1)}(2)+2I_{1-q}(2)-2I_{1+q}(2)\right] & \mbox{for \ensuremath{q>n/2}},
\end{cases}
\end{align}
where $r(p)$ and $I_{\gamma}(z)$ are as before. First, we have 
\begin{align}
{\xi_{12}}^{2}\simeq1.0637
\end{align}
in the limit $n\to\infty$. Next, let $\chi(q)=I_{2q}(2)+2I_{1-q}(2)-2I_{1+q}(2)$.
It is easy to check that $\chi(q)>\chi(q+1)$, so that ${\xi_{1q}}^{2}$
increases as $q\to n/2$. For nodes relatively close to the center
of the path, we have 
\begin{align}
\lim_{q\to n/2}{\xi_{1q}}^{2}=2I_{0}(2)-I_{2}(2)\approx3.8702,
\end{align}
but as $q$ approaches the other end of the path, we have 
\begin{align}
\lim_{q\to n}{\xi_{1q}}^{2}=2I_{0}(2)-2I_{2}(2)\approx3.1813.
\end{align}
This means that the communicability distances increases from $\xi_{12}$
up to the maximum $\xi_{1q}\approx3.8702$ and then decreases to $\xi_{1n}\approx3.1813$,
which proves the result. \end{proof}

We now prove that the monotonicity holds for the communicability angle.

\begin{proposition} \label{prop4.5} Let $P_{n}$ be a path graph
of $n$ nodes labeled consecutively from one end point to the other
as $1,2,\cdots,n$. Let $C=\left\{ \theta_{12},\theta_{13},\cdots,\theta_{1n}\right\} $
be the ordered sequence of communicability angles between the first
node and any other nodes $q$ in the path. Then, $C$ is monotonic.
\end{proposition}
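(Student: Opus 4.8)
The plan is to use the explicit limiting formula for $\cos\theta_{1q}(P_n)$ from Proposition~\ref{prop4.1}, namely
\begin{align*}
\cos\theta_{1q}(P_n)=\frac{I_{q-1}(2)-I_{q+1}(2)}{\sqrt{[I_0(2)-I_2(2)]\,[I_0(2)-I_{2r(q)}(2)]}}
\end{align*}
in the limit $n\to\infty$, and to show that the numerator $N(q):=I_{q-1}(2)-I_{q+1}(2)$ and the $q$-dependent denominator factor $D(q):=I_0(2)-I_{2r(q)}(2)$ behave in such a way that the quotient is monotone in $q$. Since $r(q)$ is increasing for $q\le n/2$ and then folds back, and since the Bessel functions $I_\nu(2)$ are positive and strictly decreasing in the order $\nu\ge 0$, the factor $I_0(2)-I_{2r(q)}(2)$ is increasing in $r(q)$; the key input is therefore the behaviour of $N(q)$. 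Using the three-term recurrence $I_{\nu-1}(z)-I_{\nu+1}(z)=\frac{2\nu}{z}I_\nu(z)$ with $z=2$, the numerator collapses to the remarkably clean expression $N(q)=(q-1)I_{q-1}(2)$... wait, more precisely $N(q)=I_{q-1}(2)-I_{q+1}(2)=\frac{2(q)}{2}I_{q}(2)$? Let me be careful: with $\nu=q$ the identity gives $I_{q-1}(2)-I_{q+1}(2)=q\,I_q(2)$, so $N(q)=q\,I_q(2)$. This is the decisive simplification.

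So I would first rewrite $\cos\theta_{1q}=\dfrac{q\,I_q(2)}{\sqrt{[I_0(2)-I_2(2)]}\;\sqrt{I_0(2)-I_{2r(q)}(2)}}$. For $q$ in the first half of the path, $r(q)=q$, and one must compare $\cos\theta_{1q}$ with $\cos\theta_{1,q+1}$; this reduces to an inequality purely among Bessel functions $I_j(2)$, which can be settled using known monotonicity and log-convexity properties of $\nu\mapsto I_\nu(z)$ (for fixed $z$, $I_\nu(z)$ is strictly decreasing and strictly log-convex in $\nu\ge 0$), together with the recurrence and the integral representation $I_\nu(2)=\frac{1}{\pi}\int_0^\pi e^{2\cos t}\cos(\nu t)\,dt$. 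One expects $\cos\theta_{1q}$ to be strictly decreasing in $q$: as $q$ moves away from the endpoint $1$, both the ``effective separation'' grows (numerator $qI_q(2)\to 0$ rapidly because $I_q(2)$ decays super-exponentially like $1/q!$) and the denominator factor $\sqrt{I_0(2)-I_{2q}(2)}$ increases toward $\sqrt{I_0(2)}$. For $q$ in the second half, $r(q)=n-q+1$ decreases as $q\to n$, which would seem to push the denominator factor down and the cosine up; but the numerator $qI_q(2)$ is still collapsing to zero, and in the double limit $n\to\infty$ one has $I_{2r(q)}(2)\to 0$ for any $r(q)\to\infty$, so effectively the denominator factor is frozen at $\sqrt{I_0(2)[I_0(2)-I_2(2)]}$ for all $q$ not within $O(1)$ of either endpoint, and monotonicity is governed entirely by $q\mapsto qI_q(2)$, which is strictly decreasing for $q\ge 1$ (again because $I_q(2)\sim 1/q!$ kills the linear factor). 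So the honest statement is that $C$ is monotone \emph{decreasing}.

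The main obstacle I anticipate is handling the transition region and the order of limits cleanly: Proposition~\ref{prop4.1} already takes $n\to\infty$ pointwise in $(p,q)$, so within that limiting framework $I_{2r(q)}(2)$ must be read as $I_{2q}(2)$ for every fixed finite $q$ (since fixed $q\le n/2$ eventually), and the ``fold-back'' branch $r(q)=n-q+1$ only matters for $q$ comparable to $n$, where $I_{2r(q)}(2)\to 0$ anyway. Thus after the $n\to\infty$ limit the denominator factor is simply $\sqrt{[I_0(2)-I_2(2)]\,[I_0(2)-I_{2q}(2)]}$ for all $q\ge 2$, and I must verify the single-variable inequality
\begin{align*}
\frac{q\,I_q(2)}{\sqrt{I_0(2)-I_{2q}(2)}}>\frac{(q+1)\,I_{q+1}(2)}{\sqrt{I_0(2)-I_{2q+2}(2)}}
\end{align*}
for every integer $q\ge 2$ (and separately the $q=2$ vs.\ the endpoint-adjacent behaviour). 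Squaring and cross-multiplying turns this into a polynomial-type inequality in the quantities $I_j(2)$; I would prove it by showing $q^2 I_q(2)^2\big(I_0(2)-I_{2q+2}(2)\big)>(q+1)^2 I_{q+1}(2)^2\big(I_0(2)-I_{2q}(2)\big)$, bounding $I_{q+1}(2)/I_q(2)<1/(q+1)$ (which follows from the recurrence $I_{q}(2)=I_{q+2}(2)+(q+1)I_{q+1}(2)$ and positivity, giving $I_{q+1}(2)/I_q(2)=1/\big((q+1)+I_{q+2}(2)/I_{q+1}(2)\big)<1/(q+1)$), so that $(q+1)^2 I_{q+1}(2)^2<I_q(2)^2$, while $I_0(2)-I_{2q+2}(2)>I_0(2)-I_{2q}(2)$; the two effects combine to give the strict inequality with room to spare. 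Finally I would note the limit $\cos\theta_{1n}(P_n)\to 0$ computed just after Proposition~\ref{prop4.1}, consistent with monotone decrease of $C$ down to its infimum $0$, and contrast this with the non-monotonic distance sequence of Proposition~\ref{prop4.4}, which is the point of the section.
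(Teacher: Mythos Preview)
Your argument is correct and in fact more careful than the paper's own proof. The paper simply writes down the piecewise formula~\eqref{eq:Pn}, asserts that for small $q$ the numerator $I_{1-q}(2)-I_{1+q}(2)$ decreases while the denominator behaves appropriately, notes that $\lim_{q\to\infty}\cos\theta_{1q}=0$, and then observes that on the branch $q>n/2$ the numerator is unchanged, so that once the angle has reached $90^{\circ}$ it cannot come back down. No quantitative comparison between consecutive terms is given.

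Your route differs in two substantive ways. First, you invoke the three-term recurrence $I_{\nu-1}(2)-I_{\nu+1}(2)=\nu\,I_{\nu}(2)$ to collapse the numerator to $q\,I_q(2)$; the paper never makes this simplification. Second, from the same recurrence (with $\nu=q+1$) you extract the clean bound $I_{q+1}(2)/I_q(2)<1/(q+1)$, which immediately gives $(q+1)I_{q+1}(2)<I_q(2)$ and, combined with $I_0(2)-I_{2q+2}(2)>I_0(2)-I_{2q}(2)$, yields a genuine term-by-term inequality $\cos\theta_{1q}>\cos\theta_{1,q+1}$ rather than a limiting statement. Your discussion of the $q>n/2$ branch via the order of limits (for fixed $q$ one eventually has $q\le n/2$, so the fold-back is invisible in the $n\to\infty$ framework of Proposition~\ref{prop4.1}) is also more explicit than the paper's, though at the paper's level of rigor both treatments of that branch are essentially heuristic. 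What your approach buys is an honest proof of strict monotonicity on the first branch; what the paper's approach buys is brevity.
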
 \begin{proof} Without any loss of generality we
will consider here again even $n$. The communicability angle in question
is given by: 
\begin{align}
\cos\theta_{1q}=\begin{cases}
\dfrac{I_{1-q}(2)-I_{1+q}(2)}{\sqrt{\left[I_{0}(2)+I_{2}(2)\right]\left[I_{0}(2)-I_{2q}(2)\right]}} & \mbox{for \ensuremath{1<q\leq n/2}},\\
\dfrac{I_{1-q}(2)-I_{1+q}(2)}{\sqrt{\left[I_{0}(2)+I_{2}(2)\right]\left[I_{0}(2)-I_{2(n-q+1)}(2)\right]}} & \mbox{for \ensuremath{q>n/2}}.
\end{cases}\label{eq:Pn}
\end{align}
For small values of $q$ it is easy to see that $\cos\theta_{1q}>\cos\theta_{1,q+1}$;
the numerator of \eqref{eq:Pn} decreases as $q$ increases and at
the same time the denominator decreases. It is also easy to see that
$\lim_{q\to\infty}\cos\theta_{1q}=0$.

The difference with the result for the communicability distance arises
from the fact that the numerator of \eqref{eq:Pn} for $q>n/2$ is
the same as that for $1<q\leq n/2$. We therefore have $\lim_{q\to\infty}\cos\theta_{1q}=0$
for $q>n/2$, which indicates that once the angle between the first
and the $q$th nodes in $P_{n}$ reaches its maximum value, \textit{i.e.},
$90^{\circ}$, it does not decrease again, which proves that the series
$C$ is monotonic. \end{proof}

Now, let us extract the structural information provided by these results
which will be useful for further application of the communicability
angle in analyzing real-world complex networks. Let us define the
average communicability angle for a given graph as the average over
the pairs of nodes: 
\begin{align}
\left\langle \theta\right\rangle =\dfrac{2}{n\left(n-1\right)}\sum_{p>q}\theta_{pq}.\label{eq5.60}
\end{align}
We then have the following observations: 
(i) The average communicability angle for the path graph $P_{n}$ tends
to $90^{\circ}$ when the number of nodes tends to infinite. This
is a consequence of Propositions~\ref{prop4.1} and~\ref{prop4.5}.
(ii) The average communicability angle for the star graph $K_{1,n-1}$
tends to $0^{\circ}$ when the number of nodes tends to infinite.
This is a consequence of Proposition~\ref{prop4.2}.
(iii) The average communicability angle for the complete graph $K_{n}$
tends to $0^{\circ}$ when the number of nodes tends to infinite.
This is a consequence of Proposition~\ref{prop4.3}.

\section{Computational analysis of the communicability angle}

\label{sec6}

In this section we computationally analyze the average communicability
angle $\left\langle\theta\right\rangle$ in Eq.~\eqref{eq5.60} for connected graphs. Specifically, we here
study a dataset of all 11,117 connected graphs with 8 nodes. We divide
this section into three subsections: we first analyze relations (or
lack thereof) between the average communicability angle and other
graphs metrics, namely the average path length, the average resistance
distance and the average communicability distance; we then study relations
between $\left\langle \theta\right\rangle $ and the graph planarity;
we finally investigate influence of graph modularity on the communicability
angle.

\subsection{Communicability angle and other graph metrics}

We first compare the average communicability angle $\left\langle \theta\right\rangle $
with the average communicability distance $\left\langle \xi\right\rangle $,
the average resistance distance $\left\langle \Omega\right\rangle $, 
the average path length $\left\langle l\right\rangle $ 
and the communication efficiency $E$ as metrics potentially related to $\left\langle \theta\right\rangle $;
every average was taken over all pairs of nodes. We show in Fig.~\ref{fig3}
the scatter plots of these measures against the average communicability
angles.
\begin{figure}
 \includegraphics[width=0.31\textwidth]{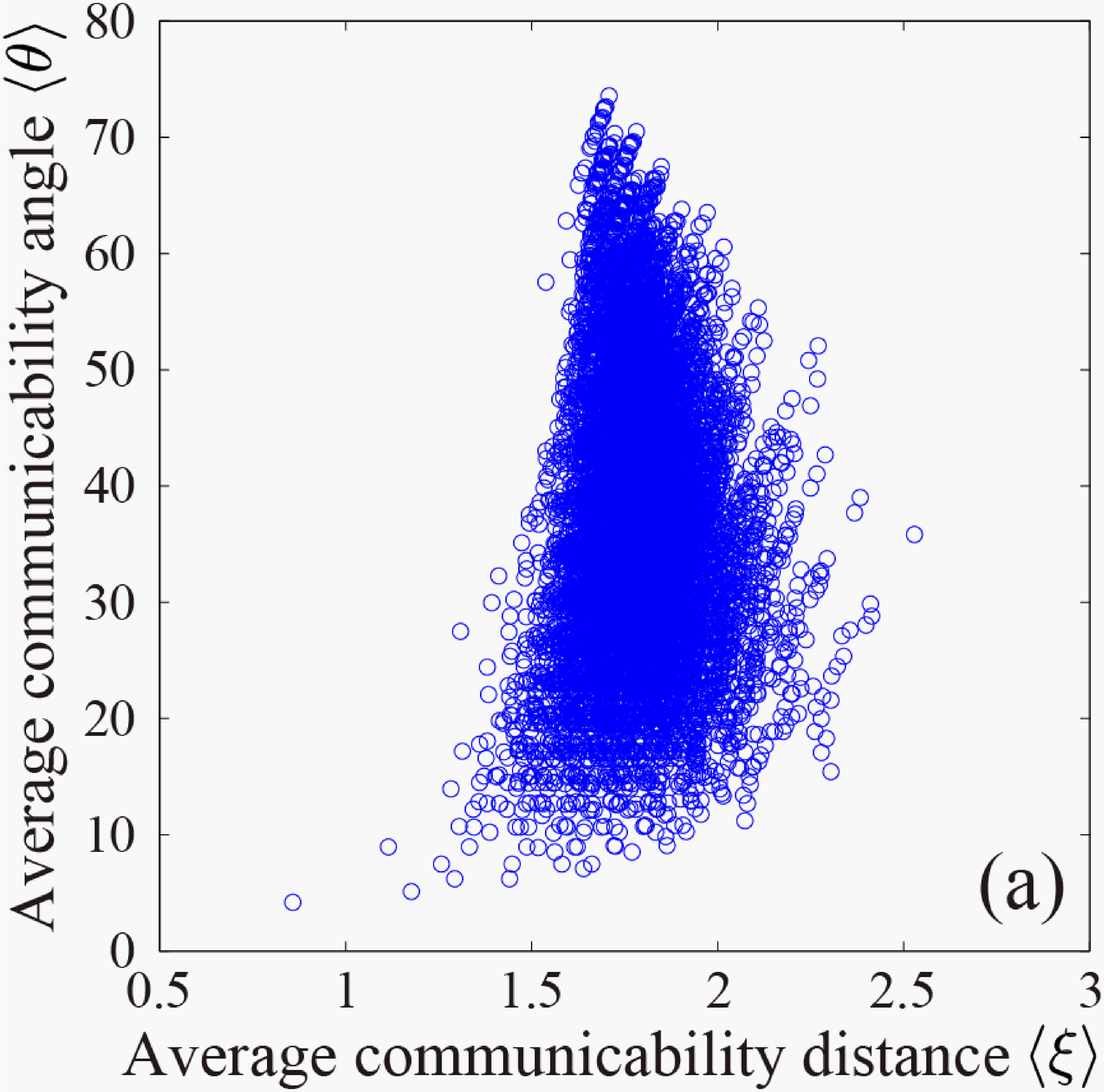} %
\hspace{0.03\textwidth} %
 \includegraphics[width=0.31\textwidth]{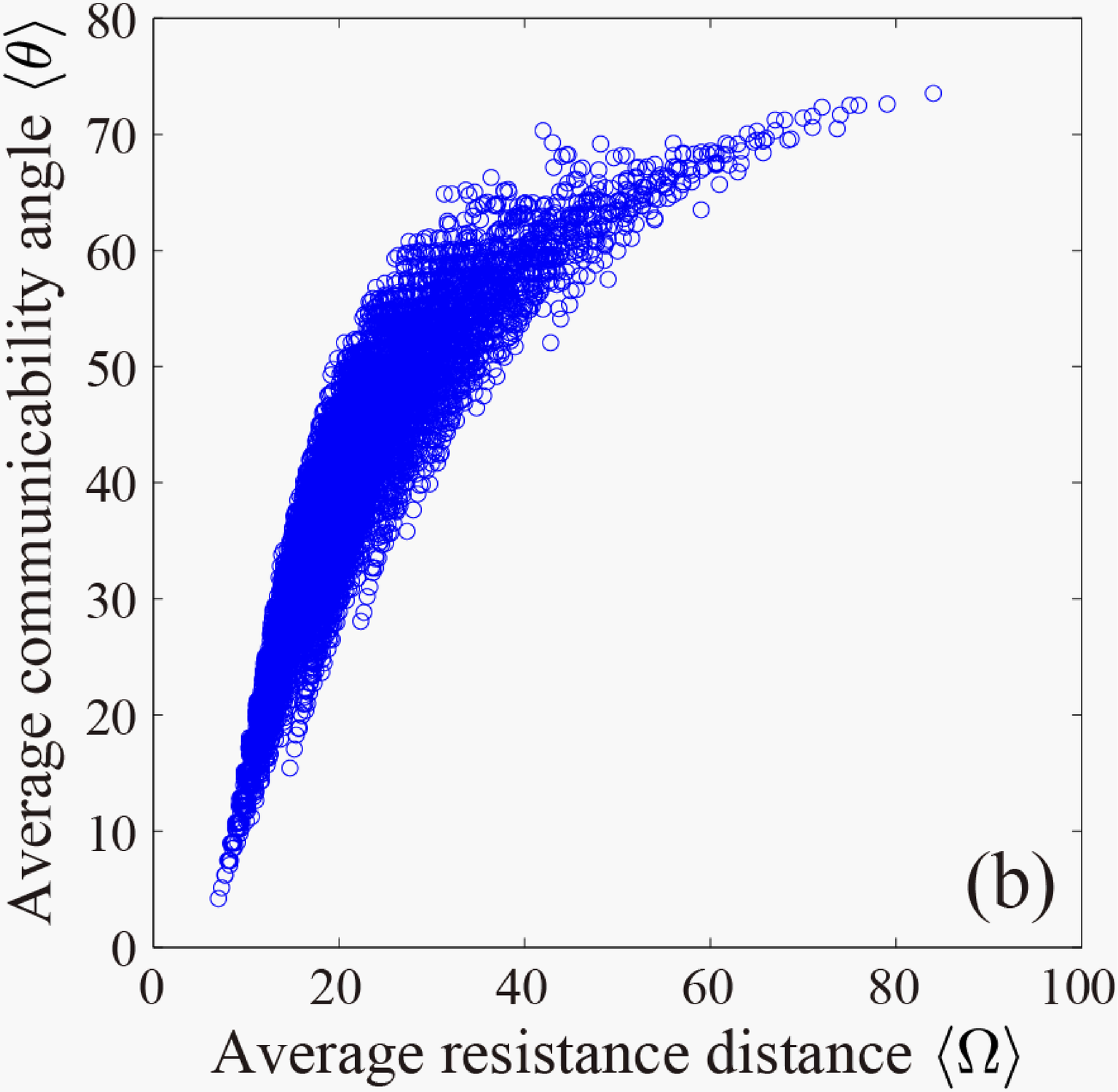} %
\hfill %
 \includegraphics[width=0.31\textwidth]{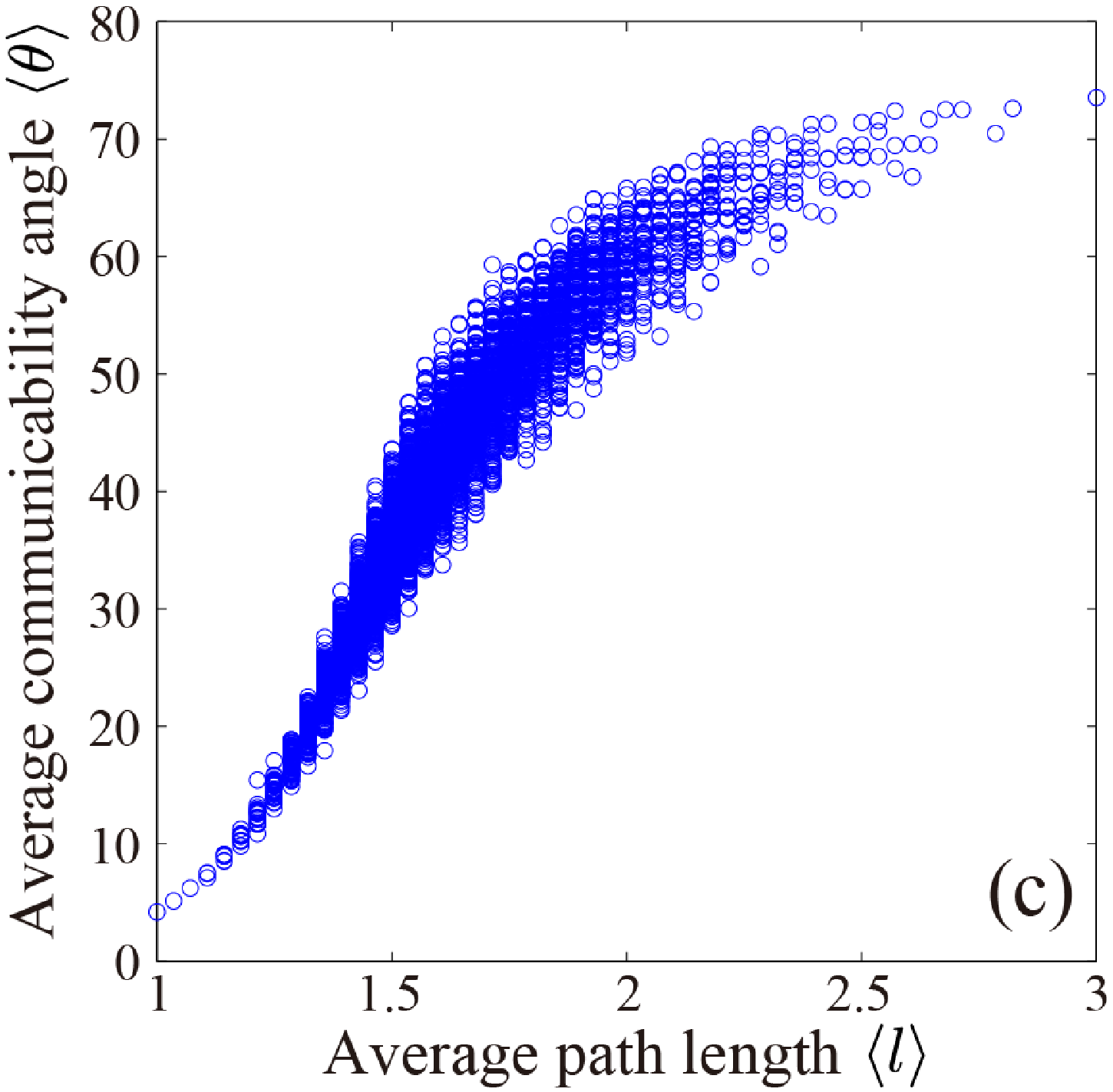}
\vspace{\baselineskip}
\\
 \includegraphics[width=0.31\textwidth]{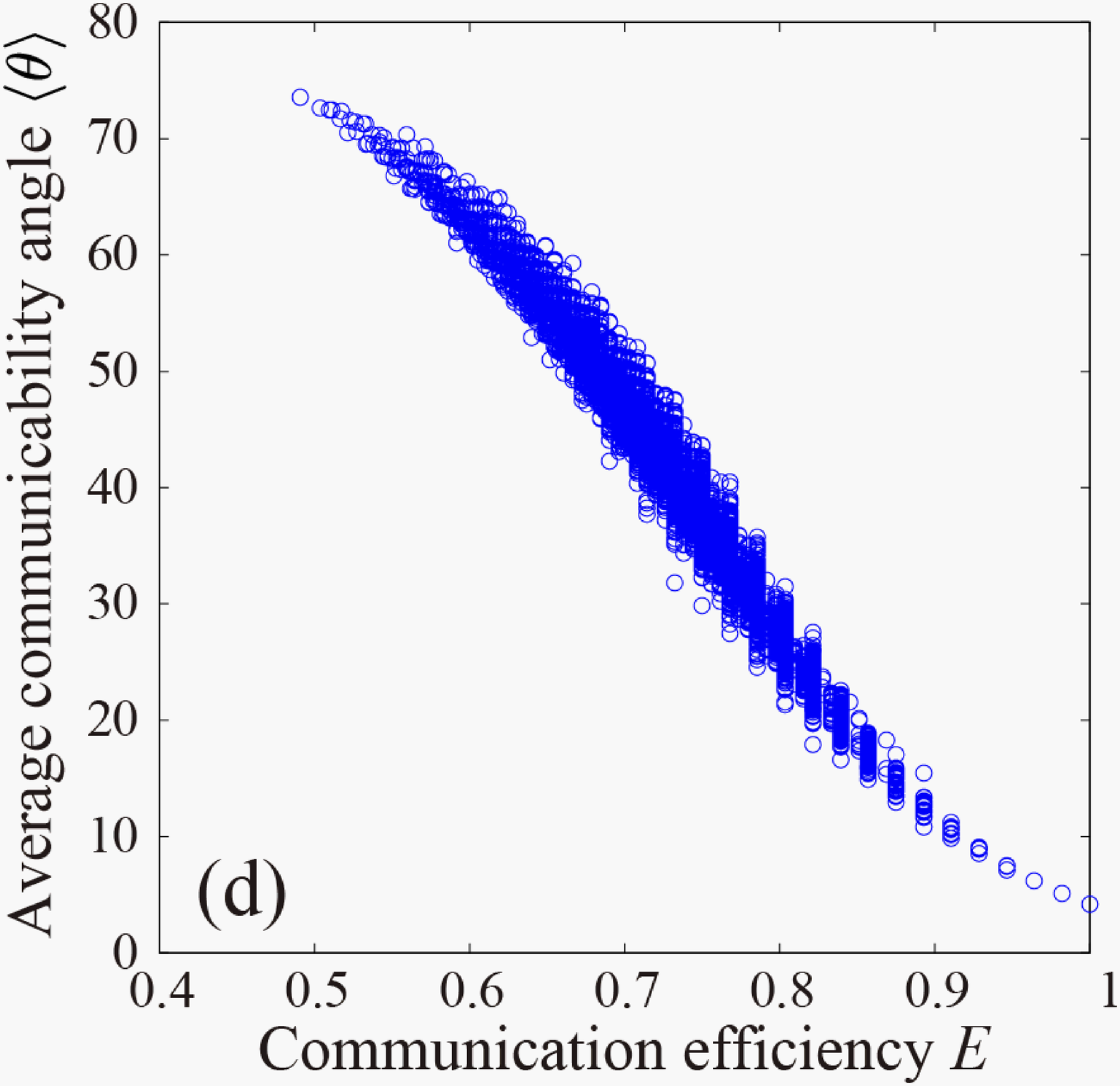} %
\hspace{0.03\textwidth} %
 \includegraphics[width=0.31\textwidth]{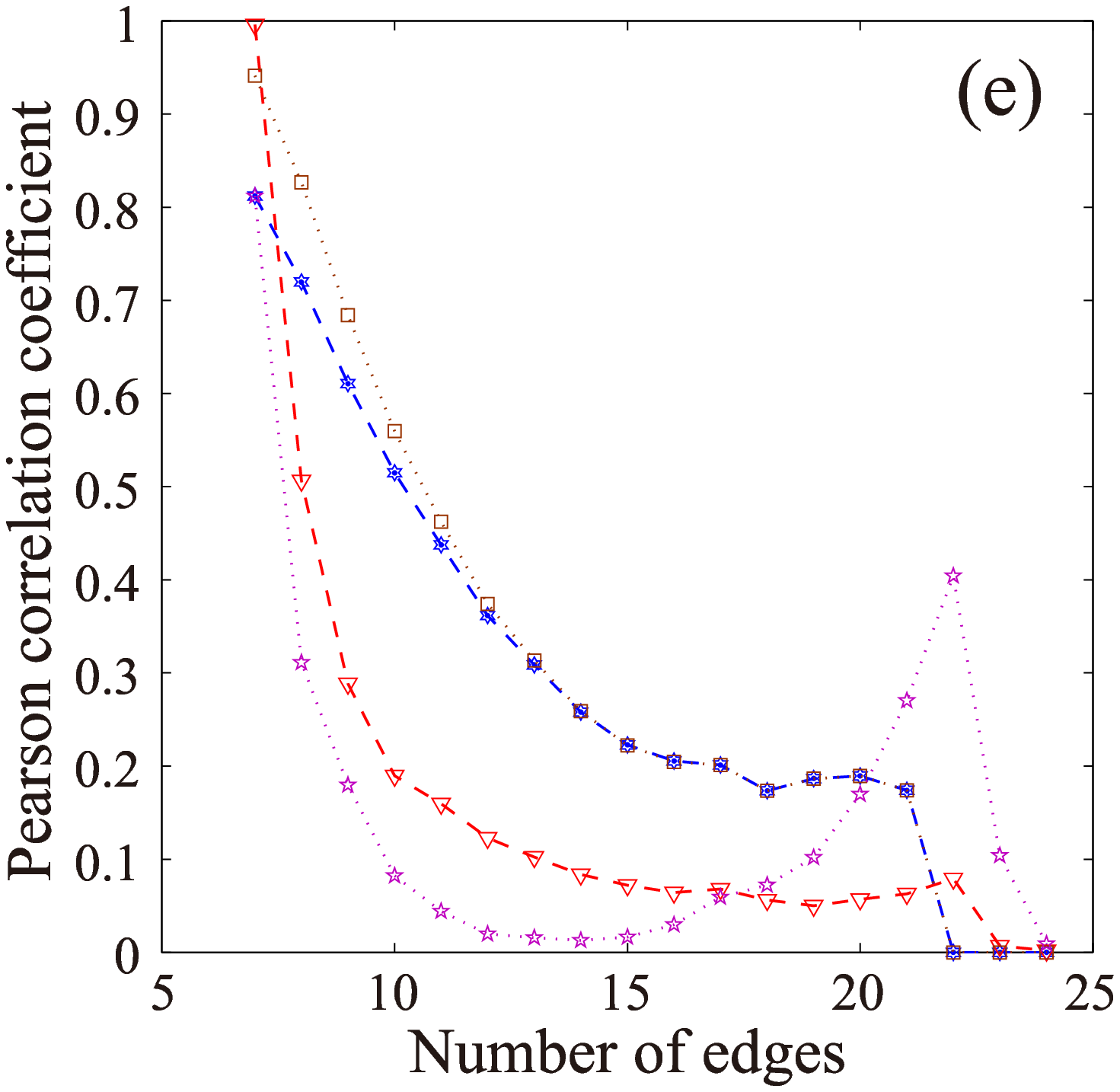} %
\hfill
\caption{(a--d) Scatter plots of the average communicability angle against (a) the
average communicability distance, (b) the average resistance distance,
(c) the average path length and (d) the communication efficiency 
for all 11,117 connected graphs with
8 nodes.
(e) The squared Pearson correlation coefficients between the metrics for 8-node graphs with a fixed number of edges.
Red triangles connected by a broken line indicate the data from (a), pink stars connected by a dotted line indicate the data from (b), the blue stars connected by a broken line indicate the data from (c) and brown squares connected by a dotted line indicate the data from (d).}
\label{fig3} 
\end{figure}

We can see that the communicability angle is not directly or trivially
related to the other metrics. It is particularly interesting to see
the lack of correlation between $\left\langle \theta\right\rangle $
and $\left\langle \xi\right\rangle $. They are highly uncorrelated 
although the two quantities are based on the same concept of communicability.
This lack of correlation is not unexpected if we consider how the
two measures and the communicability function are related to
each other via Eq.~\eqref{eq4.3}. The average communicability angle
shows more similar trends to the average path length $\left\langle l\right\rangle $,
the average resistance distance $\left\langle \Omega\right\rangle $
and the communication efficiency $E$. The extreme values of these
three measures coincide with those of $\left\langle \theta\right\rangle $,
although there is a large dispersion in between. The general plots
in Fig.~\ref{fig3} really hides the true lack of correlation that
exists among these metrics and the communicability angle. To reveal
more of these lack of correlations we plot the squared Pearson correlation
coefficient between each metric and $\left\langle \theta\right\rangle $ for
groups of graphs having the same number of edges. As can be seen in
Fig.~\ref{fig3}(e), as soon as the number of edges increases, the
correlation between the pair of indices drops significantly. For instance,
let us consider the communication efficiency, for which the correlation
with $\left\langle \theta\right\rangle $ yields a correlation coefficient
$r^{2}\approx0.94$ for the 8-node trees. This correlation coefficient
drops to $r^{2}\approx0.31$ for graphs having 13 edges and to $r^{2}\approx0.17$
for graphs having 18 edges. It is virtually zero for graphs with more
than 22 edges. The reason for this decay in the correlation is very
important. Trees have very large correlations between the pairs of
measures. This is due to the fact that in these graphs there are
only shortest paths to connect any pair of nodes because of the absence
of any cycles. As the number of edges increases the number of potential
routes between any pair of nodes increases dramatically, making more
different the measures based on shortest paths from the communicability
angle. There is also a complete lack of correlation between the communicability
angle and the average resistance distance for graphs having 10 to 20
nodes. The correlation coefficient increases for these two measures
when the number of edges is 23 but then decays. The reason for
this increase is not clear at all, but in any case the correlation
coefficient indicates that the variance in one of the indices explained
by the other is only 40\% at this point.

Among all the connected
graphs with 8 nodes, the path graph $P_{8}$ has the largest average
communicability angle and the complete graph $K_{8}$ has the smallest.
Among all the trees with 8 nodes, the star graph $K_{1,7}$ has the
smallest average communicability angle. This is also verified for
all connected graphs with 5, 6 and 7 nodes. We thereby have the following:
\newtheorem{conjecture}[theorem]{Conjecture}
\begin{conjecture} 
Among all connected graphs with $n$ nodes, the
average communicability angle is the largest for the path graph $P_{n}$
and the smallest for the complete graph $K_{n}$. 
\end{conjecture}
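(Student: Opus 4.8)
The plan is to deduce both extremal statements from a single \emph{edge-monotonicity lemma}: if $\Gamma'$ is obtained from a connected graph $\Gamma$ by inserting one edge, then $\langle\theta\rangle(\Gamma')<\langle\theta\rangle(\Gamma)$. Granting this, the minimality of $K_{n}$ is immediate: $K_{n}$ is the only connected graph on $n$ vertices admitting no further edge insertion, while any other connected $\Gamma$ has a missing edge $e$ with $\langle\theta\rangle(\Gamma)>\langle\theta\rangle(\Gamma+e)$, hence is not a minimizer; as the minimum over a finite family is attained, it must be attained uniquely at $K_{n}$. Read in reverse, the same lemma shows a maximizer cannot contain a cycle (deleting a cycle edge keeps the graph connected and strictly raises $\langle\theta\rangle$), so every maximizer is a spanning tree. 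The maximality claim is thereby reduced to proving that $P_{n}$ is the unique maximizer of $\langle\theta\rangle$ among all $n$-vertex trees.

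To prove the edge-monotonicity lemma I would differentiate along the family $A(t)=A+t\,E_{ij}$ with $E_{ij}=\vec e_{i}\vec e_{j}^{\,T}+\vec e_{j}\vec e_{i}^{\,T}$ for $t\in[0,1]$, and establish $\frac{d}{dt}\sum_{p<q}\theta_{pq}<0$ for all $t\ge0$. Since the graph stays connected along the family, $\gamma_{pq}\in(0,1)$ throughout, so $\theta_{pq}=\arccos\gamma_{pq}$ is differentiable, and with Duhamel's formula $\dot G(t)=\int_{0}^{1}e^{sA(t)}E_{ij}e^{(1-s)A(t)}\,ds$ one obtains
\begin{align}
\dot\gamma_{pq}&=\gamma_{pq}\left(\frac{\dot G_{pq}}{G_{pq}}-\frac{\dot G_{pp}}{2G_{pp}}-\frac{\dot G_{qq}}{2G_{qq}}\right), &
\dot\theta_{pq}&=-\frac{\dot\gamma_{pq}}{\sqrt{1-\gamma_{pq}^{2}}},
\end{align}
so the lemma amounts to $\sum_{p<q}\frac{\gamma_{pq}}{\sqrt{1-\gamma_{pq}^{2}}}\bigl(\dot G_{pq}/G_{pq}-\dot G_{pp}/2G_{pp}-\dot G_{qq}/2G_{qq}\bigr)>0$. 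All entries of $e^{sA(t)}$ with $s>0$ are strictly positive, so each gain $\dot G_{pq}/G_{pq}$ is positive; substituting $f_{p}(s)=(e^{sA(t)})_{pi}$ and $g_{p}(s)=(e^{(1-s)A(t)})_{pj}$ turns the ingredients into $L^{2}[0,1]$ inner products, $\dot G_{pq}=\langle f_{p},g_{q}\rangle+\langle f_{q},g_{p}\rangle$ and $\dot G_{pp}=2\langle f_{p},g_{p}\rangle$, reducing the lemma to a correlation-type inequality for these functions. I expect this to be the \textbf{main obstacle}: the perturbation $E_{ij}$ is indefinite, so $e^{A(t)}$ is not Loewner-monotone in $t$, and the required inequality is genuinely global, not termwise. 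Besides the Duhamel route, a promising alternative is a probabilistic coupling of the time-$\tfrac12$ continuous-time random walks launched from $p$ and from $q$, combined with the identity $(A^{k})_{pq}=\#\{\text{walks }p\to q\text{ of length }k\}$, to argue that an extra edge raises the normalized overlap $\gamma_{pq}$ in the weighted average.

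For the tree case I would use a grafting transformation toward the path, in the spirit of the known extremal results for the Estrada index of trees. Given a tree $T\neq P_{n}$, fix a diametral path $v_{0}v_{1}\cdots v_{d}$; some internal vertex $v_{i}$ carries a pendant subtree $T_{w}$ rooted at a neighbor $w\notin\{v_{i-1},v_{i+1}\}$, and diametrality forces the depth of $T_{w}$ from $v_{i}$ to be at most $\min(i,d-i)$. Detaching $T_{w}$ at $v_{i}$ and reattaching it at the endpoint $v_{d}$ keeps the graph a tree, is a Robin--Hood transfer of one unit of degree from $v_{i}$ (degree $\ge3$) to the leaf $v_{d}$, and hence strictly decreases the degree sequence in the majorization order; iterating must therefore terminate at the unique tree degree sequence $(2,\dots,2,1,1)$, i.e.\ at $P_{n}$. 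The remaining, milder obstacle is to show each such step strictly increases $\langle\theta\rangle$: one expresses the change in every $G_{pq}$ through the walk-generating function, comparing walks before and after the transplant while controlling the off-diagonal $G_{pq}$ and the diagonal $G_{pp}$ that enter $\gamma_{pq}$ simultaneously. As a consistency check, Propositions~\ref{prop4.1} and~\ref{prop4.5} already show that in $P_{n}$ almost every pair of nodes has communicability angle close to $90^{\circ}$, the largest value attainable by a simple graph, so $P_{n}$ is asymptotically optimal; the conjecture asserts optimality for every finite $n$, which is consistent with the exhaustive checks for $n\le8$.
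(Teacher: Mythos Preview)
The paper does not prove this statement: it is presented explicitly as a \emph{Conjecture}, supported only by an exhaustive computational check over all connected graphs with $n\le 8$ vertices. No analytic argument is offered. There is therefore no proof in the paper against which to measure your proposal.

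Your plan is a coherent line of attack, and you are candid about where it is incomplete. The reduction via an edge-monotonicity lemma is natural: if inserting any edge strictly lowers $\langle\theta\rangle$, then $K_{n}$ is forced as the unique minimizer and every maximizer must be a tree, which would fold the problem into the paper's companion conjecture on trees (also left open). But, as you yourself flag, the edge-monotonicity lemma is the whole difficulty. The rank-two perturbation $E_{ij}$ is indefinite, so $t\mapsto e^{A+tE_{ij}}$ is not Loewner-monotone, and the quantity you must sign mixes gains in the off-diagonal $G_{pq}$ against gains in the diagonals $G_{pp},G_{qq}$ through the nonlinear weight $\gamma_{pq}/\sqrt{1-\gamma_{pq}^{2}}$; there is no termwise reason for the sum to be positive, and I am not aware of a proof of this lemma in the literature. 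The grafting step for trees rests on the same kind of unproved monotonicity under a local move. So what you have is an accurate map of where a proof would have to go and why it is stuck---which is already more than the paper provides---but not a proof, and the statement remains, to my knowledge, open.
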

\begin{conjecture} 
Among all trees with $n$ nodes, the average communicability
angle is the largest for the path graph $P_{n}$ and the smallest
for the star graph $K_{1,n-1}$. 
\end{conjecture}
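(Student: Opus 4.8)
The statement is an extremality assertion within the class of $n$-vertex trees, a class that is connected under local subtree-shifting (``grafting'') operations; the natural approach is therefore the one used to prove that $P_{n}$ and $K_{1,n-1}$ are extremal for the spectral radius, the graph energy and the Estrada index of trees. I would isolate two elementary moves. A \emph{straightening} move takes a branch vertex $v$ on a longest path of $T$, detaches one of the off-path subtrees hanging at $v$, and re-attaches it at the current end of that path; the goal is to show $\langle\theta\rangle$ strictly increases. A dual \emph{compression} move takes a non-central vertex $u$ and transplants all but one of its incident edges onto a fixed centre; the goal is to show $\langle\theta\rangle$ strictly decreases. A purely combinatorial argument (standard in this setting) shows that iterating straightening terminates at $P_{n}$ and iterating compression terminates at $K_{1,n-1}$, so everything reduces to the monotonicity of $\langle\theta\rangle$ under a \emph{single} move.

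To handle a single move I would realize it as the endpoint $t=1$ of a one-parameter family of \emph{weighted} trees with adjacency matrix $A(t)\ge 0$ (entrywise), in which one edge weight slides from $1$ down to $0$ while another slides from $0$ up to $1$; every intermediate graph is connected, so $e^{sA(t)}$ is entrywise positive for $s>0$. Differentiating $G(t)=e^{A(t)}$ via
\[
\frac{d}{dt}\,e^{A(t)}=\int_{0}^{1}e^{sA(t)}\,\dot A(t)\,e^{(1-s)A(t)}\,ds
\]
gives an explicit integral formula for $\tfrac{d}{dt}G_{pq}$ in terms of the positive entries of $e^{sA(t)}$; in particular, increasing one edge weight increases \emph{every} $G_{pq}$. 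One must then determine the sign of $\tfrac{d}{dt}\langle\theta\rangle=\tfrac{2}{n(n-1)}\sum_{p>q}\tfrac{d}{dt}\arccos\!\big(G_{pq}/\sqrt{G_{pp}G_{qq}}\big)$, combining the contributions of the increasing and the decreasing edge and of the diagonal terms.

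This last step is, I expect, the genuine obstacle. Since $\theta_{pq}$ is a \emph{ratio} of communicabilities, the entrywise monotonicity of $e^{A}$ is not by itself decisive, and an edge rotation lengthens some chords $\xi_{pq}$ while shortening others, so neither a crude entrywise comparison nor a one-line majorization argument suffices. The most promising route is the geometry of Theorem~3.1: $\langle\theta\rangle$ measures the angular spread, seen from the origin, of the point cloud $\{\vec x_{p}\}$ on the hypersphere, and one wants to prove that straightening spreads this cloud out while compression concentrates it --- ideally through a variational characterization of $\sum_{p>q}\theta_{pq}$ or a convexity property of $\arccos$ relative to the Gram structure of the $\vec x_{p}$. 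A sensible order of attack is to do the star half first, since for $K_{1,n-1}$ the matrix $e^{A}$ differs from the identity by a rank-two matrix and so comparison with an arbitrary tree's $G$ should be tractable; the regime $n\to\infty$ is already settled by Observations~(i)--(iii) and the cases $n\le 8$ by the exhaustive computation above, so a complete proof need only bridge the finite range. The same scheme should handle the companion conjecture for all connected graphs once one adds the step that inserting an edge into a connected non-complete graph decreases $\langle\theta\rangle$, which I would again attack with the weighted-graph/integral-representation technique.
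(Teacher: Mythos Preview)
The paper does not prove this statement: it is explicitly stated as a \emph{Conjecture}. The only support offered in the paper is (i) an exhaustive computer check over all connected graphs with $5,6,7,8$ nodes, and (ii) the asymptotic Observations~(i)--(iii) coming from Propositions~\ref{prop4.1}--\ref{prop4.3} and~\ref{prop4.5}, which show that $\langle\theta\rangle(P_n)\to 90^\circ$ and $\langle\theta\rangle(K_{1,n-1})\to 0^\circ$ as $n\to\infty$. There is no argument in the paper for finite $n$ beyond the exhaustive search, so there is nothing to compare your attempt to.

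Your plan is a sensible line of attack and you are candid that it is not a proof. The grafting/compression framework is exactly the right skeleton for tree extremality questions, and the Duhamel-type integral for $\tfrac{d}{dt}e^{A(t)}$ is the correct tool for tracking a single edge move. You have also correctly identified the real obstruction: because $\theta_{pq}$ is a ratio $G_{pq}/\sqrt{G_{pp}G_{qq}}$, the entrywise monotonicity of $e^A$ under edge-weight increase is not enough, and a single grafting move can push individual angles in both directions, so what is needed is a \emph{signed aggregate} inequality for $\sum_{p>q}\theta_{pq}$. Nothing in the paper supplies such an inequality, and your geometric reformulation (angular spread of the cloud $\{\vec x_p\}$ on the hypersphere) is suggestive but does not yet yield a monotone functional. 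In short, your proposal goes well beyond what the paper contains, but the decisive step --- monotonicity of $\langle\theta\rangle$ under a single straightening/compression move --- remains open, as you yourself note.
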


These observations indicate that the average communicability angle
describes the efficiency of a graph in using the space in which it
is embedded. The path graph $P_{n}$, which intuitively occupies the
largest portion of space, has the largest average communicability
angle, while the star and complete graphs, which intuitively occupy
the smallest, have the average communicability angle close to zero.
In the next section we explore more observations of this sort from
a computational point of view.

\subsection{Communicability angle and graph planarity}

Here we investigate the relation between the graph planarity and the
average communicability angle. We first determine whether a graph
is planar or not using the planarity test proposed
by Boyer and Myrvold~\cite{Boyer Myrvold}. We then construct the
histogram of the frequency of planar/nonplanar graphs with respect
to the average communicability angle.

Let $\eta_{k}$ be the number of planar graphs having $k\leq\left\langle \theta\right\rangle <\left(k+10^{\circ}\right)$
for $k=0^{\circ},10^{\circ},20^{\circ},\cdots,80^{\circ}$. We plot
in Fig.~\ref{fig4}(a) the histogram of the planar/nonplanar graphs
as a function of their values of $\left\langle \theta\right\rangle $
for all connected graphs with 8 nodes.
\begin{figure}
\includegraphics[width=0.31\textwidth]{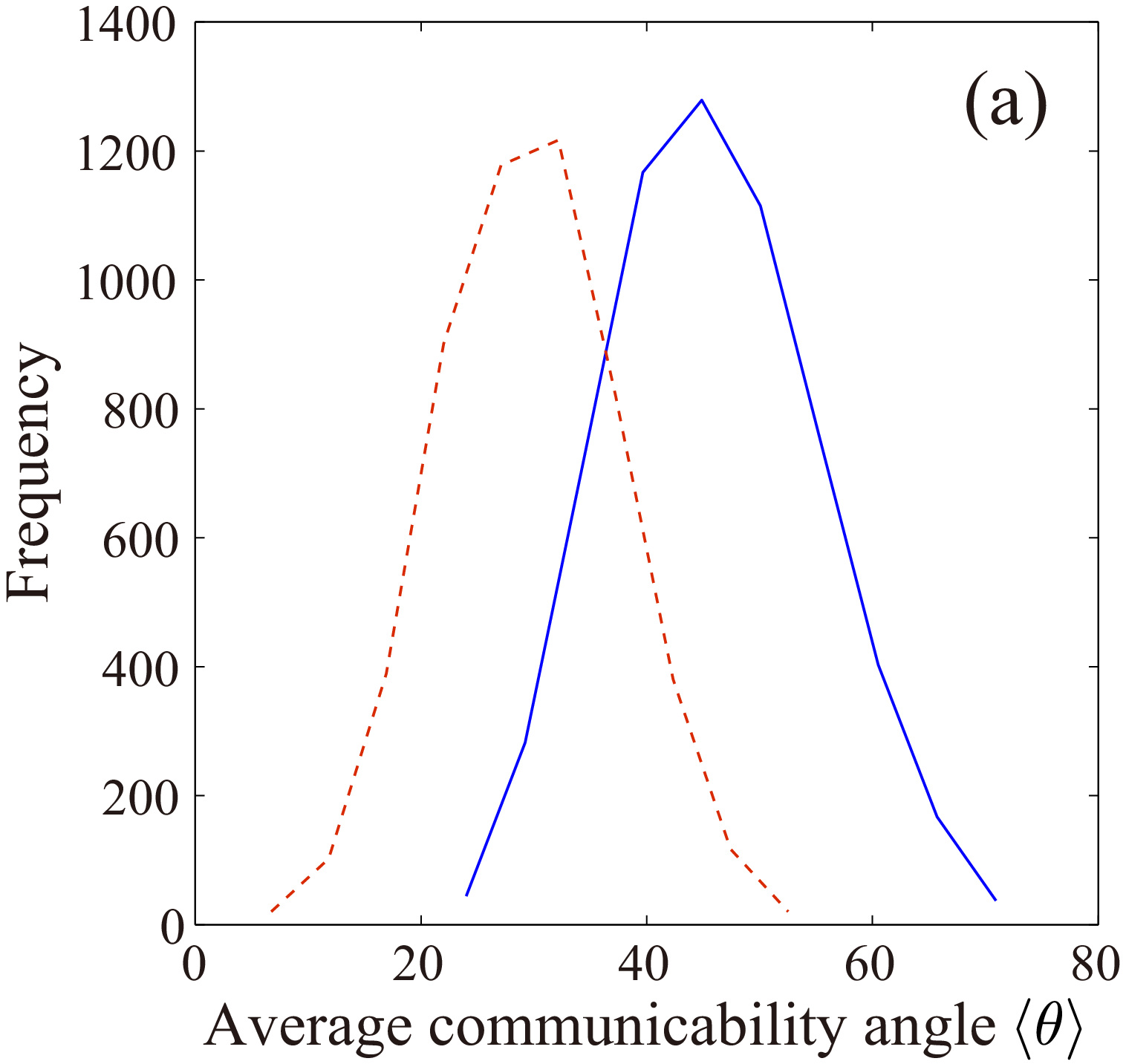} 
\hspace{0.02\textwidth}
\includegraphics[width=0.31\textwidth]{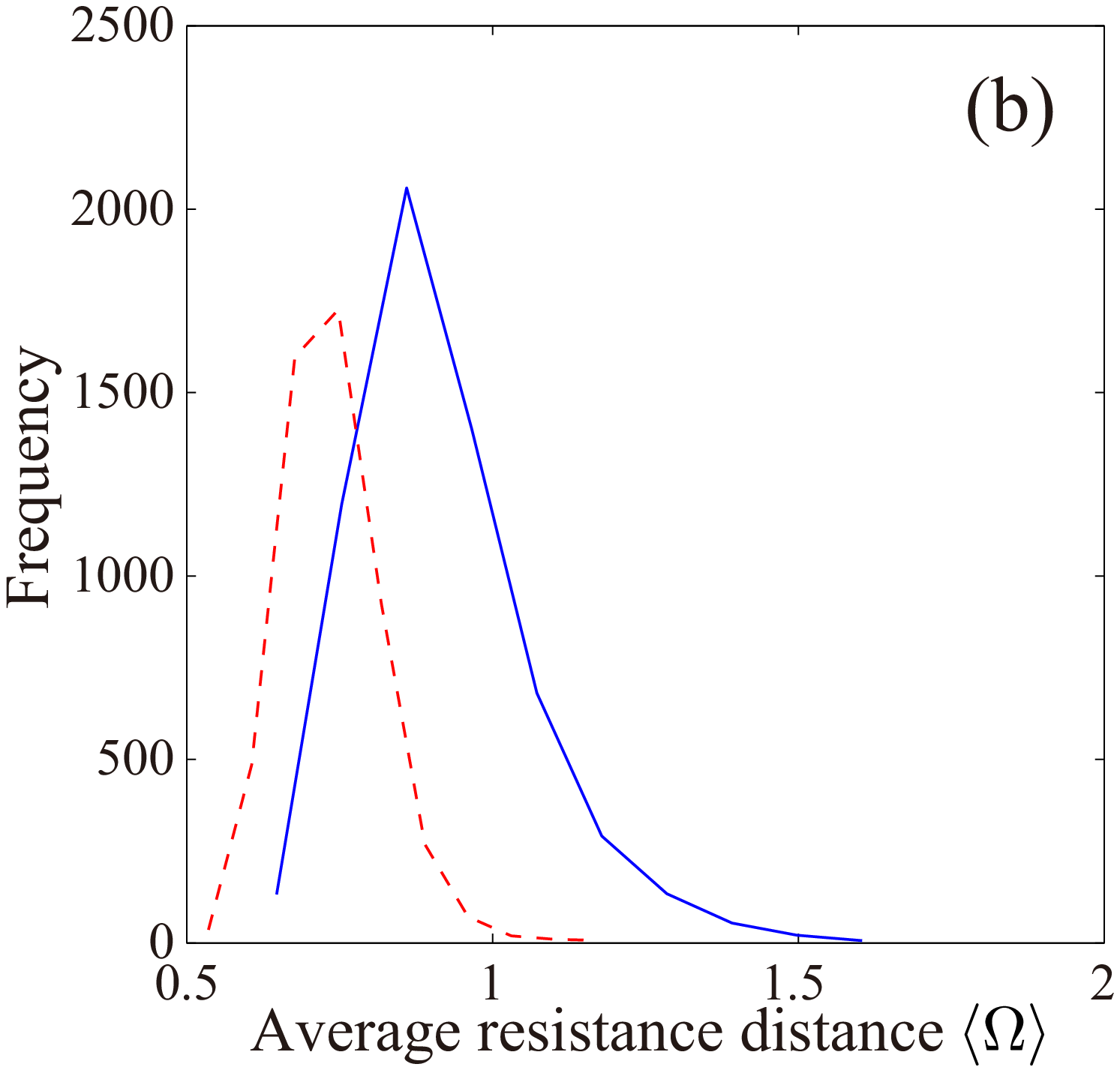}
\hspace{0.02\textwidth}
\includegraphics[width=0.31\textwidth]{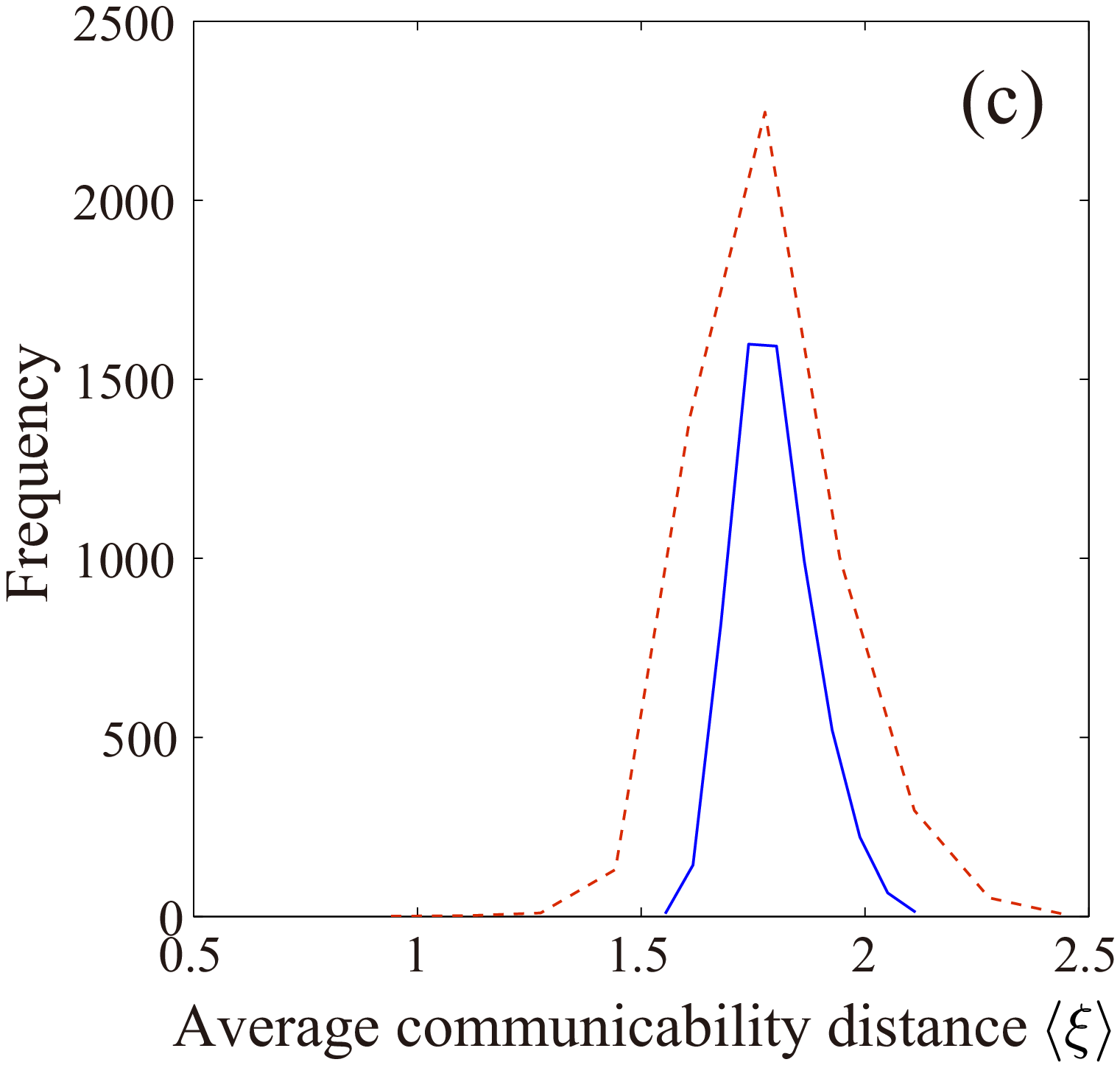}
\vspace{\baselineskip}
\\
\includegraphics[width=0.31\textwidth]{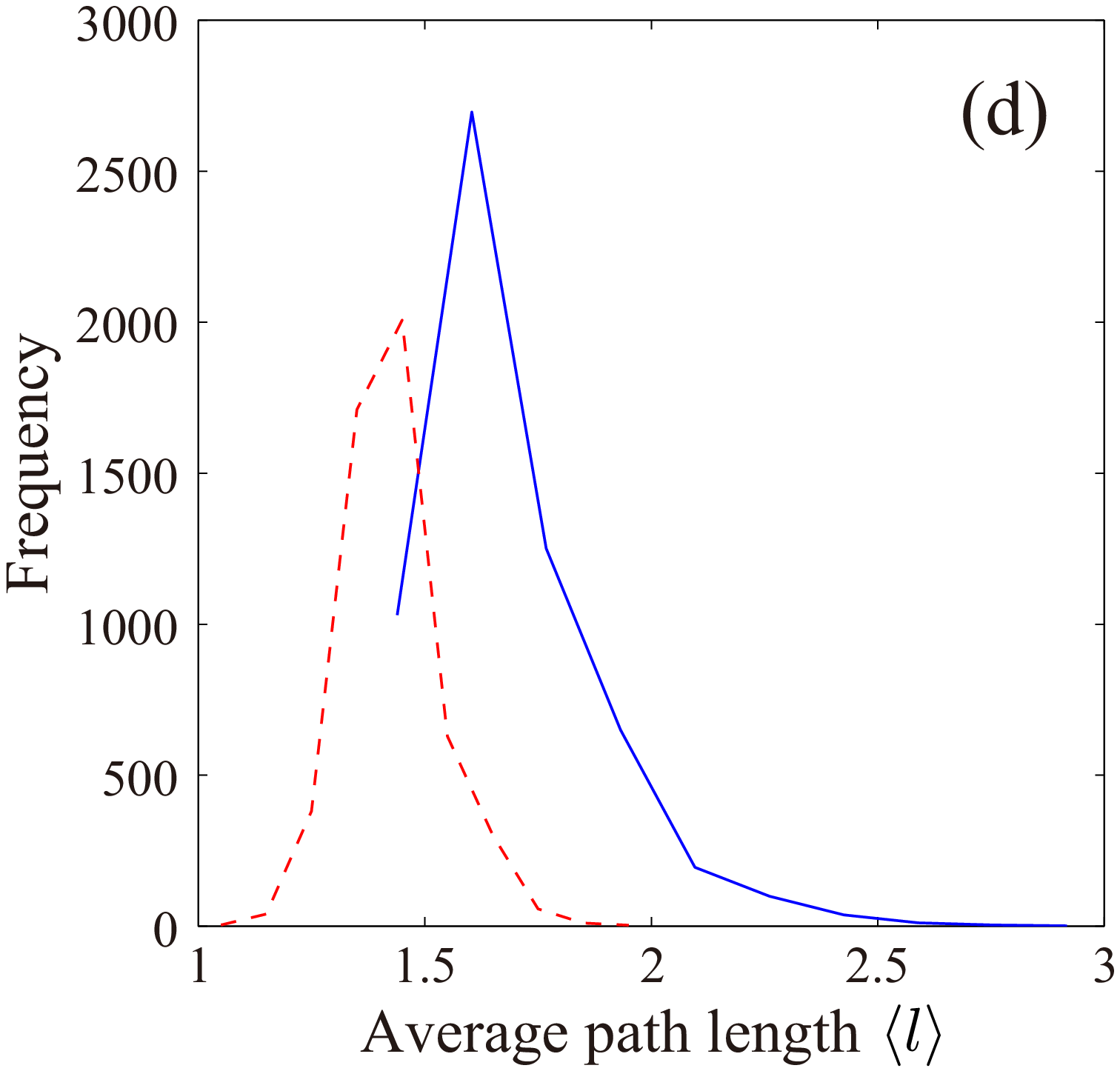}
\hspace{0.02\textwidth}
\includegraphics[width=0.31\textwidth]{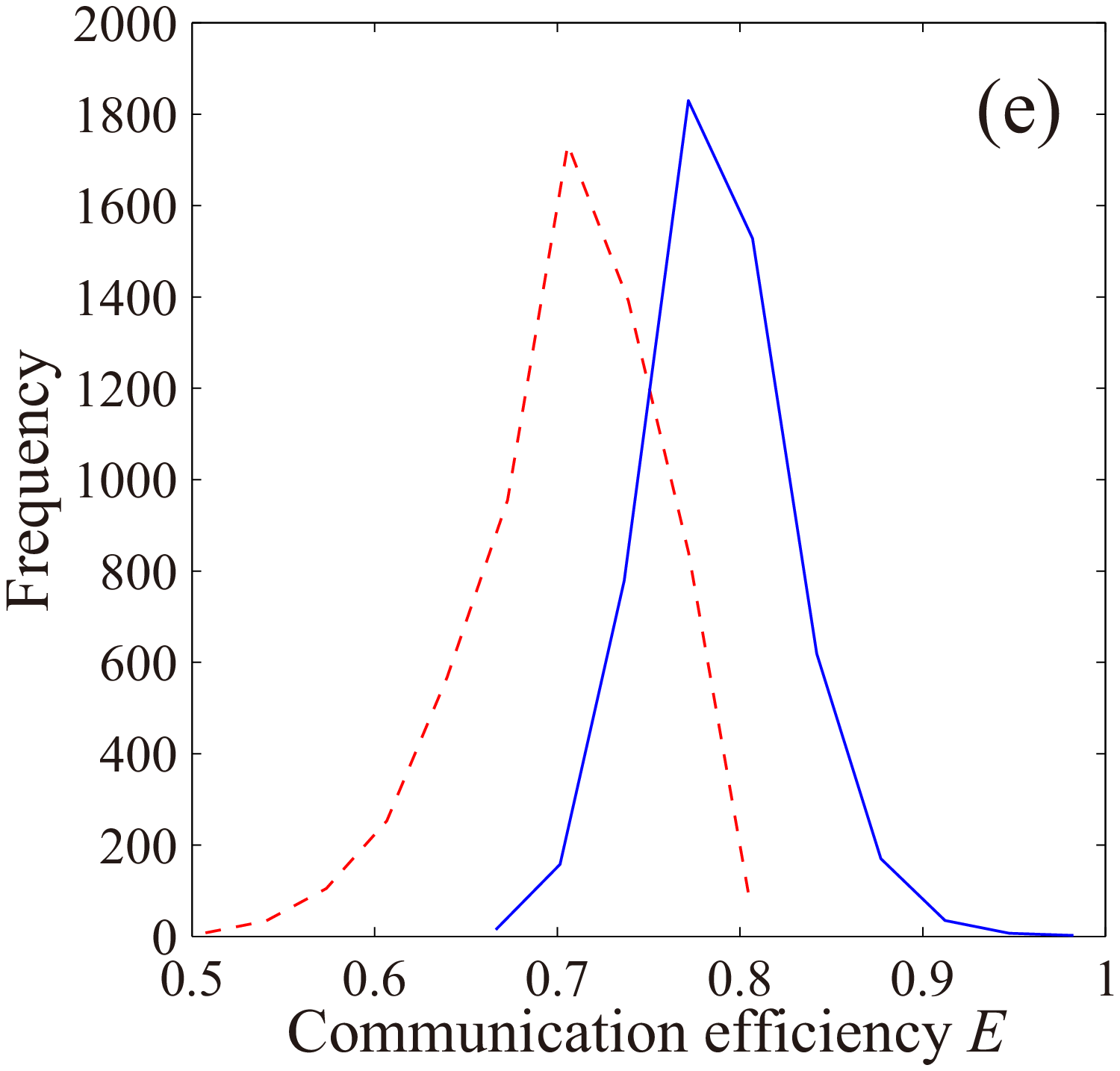}
\hfill
\caption{Frequency of planar and nonplanar graphs for different values of 
(a) the average communicability angle, (b) the average resistance
distance, (c) the average communicability distance, (d) the average
path length, and (e) the communication efficiency. The histogram for planar graphs is shown
as a solid line and that for nonplanar graphs as a broken line.}
\label{fig4} 
\end{figure}
For comparison, we also show similar plots in Fig.~\ref{fig4}(b--d) for the
average resistance distance $\left\langle\Omega\right\rangle$, the communicability distance $\left\langle\xi\right\rangle$ and 
the average path length $\left\langle l\right\rangle$. 

The first interesting observation is that the planar graphs yield
significantly larger values of $\left\langle \theta\right\rangle $
than the nonplanar graphs. The peaks in the histogram Fig.~\ref{fig4}(a)
for the planar and nonplanar graphs are at $\left\langle \theta\right\rangle \approx44.87^{\circ}$
and $\left\langle \theta\right\rangle \approx32.17^{\circ}$, respectively.
There is a larger relative separation between the two peaks of the
histogram for $\left\langle \theta\right\rangle $ than for the rest
of the measures.
Let us put this in a quantitative context. Let us define
the percentage of variation between the maxima of the two peaks as:
$v\left(\%\right)=100\times\left(x_{h}\left(\mathrm{planar}\right)-x_{h}\left(\mathrm{nonplanar}\right)\right)/\left(x_\mathrm{max}-x_\mathrm{min}\right)$,
where $x_{h}\left(\cdots\right)$ is the value of the corresponding
variable for the peak in the histogram, while and $x_\mathrm{max}$ and $x_\mathrm{min}$ are
the maximum and minimum values, respectively, of this variable $x$ for the whole dataset
of 8-node graphs. For instance, for $x=\left\langle \theta\right\rangle $, the
values are $x_{h}\left(\mathrm{planar}\right)=44.87^{\circ}$, $x_{h}\left(\mathrm{nonplanar}\right)=32.17^{\circ}$,
$x_\mathrm{max}=73.55^{\circ}$ and $x_\mathrm{min}=4.19^{\circ}$. Then, the percentages of the variation
between the maxima of the two peaks are: $18.3\%$ for $\left\langle \theta\right\rangle $,
$13.4\%$ for $E$, 
$9.7\%$ for $\left\langle \Omega\right\rangle $, and $7.7\%$ for
$\left\langle l\right\rangle $. As it is obvious from Fig.~\ref{fig4}(c)
this percentage is zero for the communicability distance. We have
repeated these experiments by considering all the 261,080 connected
graphs with 9 nodes, and the results are as follow: $23.2\%$ for $\left\langle \theta\right\rangle $,
$15.3\%$ for $E$, $10.7\%$ for $\left\langle \Omega\right\rangle $, and $9.4\%$ for
$\left\langle l\right\rangle $. Thus, it is clear that the communicability
angle not only shows the best separation between planar and nonplanar
graphs but also has the largest increase in this separation when
increasing the number of nodes.

We can elaborate more on the relation between planarity
and the communicability angle from the analysis of the connected graphs
with 8 nodes: 
(i) No planar graph has $\left\langle \theta\right\rangle <21.4^{\circ}$;
(ii) The planar graphs with the smallest value of $\left\langle \theta\right\rangle $
correspond to the \textit{maximal planar graphs}.
A graph is maximal planar, also known as a \textit{triangulation}, if
the addition of any edge to it results in a nonplanar graph. Obviously,
these are the `least planar' of all planar graphs. Examples are given
in Fig.~\ref{fig5};
\begin{figure}
\centering %
 \includegraphics[width=0.7\textwidth]{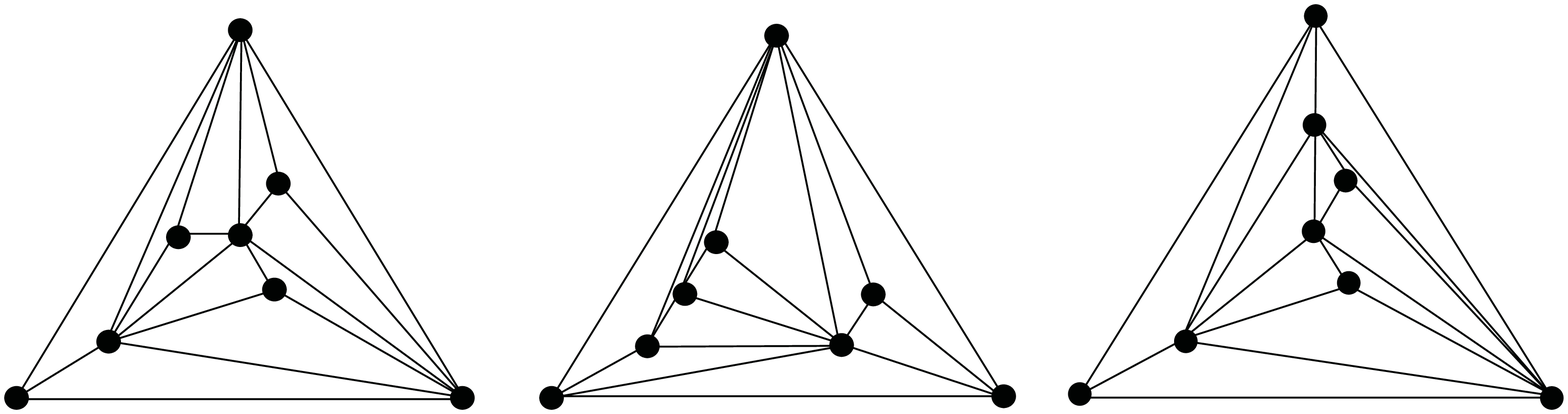}
\caption{Three maximal planar graphs with 8 nodes which have the smallest values
of $\left\langle \theta\right\rangle $. The graphs are drawn as triangulations
using Schnyder embedding~\cite{Schnyder embedding}. Because the graphs
are maximal planar, adding any edge will make the resulting graph
nonplanar.}
\label{fig5} 
\end{figure}
(iii) There is no nonplanar graph with $\left\langle \theta\right\rangle >55.065^{\circ}$;
(iv) The nonplanar graphs with the largest values of $\left\langle \theta\right\rangle $
are minimal nonplanar graphs. A minimal nonplanar graph is a nonplanar
graph for which every proper subgraph is planar, i.e., removing any
node or edge makes the graph planar. Again, these are the `least nonplanar'
of all the nonplanar graphs. Examples are given in Fig.~\ref{fig6}.
\begin{figure}
\centering %
 \includegraphics[width=0.7\textwidth]{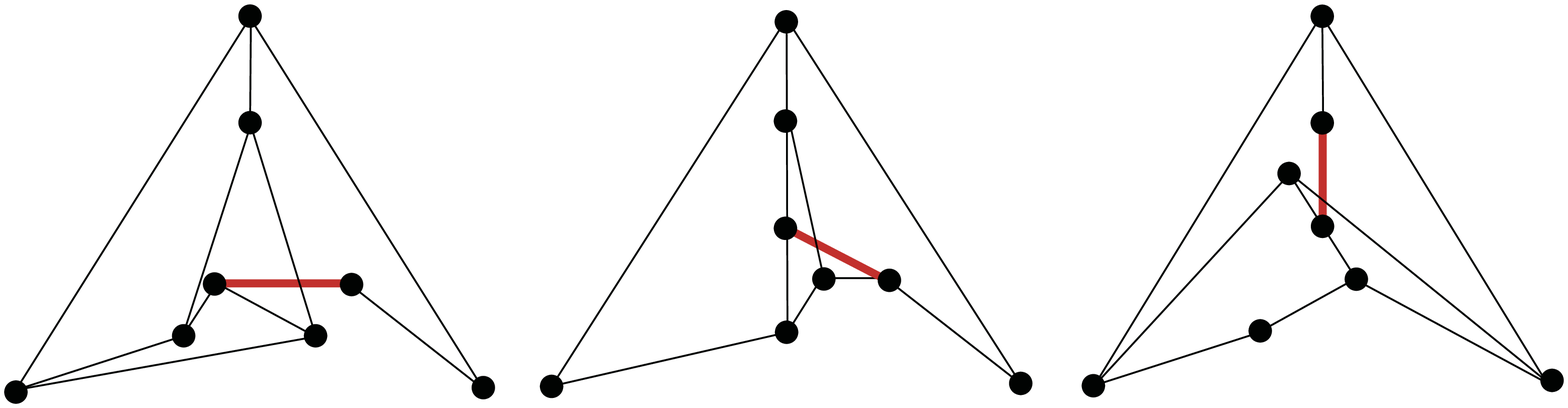}
\caption{Three minimal nonplanar graphs with 8 nodes which have the largest
values of $\left\langle \theta\right\rangle $. The graphs are drawn
using the Schnyder embedding~\cite{Schnyder embedding} and allowing
the superposition of one edge (marked in red thick line), whose removal
will transform the graph into a planar one.}
\label{fig6} 
\end{figure}

The previous results do not necessarily mean that the average communicability
angle characterizes the graph planarity or vice versa, but that the
planarity is indeed an important ingredient of the communication efficiency
as measured by the communicability angle.

\subsection{Communicability angle and graph modularity}

Modularity is a very important concept for the study of real-world
networks. It refers to the property of graphs with clusters of highly
interconnected nodes but with poor inter-cluster connectivity. Such
clusters are usually referred to as communities in network theory
and are expected to play fundamental organizational roles in real-world
networks, \textit{e.g.}, groups of proteins with similar actions and
groups of people with common interests. 

As a first example we construct
random modular graphs in the following way. We generate random modular
graphs with 1000 nodes and 50 modules. Then, with a fixed total edges
density we systematically increases the proportion of edges within
modules compared to edges across modules. As this proportion of intra-modular
edges to inter-modular edges increases, the graphs become more modular
in the sense previously explained. In order to capture the degree of
modularity of these graphs we use the Newman modularity index~\cite{modularity}, which is
defined as
\begin{equation}
Q=\sum_{k=1}^{n_{C}}\left[\dfrac{E_{k}}{m}-\dfrac{1}{4m^{2}}\left(\sum_{j\in V_{k}}k_{j}\right)^{2}\right],
\end{equation}
where $E_{k}$ is the number of edges in the $k$th module, $n_{C}$
is the total number of modules, $m$ the total number of edges and
$k_{j}$ the node degree.

In Fig.~\ref{fig7} we illustrate the results of plotting the modularity
of the random modular graphs and the average communicability angle.
As can be seen, as the modularity tends to its maximum, the average
communicability angle tends to $90^{\circ}$, indicating the decrease
in the spatial efficiency of these graphs.
\begin{figure}
\centering
\includegraphics[width=0.35\textwidth]{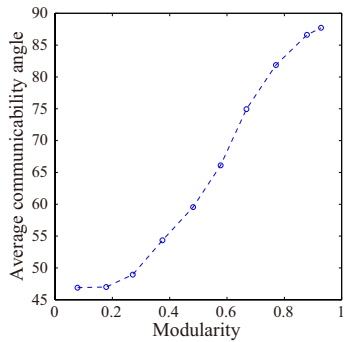}
\caption{Relation between the Newman modularity index~\cite{modularity} and the average communicability
angle for random modular graphs with 1000 nodes and 50 modules. The
total edge density is 0.01 and the proportion of intra- to inter-modular
edges varies from 0.1 to 0.95. The points in the plot indicate the average
of 100 random realizations. The broken line is to guide the eye.}
\label{fig7}
\end{figure}

A network with such clusters
has structural bottlenecks; that is, if small groups of nodes/edges
are removed the network is disconnected into two or more relatively
large connected components. An extreme case are the dumbbell graphs
$K_{n}$-$K_{n}$, that is, two cliques of $n$ nodes connected by
only one edge; the removal of the edge separates the network into
two connected components of $n/2$ nodes each.

On the other hand, a super-homogeneous graph, which is usually referred to as a good expansion graph, 
is characterized by the fact that every subset
$S$ with more than $n/2$ nodes has a large boundary, which is the
number of edges with one node inside the set $S$ and the other in
$\overline{S}$~\cite{Sarnak}. Expander graphs are characterized
by having a large spectral gap $\lambda_{1}-\lambda_{2}$ of the adjacency
matrix~\cite{AlonMilman}; see Refs.~\cite{Expander graphs,Expanders_2}
for details.

What is important for the present subsection is that expanders are
characterized by the lack of modularity, \textit{i.e.}, the lack of
tightly connected clusters which are poorly interconnected by structural
bottlenecks. In networks where $\lambda_{1}\gg\lambda_{2}$, we have
the following expression for the communicability angle: 
\begin{align}
\cos\theta_{pq}=\dfrac{G_{pq}}{\sqrt{G_{pp}G_{qq}}}\simeq\dfrac{\psi_{1}\left(p\right)\psi_{1}\left(q\right)e^{\lambda_{1}}}{\sqrt{\psi_{1}\left(p\right)^{2}e^{\lambda_{1}}\psi_{1}\left(q\right)^{2}e^{\lambda_{1}}}}=\cos0^{\circ}.
\end{align}
That is, the networks lacking any modularity are characterized by
very small value of the communicability angle. On the other hand,
in a network where $\lambda_{1}$ is not significantly larger than
$\lambda_{2}$, we make use of the expansions 
\begin{align}
G_{pp}G_{qq} & =\psi_{1}(p)^{2}\psi_{1}(q)^{2}e^{2\lambda_{1}}+\left(\psi_{1}(p)^{2}\psi_{2}(q)^{2}+\psi_{2}(p)^{2}\psi_{1}(q)^{2}\right)e^{\lambda_{1}+\lambda_{2}}\nonumber \\
 & +\psi_{2}(p)^{2}\psi_{2}(q)^{2}e^{2\lambda_{2}}+\mathrm{h.o.},
\\
{G_{pq}}^{2} & \simeq\psi_{1}(p)^{2}\psi_{1}(q)^{2}e^{2\lambda_{1}}+2\psi_{1}(p)\psi_{1}(q)\psi_{2}(p)\psi_{2}(q)e^{\lambda_{1}+\lambda_{2}}\nonumber \\
 & +\psi_{2}(p)^{2}\psi_{2}(q)^{2}e^{2\lambda_{2}}+\mathrm{h.o.},
\end{align}
where h.o.\ denotes the higher-order terms. The communicability angle
is thereby transformed into the form 
\begin{align}
\cos\theta_{pq}=\dfrac{G_{pq}}{\sqrt{{G_{pq}}^{2}+\left(\psi_{1}(p)\psi_{2}(q)-\psi_{2}(p)\psi_{1}(q)\right)^{2}e^{\lambda_{1}+\lambda_{2}}+\mathrm{h.o.}}}.\label{eq:53}
\end{align}
The second term in the denominator depends on the size of the spectral
gap; 
the closer $\lambda_{2}$ is to $\lambda_{1}$, \textit{i.e.}, the smaller
the spectral gap, the larger the denominator is, and consequently,
the smaller Eq.~\eqref{eq:53} is.
Therefore, the angle $\theta_{pq}$ is larger as the spectral gap is smaller.
We should remark here that $\theta_{pq}$ does not depend only on
the spectral gap because the higher-order terms in Eq.~\eqref{eq:53}
can make an important contribution.

Let us show examples that illustrate the above important relation
between the communicability angle and the graph modularity. Here again
we focus on $\left\langle \theta\right\rangle $. We first consider
the dumbbell graph $K_{3}$-$K_{3}$ shown in Fig.~\ref{fig8}(a).
\begin{figure}
\centering 
\includegraphics[height=0.07\textheight]{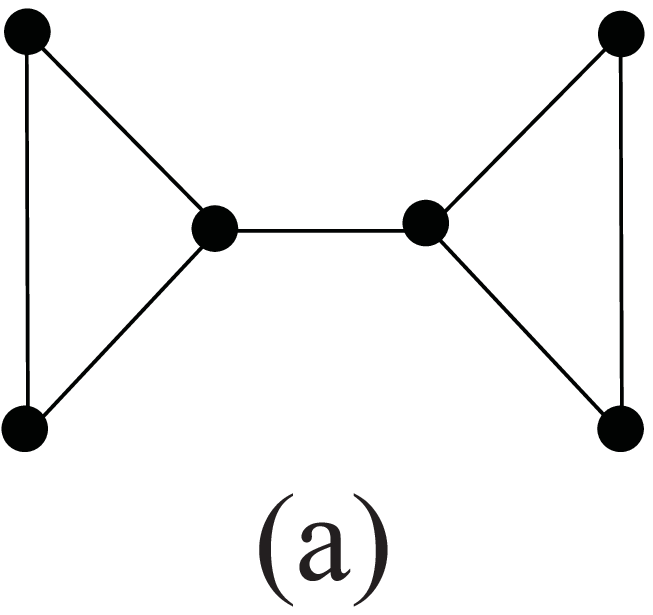} 
\hspace*{0.05\textwidth}
\includegraphics[height=0.07\textheight]{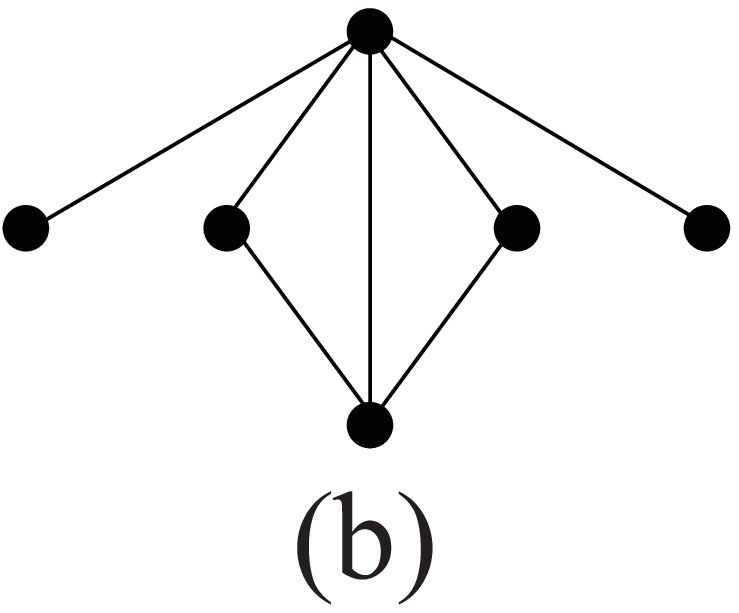} 
\hspace*{0.05\textwidth}
\includegraphics[height=0.07\textheight]{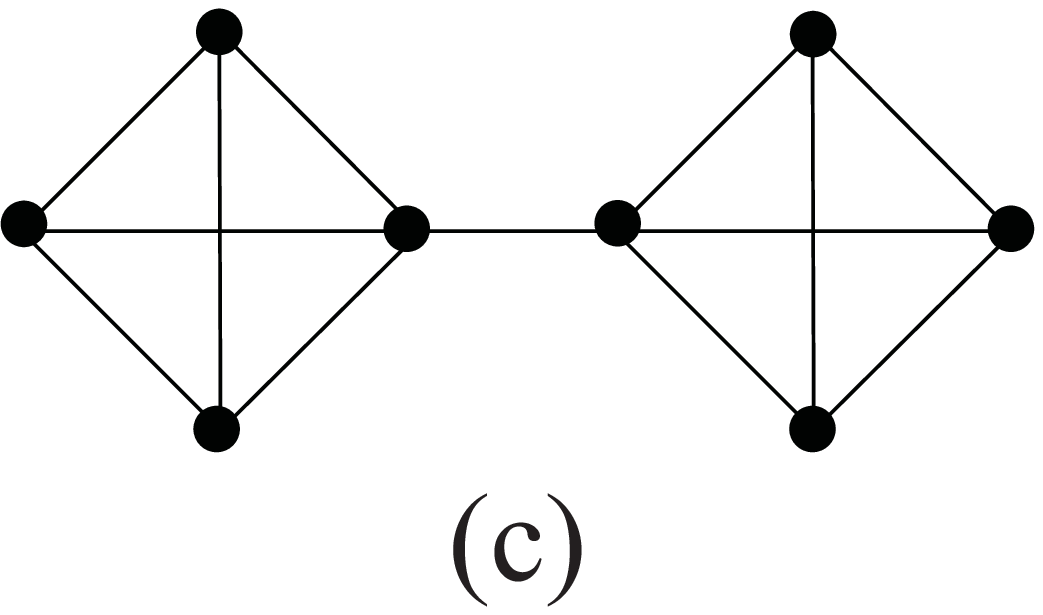} 
\hspace*{0.05\textwidth}
\includegraphics[height=0.07\textheight]{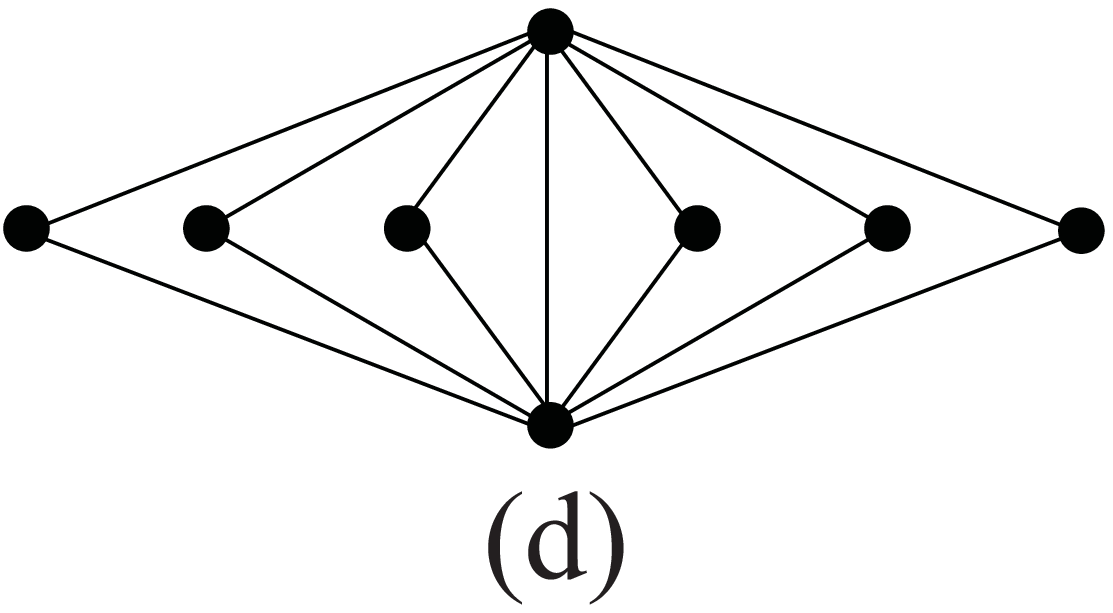} 
\caption{The graphs with 6 nodes and 7 edges (a) with the largest and (b) the
smallest average communicability angles. The same for the graphs with
8 nodes and 13 edges (c) and (d).}
\label{fig8} 
\end{figure}
It consists of two cliques of 3 nodes each, which are connected by
a link, thus having 7 edges in total. The average communicability
angle for this graph is $\left\langle \theta\right\rangle \approx57.105$
and its spectral gap is $\Delta\approx0.682$. Among the 19 graphs
with 6 nodes and 7 edges, the dumbbell $K_{3}$-$K_{3}$ has the largest
value of $\left\langle \theta\right\rangle $. The smallest value
of the average communicability angle is obtained for the graph in
Fig.~\ref{fig8}(b), having $\left\langle \theta\right\rangle \approx47.935$
and $\Delta\approx2.284$.

The situation is very similar for the 1,454 graphs with 8 nodes and
13 edges, among which the dumbbell graph $K_{4}$-$K_{4}$ in Fig.~\ref{fig8}(c)
has the largest average communicability angle $\left\langle \theta\right\rangle \approx53.876$
with the spectral gap $\Delta\approx0.511$. The graph with the smallest
value of $\left\langle \theta\right\rangle $ is the so-called agave
graph shown in Fig.~\ref{fig8}(d); it consists of two connected
nodes each of which is also connected to the other $n-2$ nodes that
are not connected among them. It has $\Delta=4.00$ and $\left\langle \theta\right\rangle \approx31.782$.
The graphs with the second and third smallest average communicability
angles, $\left\langle \theta\right\rangle \approx35.123$ and $\left\langle \theta\right\rangle \approx35.606$
with $\Delta\approx2.988$ and $\Delta\approx3.337$, respectively,
have structures similar to the agave graph. Notice that the agave
graph can be disconnected by removing two edges, but the remaining
principal connected component has $n-1$ nodes, while the removal
of 50\% of the edges in this graph creates a principal connected component
still containing 62.5\% of the nodes. This shows the robustness of
this graph to edge removal, a characteristic of good expander
graphs due to the lack of structural bottleneck.

Figure~\ref{fig9}(a--b) shows planar embeddings of the graphs in
Fig.~\ref{fig8}(a--b), respectively, onto triangular lattices. 
\begin{figure}
\centering %
\begin{minipage}[b]{0.15\textwidth}%
 \vspace*{0mm}
 \includegraphics[width=1\textwidth]{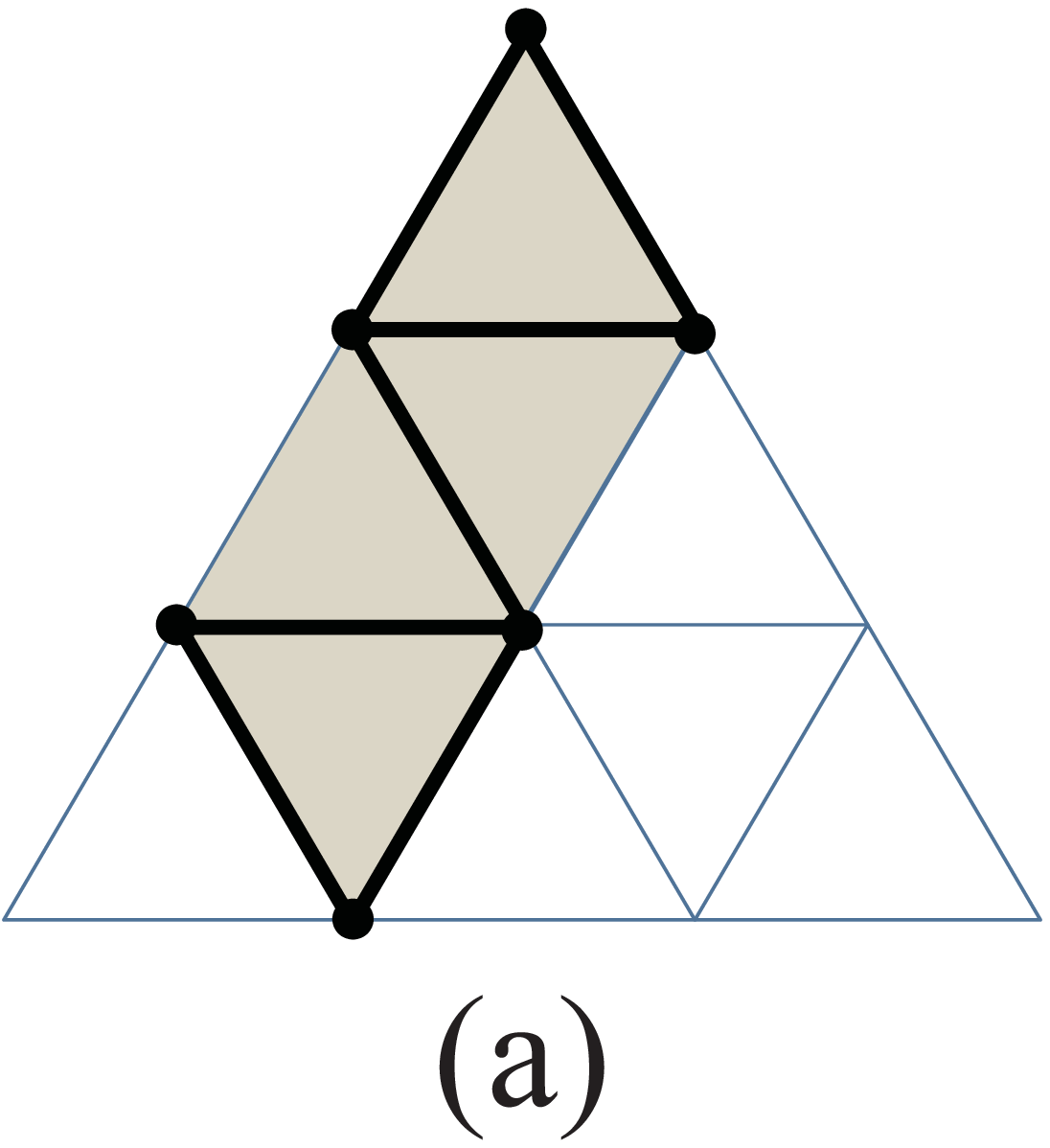} %
\end{minipage}\hspace*{0.05\textwidth} %
\begin{minipage}[b]{0.15\textwidth}%
 \vspace*{0mm}
 \includegraphics[width=1\textwidth]{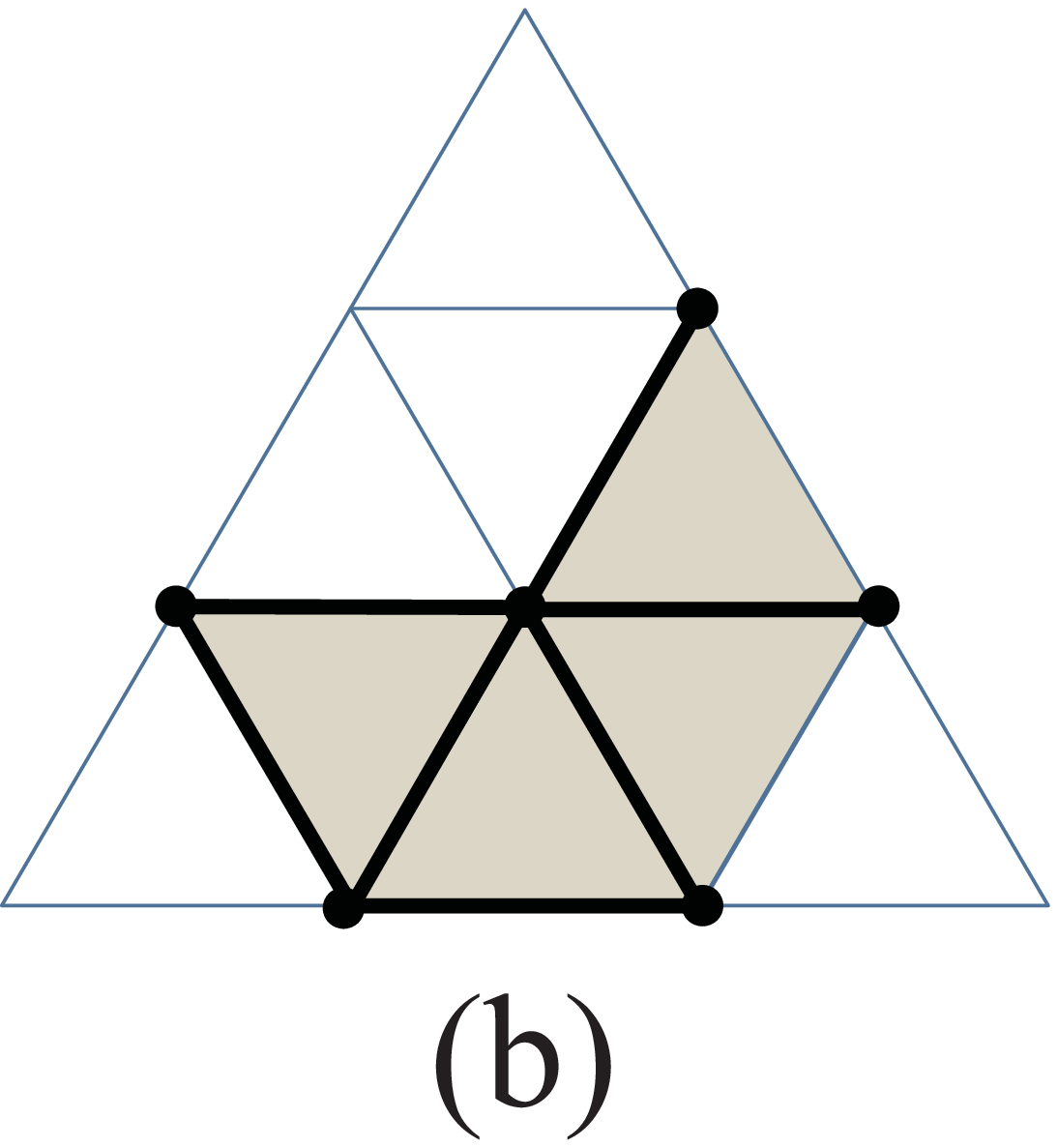} %
\end{minipage}\hspace*{0.05\textwidth} %
\begin{minipage}[b]{0.2\textwidth}%
 \vspace*{0mm}
 \includegraphics[width=1\textwidth]{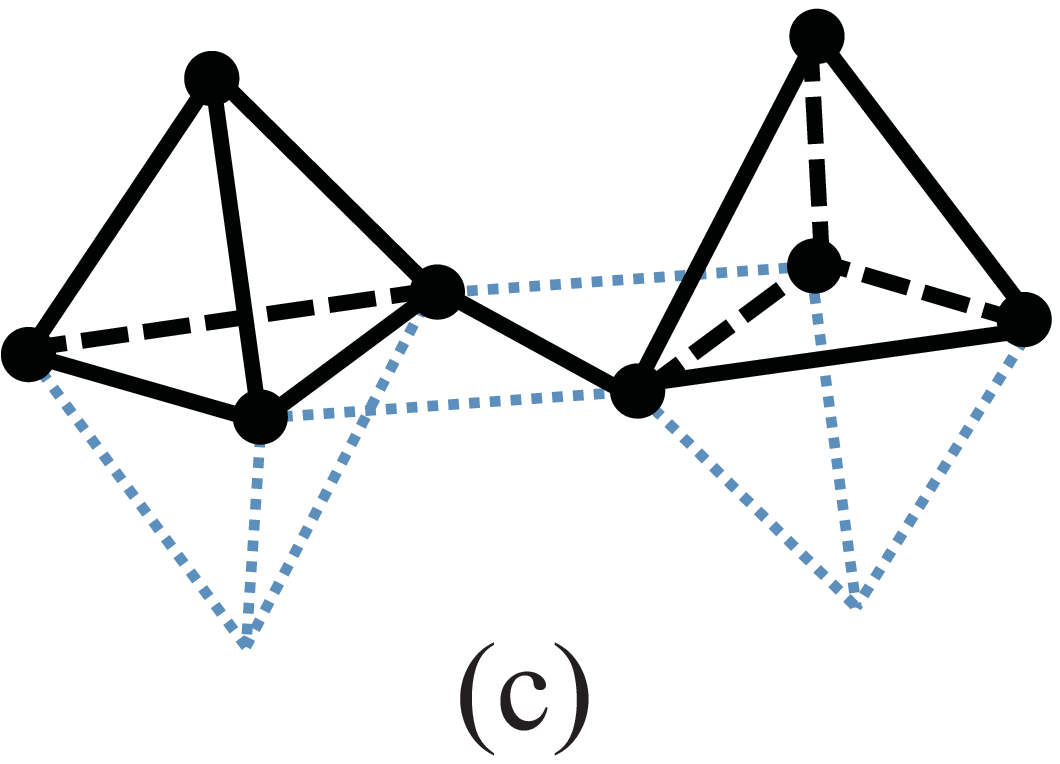} %
\end{minipage}\hspace*{0.05\textwidth} %
\begin{minipage}[b]{0.15\textwidth}%
 \vspace*{0mm}
 \includegraphics[width=1\textwidth]{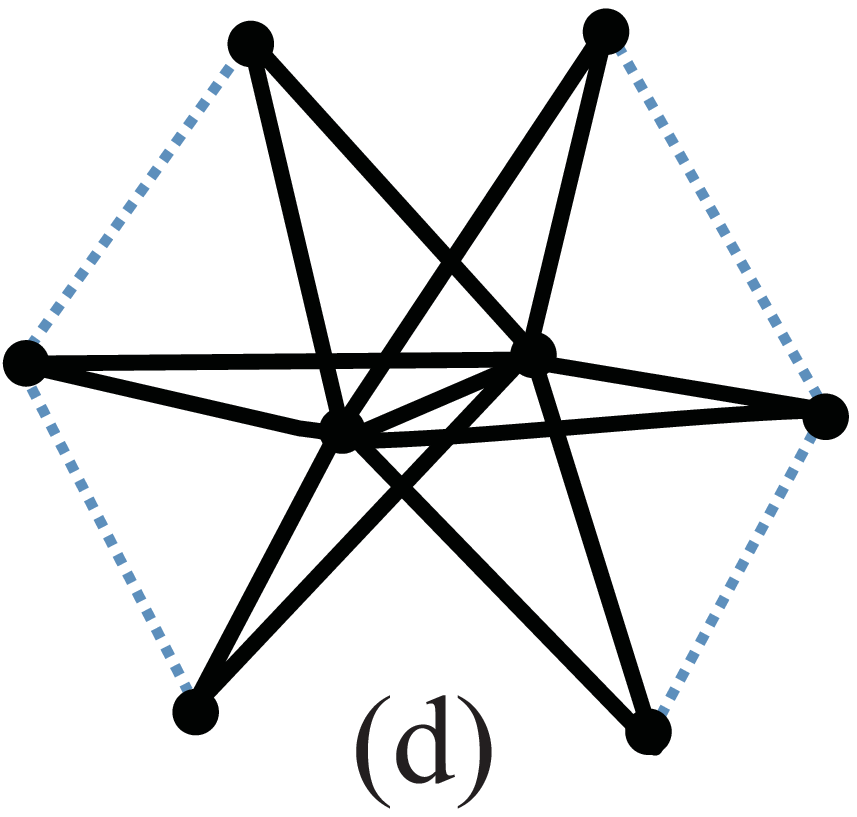} %
\end{minipage}\caption{(a--b) Planar embeddings of the graphs in Fig.~\ref{fig8}(a--b),
respectively, onto triangular lattices. (c--d) Three-dimensional embeddings
of the graphs in Fig.~\ref{fig8}(c--d), respectively, onto close-packed
lattices.}
\label{fig9} 
\end{figure}
The shadowed areas indicate the triangles covered by the graphs in
these embeddings. Although both cover the four triangles, the latter
graph, the one with the smallest average communicability angle, covers
the most efficient packing in two-dimensional space, which is the
area with a node surrounded by six others forming a hexagon. This
is known as the penny-packing problem; see Ref.~\cite{PennyPacking}
for further information. The embedding of the graph with higher modularity
and the largest average communicability angle is far from this optimal
configuration.

A similar situation occurs with the graphs in Fig.~\ref{fig8}(c--d),
the ones with the largest and smallest $\left\langle \theta\right\rangle $
among those with 8 nodes and 13 edges; Fig.~\ref{fig9}(c--d) show
their embeddings onto close-packed lattices. We can conclude from
these observations that a large average communicability angle indicates
a poor spatial efficiency of the graph, while a small value of $\left\langle \theta\right\rangle $
is associated to the efficient use of space.

\subsection{Communicability angle and graph holes}

Another characteristic of spatial efficiency that
is desirable to be captured by the average communicability angle is
the existence of holes. The presence of large holes in
a graph obviously makes its spatial efficiency very poor. For instance,
let us consider a city in which all the street form annulus such that
the whole center of the city is empty. The density of streets in that
city is very small in comparison to what it is expected from the area
occupied by the whole city.

Here we propose to consider the Sierpinski graphs
as a model of simple graphs embedded in a Euclidean space such that
the density of the graphs decays with the size. By the density we
mean here the number of nodes divided by the area occupied by the
corresponding external triangle. Let us denote by
\begin{equation}
\vec{e}_{1}=\left(1,0,0\right),\:\vec{e}_{2}=\left(0,1,0\right),\:\vec{e}_{3}=\left(0,0,1\right)
\end{equation}
the canonical basis vectors of $\mathbb{R}^{3}$.
The Sierpinski graphs are generated iteratively from $G_{0}=\left(V_{0},E_{0}\right)$,
where $V_{0}=\left\{ \vec{e}_{1},\vec{e}_{2},\vec{e}_{3}\right\} $
and $E_{0}=\left\{ \left(\vec{e}_{1},\vec{e}_{2}\right),\left(\vec{e}_{2},\vec{e}_{3}\right),\left(\vec{e}_{3},\vec{e}_{1}\right)\right\} $.
Then, for $G_{k}=\left(V_{k},E_{k}\right)$ we have~\cite{Sierpinski}
\begin{align}
V_{k>0} & =\left(2^{k-1}\vec{e}_{1}+V_{k-1}\right)\cup\left(2^{k-1}\vec{e}_{2}+V_{k-1}\right)\cup\left(2^{k-1}\vec{e}_{3}+V_{k-1}\right),\\
E_{k>0} & =\left(2^{k-1}\vec{e}_{1}+E_{k-1}\right)\uplus\left(2^{k-1}\vec{e}_{2}+E_{k-1}\right)\uplus\left(2^{k-1}\vec{e}_{3}+E_{k-1}\right),
\end{align}
where $\uplus$ represents the disjoint union of sets.
We illustrate in Fig.~\ref{fig10} the Sierpinski graphs $G_{1}$,
$G_{2}$ and $G_{3}$. The total area occupied by the graph is the
area of the external triangle which have coordinates $\left(2k,0,0\right),\left(0,2k,0\right),\left(0,0,2k\right)$.
Notice that the Sierpinski graphs $G_{0}$ and $G_{1}$ do not have
any holes, $G_{2}$ has a central hole of length 6 and $G_{3}$ has
a central hole of length 12 plus 3 holes of length 6. 
As the graph grows, $G_k$ has the central hole of length $2^{k-1}\times3$ with more holes of smaller sizes, and hence
becomes more `spongy.'
\begin{figure}
\begin{center}
\includegraphics[width=0.6\textwidth]{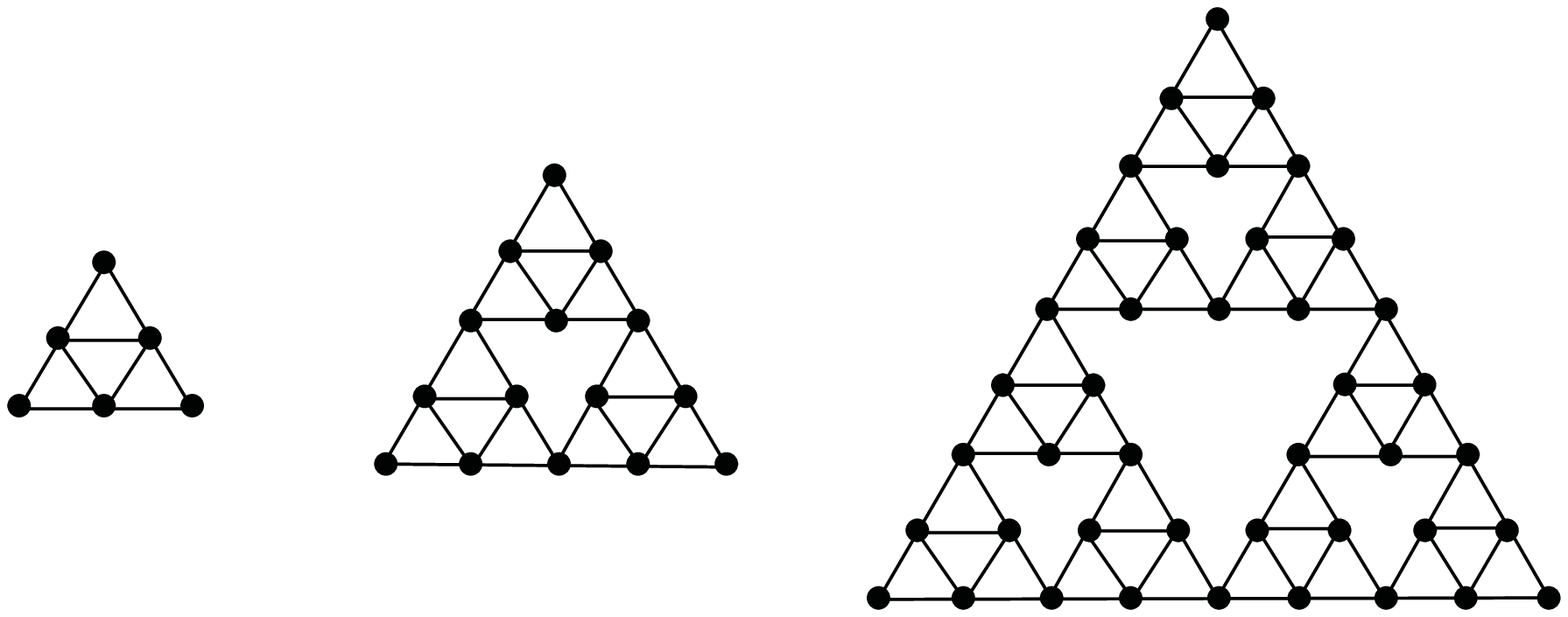}
\end{center}
 \caption{The Sierpinski graphs $G_{1},$$G_{2}$ and $G_{3}$ (from left to right).}
\label{fig10} 
\end{figure}

We have created the Sierpinski graphs for $k=1,\cdots,7$
and calculated their densities defined as the number of nodes divided
by the area of the external triangle. We illustrate in Fig.~\ref{fig11}(a) the relation between the density of the Sierpinski
graphs and the average communicability angle. 
\begin{figure}
\centering
\includegraphics[width=0.35\textwidth]{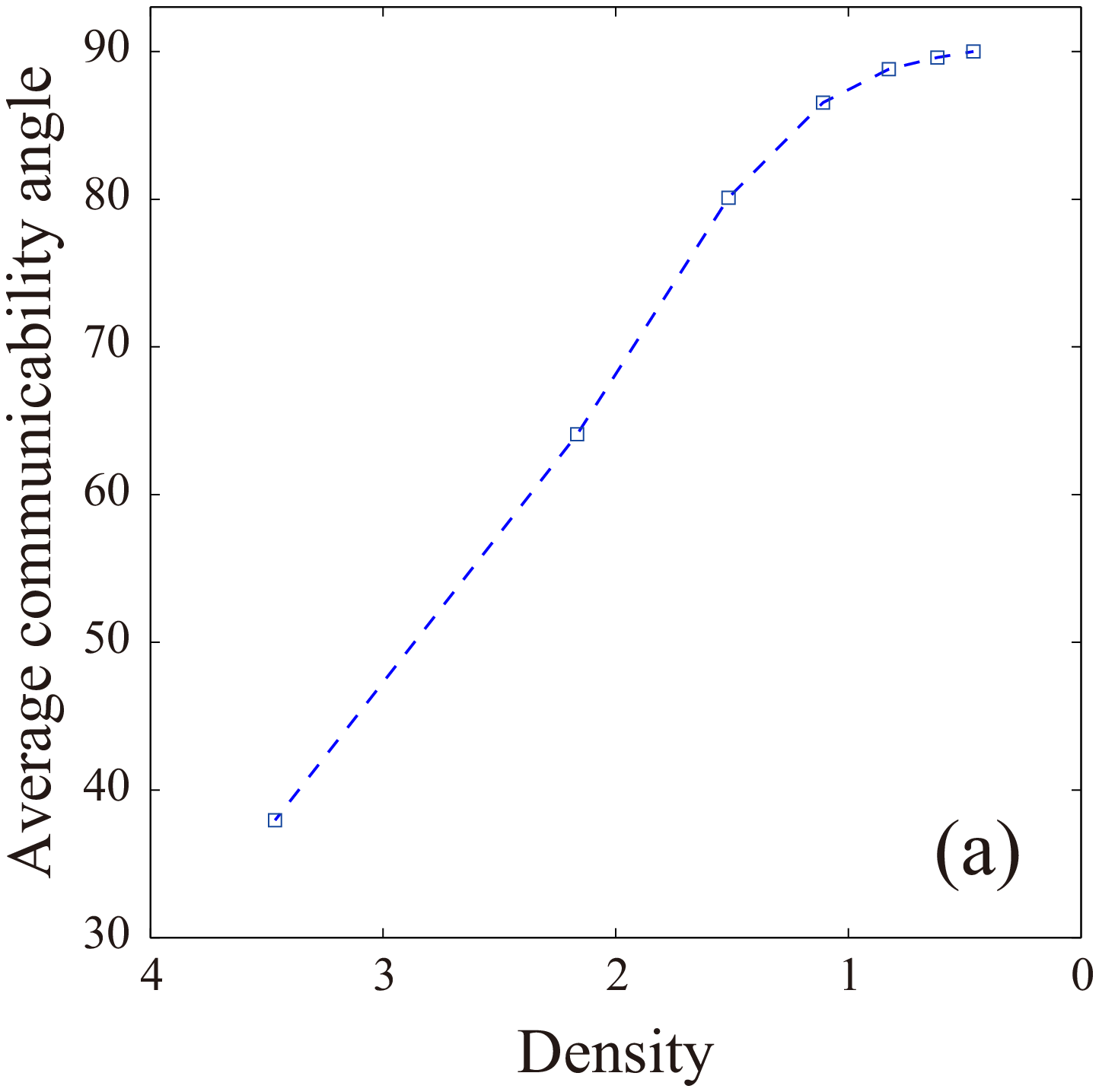}
\hspace{0.05\textwidth}
\includegraphics[width=0.35\textwidth]{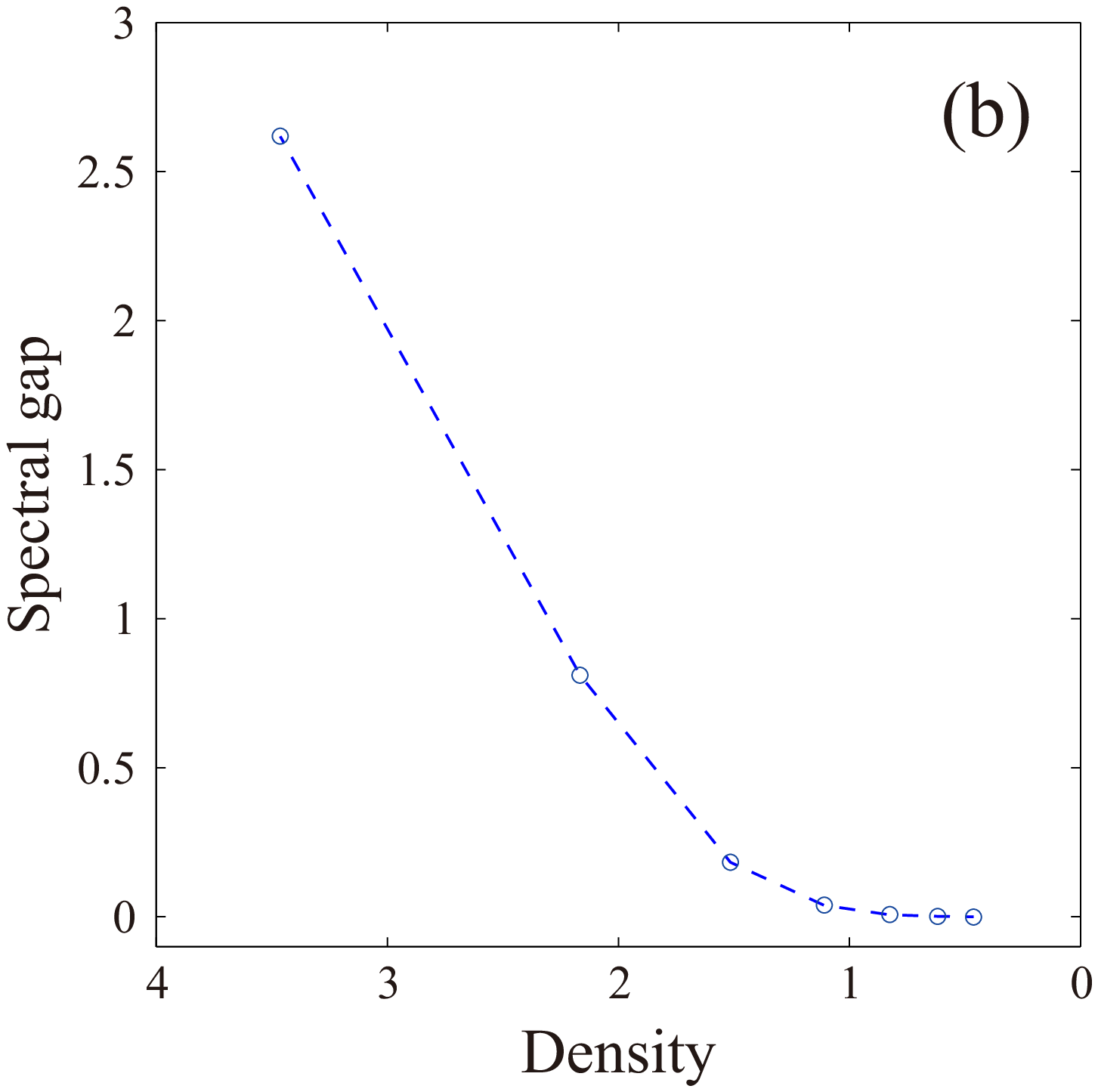}
\caption{Plots of (a) the variation of the average communicability angle
and (b) the spectral gap of the adjacency matrix with the density
of the Sierpinski graphs. The broken lines here are to guide the eye.
Notice that the $x$ axis has a reverse scale.}
\label{fig11}
\end{figure}
For $G_{1}$, which
contains no hole, the communicability angle is $\left\langle \theta\right\rangle \approx37.96^{\circ}$
although the graph is planar. As the size of the graphs increases
the average communicability angle quickly goes to its maximum for
simple graphs, \textit{e.g.}\ $\left\langle \theta\right\rangle \approx90^\circ$
for $G_{7}$, which has 3,282 nodes. 
The results illustrated in Fig.~\ref{fig11}(a)
agrees with our intuition that the communicability angle accounts
for the spatial efficiency of graphs. A Sierpinski graph with a large
number of nodes, containing very large holes, \textit{e.g.}\ the graph $G_{7}$
has a central hole of length 192, as well as many other holes of smaller
sizes, lacks spatial efficiency
in the sense of not using appropriately all the available space covered
by the external triangles. In order to understand mathematically this
relation we need to use the concept of isoperimetric number~\eqref{eq2.3}.
We recall that a graph with a small isoperimetric number contains structural
holes and/or bottlenecks, which are indications of poor spatial efficiency.
Thus, we should expect that a large Sierpinski graph has a very small
isoperimetric constant. Mohar~\cite{Mohar isoperimetric} has found
the following spectral bounds for the isoperimetric number of a graph:
\begin{equation}
\dfrac{1}{2}\left(\delta-\lambda_{2}\right)\leq i\left(G\right)\leq\sqrt{\Delta^{2}-\lambda_{1}^{2}},
\end{equation}
where $\delta$ and $\Delta$ are the minimum and
maximum degree of the graph, respectively, and $\lambda_{j}$ are the eigenvalues of
the adjacency matrix in a nonincreasing order as before. Consequently,
for graphs with bounded maximum and minimum degree --- such as the Sierpinski
graphs, where $\delta=2$ and $\Delta=4$ --- the isoperimetric number
is determined by the spectral gap $\lambda_{1}-\lambda_{2}$. A large
spectral gap indicates a large isoperimetric number, while a small
spectral gap indicates a small isoperimetric number. 
We illustrate in Fig.~\ref{fig11}(b) the plot of the density of the Sierpinski graphs against
the spectral gap of their adjacency matrices. As can be seen, the
Sierpinski graphs with small density, i.e., those with large number
of nodes, have very small spectral gaps, and consequently small isoperimetric
numbers. 

On the contrary, if a graph has a large spectral gap,
i.e., $\left(\lambda_{1}-\lambda_{2}\right)\rightarrow\infty$, the
communicability function is given by
\begin{equation}
G_{pq}\rightarrow\psi_{1,p}\psi_{1,q}\exp\left(\lambda_{1}\right),
\end{equation}
which implies that
\begin{equation}
\theta_{pq}\rightarrow0^{\circ},\forall p,q\in V.
\end{equation}
That is, a large isoperimetric number indicates that
the graphs have a large spectral gap. At the same time, a large spectral
graph indicates that the communicability angle is very small for every
pair of nodes in the graph. As we have seen a small spectral gap, and
consequently a small isoperimetric number, gives rise to a large communicability
angle as in the case of large Sierpinski graphs. This conclusion
again supports our idea that the communicability angle is a good indicator
of the spatial efficiency of a given network.

\subsection{Conclusions of the computational analysis of simple graphs}

The main conclusion of Section~\ref{sec6} is the
following: the average communicability angle very well describes a
graph characteristic which represents their spatial efficiency. It
is drawn from the following observations.  
First, planar graphs are not spatially efficient graphs; at the same time
they have large average communicability angles. On the contrary, highly
nonplanar graphs more efficiently use the available space; at the
same time they have smaller values of $\left\langle \theta\right\rangle $.
Second, a modular graph uses the available space less effectively
than a nonmodular one; at the same time, modular graphs have relatively
large values of the average communicability angle. Third,
graphs containing structural holes, which are not spatially efficient,
display large communicability angles, while those having large isoperimetric
numbers and consequently good spatial efficiency have communicability
angles close to zero.

We should, however, be careful in analyzing more complex situations
in which combinations of properties, such as nonplanarity
and modularity, or nonplanarity and structural hole, are present.
In general, we consider that graphs with relatively small values of
the average communicability angle exhibit higher spatial efficiency
than those with relatively larger values.

\section{Communicability angle in real-world networks}

We start this section by considering the average communicability angle
of a series of 120 complex networks arising from various scenarios.
The networks are briefly described in Supplementary Information accompanying
this paper, where references to the original datasets are provided. The series includes networks in which the nodes
and links are clearly embedded into geometrical spaces, such as urban
street networks, networks formed by animal nests, brain and neural
networks, protein-residue networks as well as electronic circuits
and the Internet. It also includes networks in which the nodes and
links can hardly be allocated to geographic positions, such as food
webs, social networks and software networks. The biomolecular networks
including protein-protein interaction and gene transcription networks
are also non-geographically embedded ones.

\subsection{Global properties of the communicability angle}

The 120 real-world networks studied here cover the whole spectrum
of values of the average communicability angle from $\left\langle \theta\right\rangle \approx10^{-5}{}^\circ$
for the food web of Shelf to $\left\langle \theta\right\rangle \approx89.9^\circ$
for the Power Grid network of western USA. 

The average communicability angle of these real-world networks is
not correlated to the average path length, the communication efficiency
or the resistance distance (see Supplementary Information accompanying
this paper). Just to mention an example, let us consider the network of
galleries created by ants and the collaboration network associated
with Linux open-source software system (see Supplementary Information
for details). The first network is planar due to the fact that ants
are obliged to create their corridors and galleries in a very thin
layer of sand. The second one is a highly nonplanar network. Both
networks have the communication efficiency $E\approx0.24$; according
to this index the two graphs are equally efficient in transmitting information,
something hard to believe taking into account their different topologies
and functionalities. The average communicability angle, on the other hand, clearly indicates 
the fact that the software network is highly efficient $\left\langle \theta\right\rangle \approx3.47^\circ$
while the ant network is very inefficient $\left\langle \theta\right\rangle \approx85.51^\circ$.
There are many more examples that can be extracted from the information
provided in the Supplementary Information accompanying this paper,
all of which point to the fact that the average communicability angle
is a good index to account for communication and spatial efficiency
of networks. 

The histogram in Fig.~\ref{fig12}
shows two prominent peaks at $0^\circ\leq\left\langle \theta\right\rangle \leq9^\circ$
and at $81^\circ\leq\left\langle \theta\right\rangle \leq90^\circ$. 
\begin{figure}
\centering 
\includegraphics[width=0.4\textwidth]{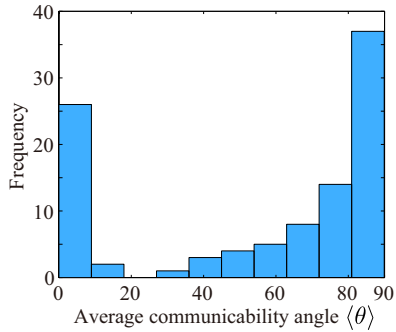} 
\caption{Histograms of the average communicability angle in 120 real-world
networks with the bin size of $9^{\circ}$.}
\label{fig12} 
\end{figure}
A more detailed view (not shown) indicates that the highest frequency
occurs at $0^\circ\leq\left\langle \theta\right\rangle \leq1^\circ$, followed
by the one at $89^\circ\leq\left\langle \theta\right\rangle \leq90^\circ$. That
is, the real-world networks are very much polarized into the two extremes;
either they have very small values of the communicability angle or
very large ones.

Certain classes of networks have a large homogeneity in the values
of the average communicability angle. The 1997 and 1998 versions of
the Internet at Autonomous System (AS) have the average communicability
angles of $0.78^\circ$ and $0.42^\circ$, respectively. There is also a large homogeneity
among the brain/neural networks, namely, the visual-cortex networks
of cat and macaque as well as the neural network of \textit{C.~elegans},
which have $\langle\!\langle\theta\rangle\!\rangle=1.77^\circ\pm1.66^\circ$,
where the double brackets $\langle\!\langle\cdots\rangle\!\rangle$ denote
the average value of the average communicability angles for a series
of networks. In addition, the classes of urban street networks formed
by 14 networks and the one of protein-residue networks formed by 40
networks also show remarkable homogeneity. For instance, the urban
street networks have $\langle\!\langle\theta\rangle\!\rangle=86.07^\circ\pm5.07^\circ$
and the protein-residue networks have $\langle\!\langle\theta\rangle\!\rangle=78.83^\circ\pm7.28^\circ$.
The ranking of the 14 cities in the former is: Barcelona $<$ Rio
Grande $<$ Yuliang $<$ Chegkan $<$ Atlanta $<$ Berlin $<$ Rotterdam
$<$ Hong Kong $<$ Mecca $<$ Cambridge $<$ Oxford $<$
Ahmedabad $<$ Milton Keynes. This means that in terms of the effective
communication among the different regions of the city, Barcelona is
the most effective one, while Milton Keynes the worse.

The homogeneity among the protein-residue networks is more unexpected
than that among the urban street networks because they represent three-dimensional
(3D) objects. Proteins are folded into 3D structures forming topologies
consisting of a mix of $\alpha$-helices and $\beta$-sheets.
They also have different shapes and sphericities. It is therefore
surprising that the protein-residue networks are characterized by
very large values of the communicability angle, which are more characteristic
of planar or almost planar networks, as demonstrated for the urban
street networks.

Although we will go back below to the relation between the communicability
angle and the structure of proteins, let us make a comment here. The
fact that proteins are embedded into the 3D physical space does not
necessarily mean that their residue networks are nonplanar. The same
applies to other naturally evolving networks, such as the networks
of galleries and corridors formed by termite mounds, which are also
characterized by very large average communicability angles with $\langle\!\langle\theta\rangle\!\rangle=88.33^\circ\pm1.01^\circ$.
Although the mounds are constructed in the 3D space, they are remarkably
close to planar graphs; we have indeed found that by removing only
6\% of the edges of these networks the graphs representing them become
planar. Both the termite mounds and the protein-residue networks have
certainly evolved in the 3D space, but the networks must be close
to planar graphs for different ecological or biological reasons. In
the termite mounds the use of a large volume of the 3D space is needed
to produce a ventilation system necessary to discharge the carbon
dioxide produced in its interior. For protein, structures close to
planar ones are needed to avoid high compactness that destroy the
internal cavities of the protein needed for developing their functions;
see Section~7.2 below.

On the other hand, the values of $\left\langle \theta\right\rangle $
obtained for the software networks~\cite{Myers Software} are unexpectedly
heterogeneous. These networks yield $\langle\!\langle\theta\rangle\!\rangle=57.6^\circ\pm30.7^\circ$
with the values ranging from $\left\langle \theta\right\rangle \approx3.465^\circ$
for Linux to $\left\langle \theta\right\rangle \approx84.323^\circ$ for
XMMS. The ranking of these networks in terms of the average communicability
angle is: Linux $<$ MySQL $<$ VTK $<$ Abi Word $<$ Digital Material
$<$ XMMS. The classes of social and biological networks consisting
of 14 and 11 networks, respectively, also show relatively large variability
in their values of the communicability angle: $\langle\!\langle\theta\rangle\!\rangle=55.8^\circ\pm21.3^\circ$,
and $\langle\!\langle\theta\rangle\!\rangle=63.3^\circ\pm17.0^\circ$, respectively.
This is not surprising; we can easily associate it to the diversity
of networks in these classes.

What is really surprising is that the food webs, which form a very
homogeneous class of networks in terms of the relations accounted
for them, yield a relatively large standard deviation in the values
of the communicability angle: $\langle\!\langle\theta\rangle\!\rangle=7.1^\circ\pm16.1^\circ$
with the values ranging from $\left\langle \theta\right\rangle \approx10^{-5}{}^\circ$
for the marine system of Shelf to $\left\langle \theta\right\rangle \approx78.356^\circ$
for the web of the English grassland. The ranking of these food webs
in terms of the average communicability angle is: Shelf $<$ Elverde
$<$ Skipwith $<$ ReefSmall $<$ LittleRock $<$ Stony $<$ Coachella
$<$ Canton $<$ Benguela $<$ BridgeBrook $<$ Ythan2 $<$ Ythan1
$<$ StMartins $<$ StMarks $<$ ScotchBroom $<$ Chesapeake $<$
Grassland.

In terms of the individual values of $\left\langle \theta\right\rangle $,
the results obtained for these 120 networks agree with our findings
in the previous section. The largest average communicability angles
are observed for the Power Grid of western USA and urban street networks,
which are planar or almost planar with both nodes and edges embedded
into a plane. On the other extreme of the smallest average communicability
angles, there are networks which are highly nonplanar, such as the
USA air transportation network, a world trade network, the Internet
at AS, and brain/neural networks. All these networks have nodes embedded
into two- or three-dimensional spaces, such as cities, countries or
organs, but the edges connecting them very efficiently use the available
space. We would like to remark here that the small values of $\left\langle \theta\right\rangle $
observed in some classes of networks do not necessarily mean a high
interconnection density. For instance, the USA airport transportation
network and the two versions of the Internet studied here have relatively
small edge densities: 0.039 and 0.0011, respectively.

\subsection{Communicability angle and spatial efficiency of proteins}

We have accumulated several pieces of empirical evidence that support
the idea that the average communicability angle accounts for the spatial
efficiency of graphs. It is, however, generally difficult to find
quantitative measures of the spatial efficiency in real-world complex
networks to compare with the communicability angle.

An exception to this is provided by proteins, which are 3D objects
characterized by different degrees of packing or spatial efficiency.
In this section we study the relation between the average communicability
angle and the spatial efficiency of the protein-residue networks for
a group of 40 proteins whose 3D structures have been resolved by X-ray
crystallography and deposited in the protein databank (PDB)~\cite{PDB}.
Here each node represents an amino acid in the protein and two nodes
are connected if the corresponding amino acids are separated at a
distance of no more than 7\AA \ in the 3D structure of the protein
as determined experimentally~\cite{Residue networks}.

A protein is a linear sequence of amino acids connected by peptide
bonds. The chain is folded into a 3D shape unique to each protein.
While the amino-acid sequence forms the so-called primary structure
of the protein, the 3D folding defines its secondary and tertiary
structures. The secondary one is characterized by the presence of
the $\alpha$-helices and the $\beta$-sheets, while the tertiary
one is formed by global positioning of the secondary one into a 3D
shape that gives the protein its globular-like structure~\cite{Protein folding}.
The folding of the proteins is the consequence, \textit{grosso modo},
of two main necessities that the protein has: (i) protecting the hydrophobic
amino acids from their contact with water; (ii) occupying a minimum
space inside the limited volume of the cell. Thus the packing of a
protein is related to its spatial efficiency~\cite{Protein packing},
which is responsible for many of its physico-chemical and biological
properties.

There are many ways of quantifying the packing of a protein, but here
we consider the following one. Let $V_{e}$ be the volume of a protein
which is expected from its ideal 3D structure and let $V_{o}$ be
the volume which is actually observed in its X-ray crystallography.
We then define the relative deviation from its ideal volume as 
\begin{align}
P=\dfrac{V_{e}-V_{o}}{V_{e}}.
\end{align}
Hereafter we call $P$ the relative packing efficiency of the protein.
A positive value of $P$ means that the protein is more packed than
expected from its ideal 3D structure, that it is highly efficient
in using the 3D space, at least relatively to the ideal structure.
A negative value of $P$, on the other hand, means that it is less
packed than expected, that it is not spatially efficient. We should
mention here that values that deviate very much from the expected
or ideal values can indicate possible problems with the structure
and as such should be discarded from the analysis.

Using computational techniques and VADAR software described in Ref.~\cite{VADAR},
we have calculated the expected and observed volumes of the 40 proteins.
We show in Fig.~\ref{fig13} the relation between the relative packing
efficiency $P$ and the average communicability angle of the 40 proteins.
\begin{figure}
\centering\includegraphics[width=0.4\textwidth]{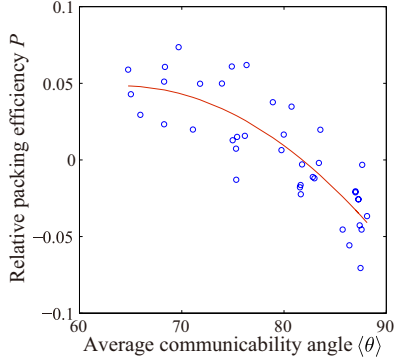} 
\caption{Linear correlation between the average communicability angle of proteins
represented by residue networks and the relative packing efficiency.}
\label{fig13} 
\end{figure}
The Pearson correlation coefficient is $R=-0.837$, indicating a significant
correlation between the two variables. We can summarize the results
as follows: 
(i) proteins with poor spatial
efficiency, $P<0$, have $\left\langle \theta\right\rangle >81^{\circ}$;
(ii) those with high spatial efficiency, $P>0$, have $\left\langle \theta\right\rangle <80^{\circ}$.
In other words, small average communicability angles
are related to high spatial efficiency of proteins while large average
communicability angles with a poor use of space. We notice in passing
that there are no proteins with $\left\langle \theta\right\rangle <60^{\circ}$,
which can be explained by the fact that increasing too much packing
would make the internal cavities of the protein disappear~\cite{Protein packing}.
The internal cavities are responsible for the interaction
of proteins with other biological molecules and usually play a fundamental
role in their functionality. In general, we can conclude
that proteins are spongy in a similar way as the Sierpinski graphs
are. 

Possibilities which the communicability angle brings to the analyses
of the structure of spatially embedded complex networks obviously
go beyond the use of $\left\langle \theta\right\rangle $. For instance,
the contour plot of the communicability angle for every pair of residues
in a protein can reveal important properties of its 3D structure.
Figure~\ref{fig14} shows an example of the protein with PDB code
1amm, which corresponds to the GammaB crystalline, whose crystallographic
analysis was carried out at 150K.
\begin{figure}
\centering %
\begin{minipage}[t]{0.4\textwidth}%
 \vspace{0mm}
 \includegraphics[width=1\textwidth]{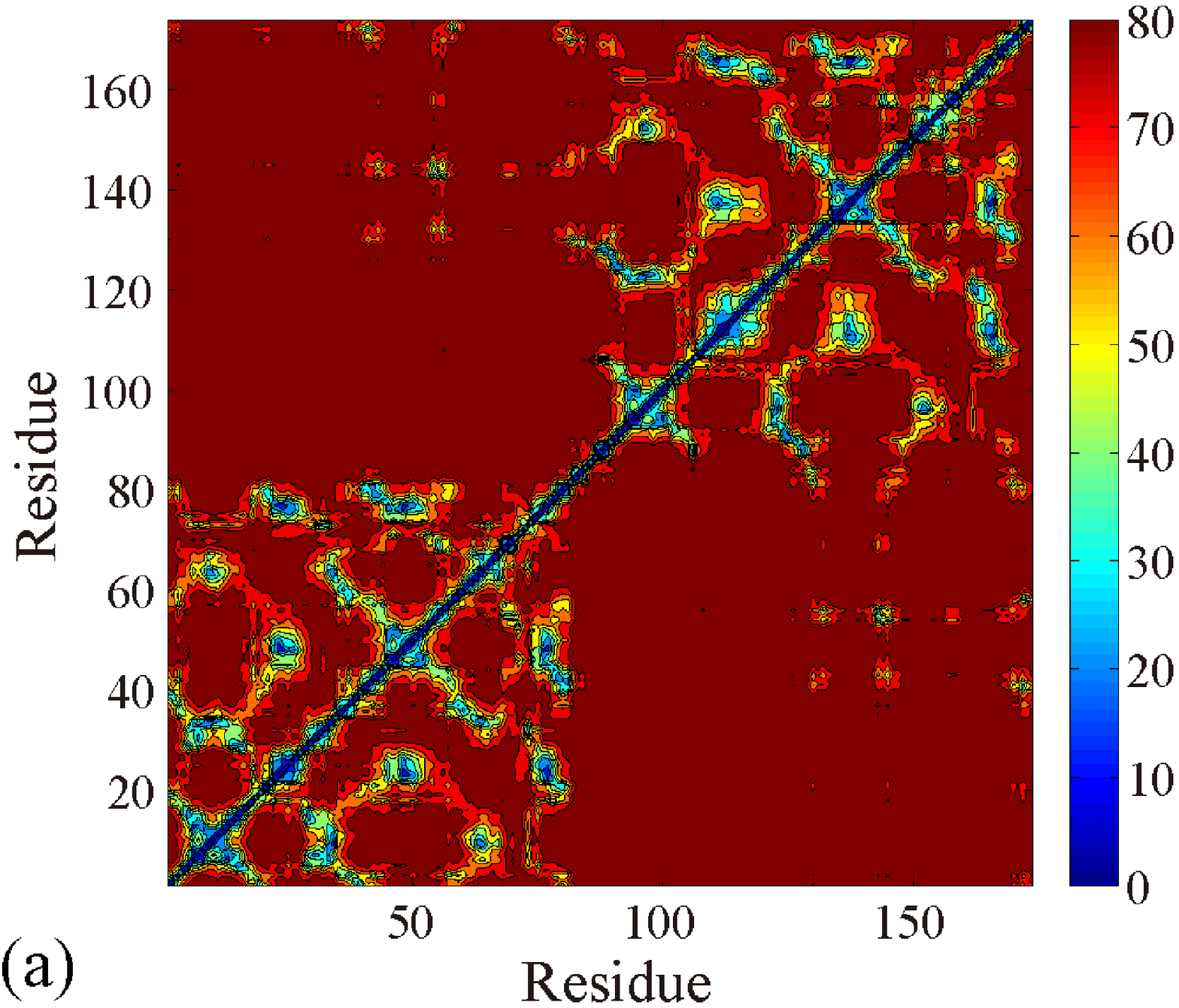} %
\end{minipage}
\hspace*{0.05\textwidth} %
\begin{minipage}[t]{0.35\textwidth}%
 \vspace{0mm}
 \includegraphics[width=1\textwidth]{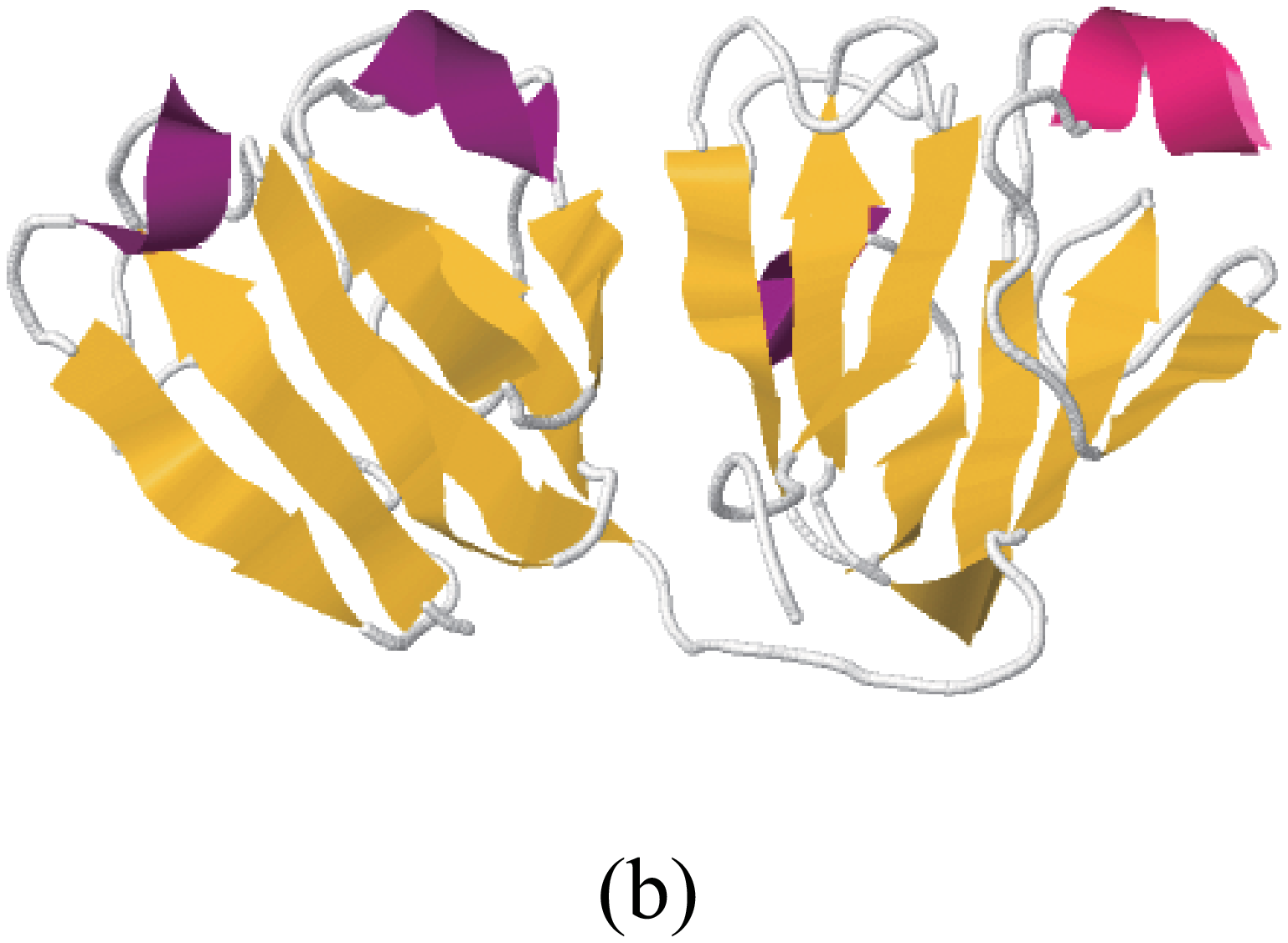} %
\end{minipage}
\caption{(a) Contour plot of the communicability angle between every pair of
residues in the GammaB crystalline protein with PDB code 1amm. (b)
A cartoon representation of the protein with PDB code 1amm in which
the $\beta$-sheets are represented as arrows in yellow and the helices
as ribbons in magenta. This shows the existence of two domains in
it.}
\label{fig14} 
\end{figure}
This protein consists of two $\alpha,\beta$-domains, the first of
which formed by amino acids 1-83 and the second by amino acids 84-174.
The two domains are very well reflected in the contour plot Fig.~\ref{fig14}(a)
as two main diagonal blocks of relatively small communicability angles,
which indicates good internal communication in each domain.

\subsection{Spatial efficiency in networks under external stress}

The communicability function has been previously generalized to consider
an external stress to which the network is submitted. This external
stress is accounted for by means of the so-called inverse temperature
$\beta\equiv\left(k_{B}T\right)^{-1}$, where $k_{B}$ is a constant
and $T$ is the temperature~\cite{Temperature}. This analogy results
from regarding that the whole network is submerged into a thermal
bath of the inverse temperature $\beta$; see~\cite{Estrada Hatano Benzi,EstradaBook}
for details. After equilibration in the bath, all edges of the network
acquire a weight equal to $\beta$.

It is clear that when $\beta\rightarrow0$, \textit{i.e.}, as the
temperature tends to infinite, the network becomes disconnected and
there is no communication among any pair of nodes. This resembles
a gas in which every node is an independent particle. On the other
hand, when $\beta\rightarrow\infty$, \textit{i.e.}, the temperature
tends to zero, the weights of every edge becomes extremely large,
which definitively increases the communication capacity among the
pairs of connected nodes. The temperature thus plays a role of an
empirical parameter which is useful in simulating effects of external
stresses to which the network is submitted, such as different levels
of social agitation, economical situations, environmental stress,
variable physiological conditions, \textit{etc}. Under this analogy,
we generalized the communicability function~\eqref{eq2.20} into
the form~\cite{Temperature} 
\begin{align}
G_{pq}\left(\beta\right)=\left(e^{\beta A}\right)_{pq}.
\end{align}
It is straightforward to realize that the communicability angle between
a given pair of nodes is generalized to 
\begin{align}
\cos\theta_{pq}\left(\beta\right) & =\frac{G_{pq}\left(\beta\right)}{\sqrt{G_{pp}\left(\beta\right)G_{qq}\left(\beta\right)}}.
\end{align}

Let us conduct a simple experiment to explore the possibilities which
this empirical parameter brings to the analysis of real-world scenarios.
We use two urban street networks representing the city landscapes
of Rio Grande in Brazil and of Yuliang in China. Both cities have
large values of the average communicability angle, \textit{i.e.},
small spatial efficiency, with $\left\langle \theta\right\rangle \approx79.7^\circ$
and $\left\langle \theta\right\rangle \approx85.8^\circ$, respectively.
We then lower the temperature and see if it increases the spatial
efficiency of both cities, \textit{i.e.}, if it decreases the values
of $\left\langle \theta\right\rangle $. In other words, we systematically
increase $\beta$ and compute the average communicability angle $\langle\theta(\beta)\rangle$.
The increase in $\beta$ here can be associated to the average increment
in the number of lanes per street in the city.

Figure~\ref{fig15}(a) shows the results. 
\begin{figure}
\centering 
\includegraphics[width=0.35\textwidth]{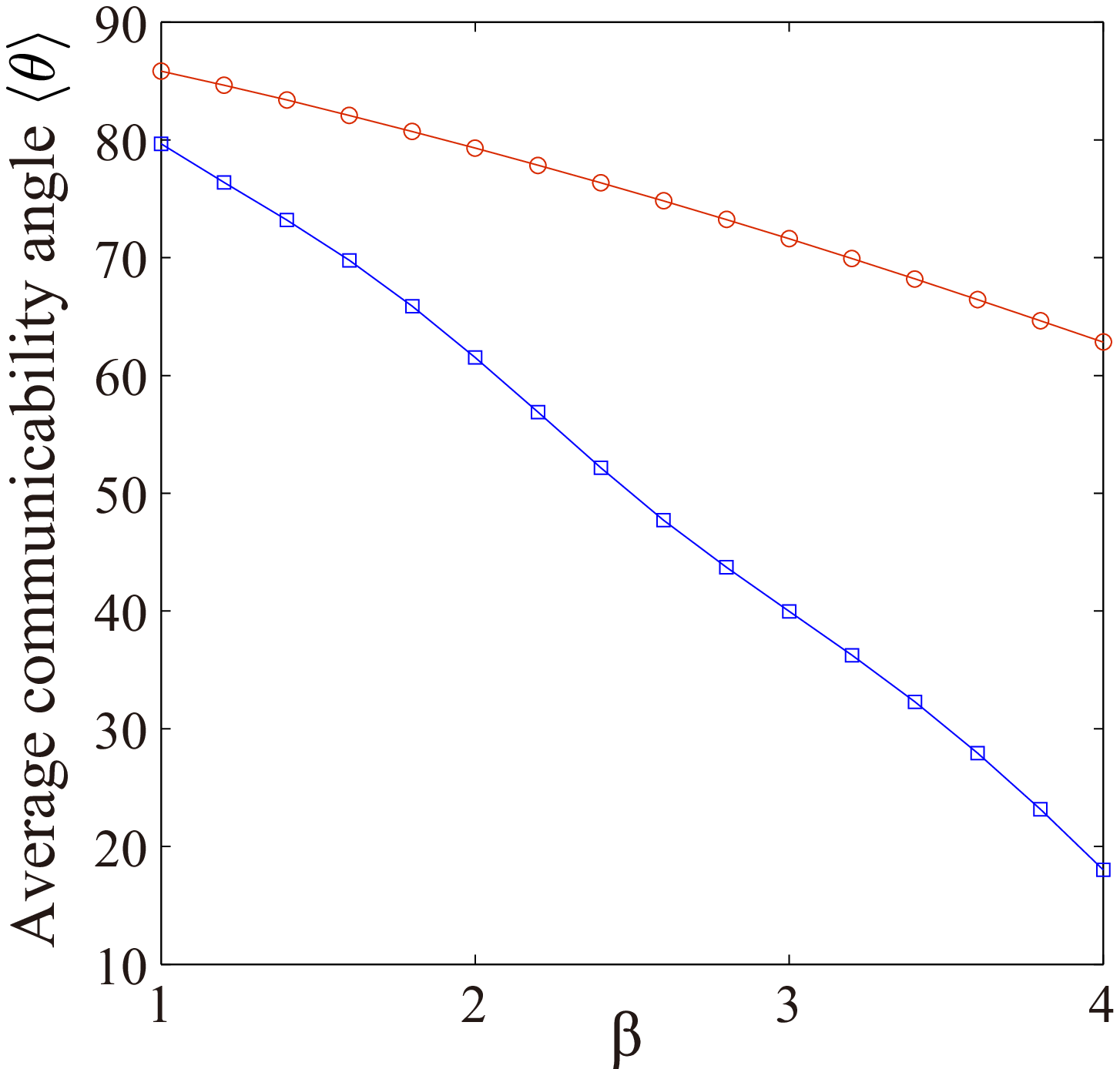}
\hspace*{0.05\textwidth} 
\includegraphics[width=0.35\textwidth]{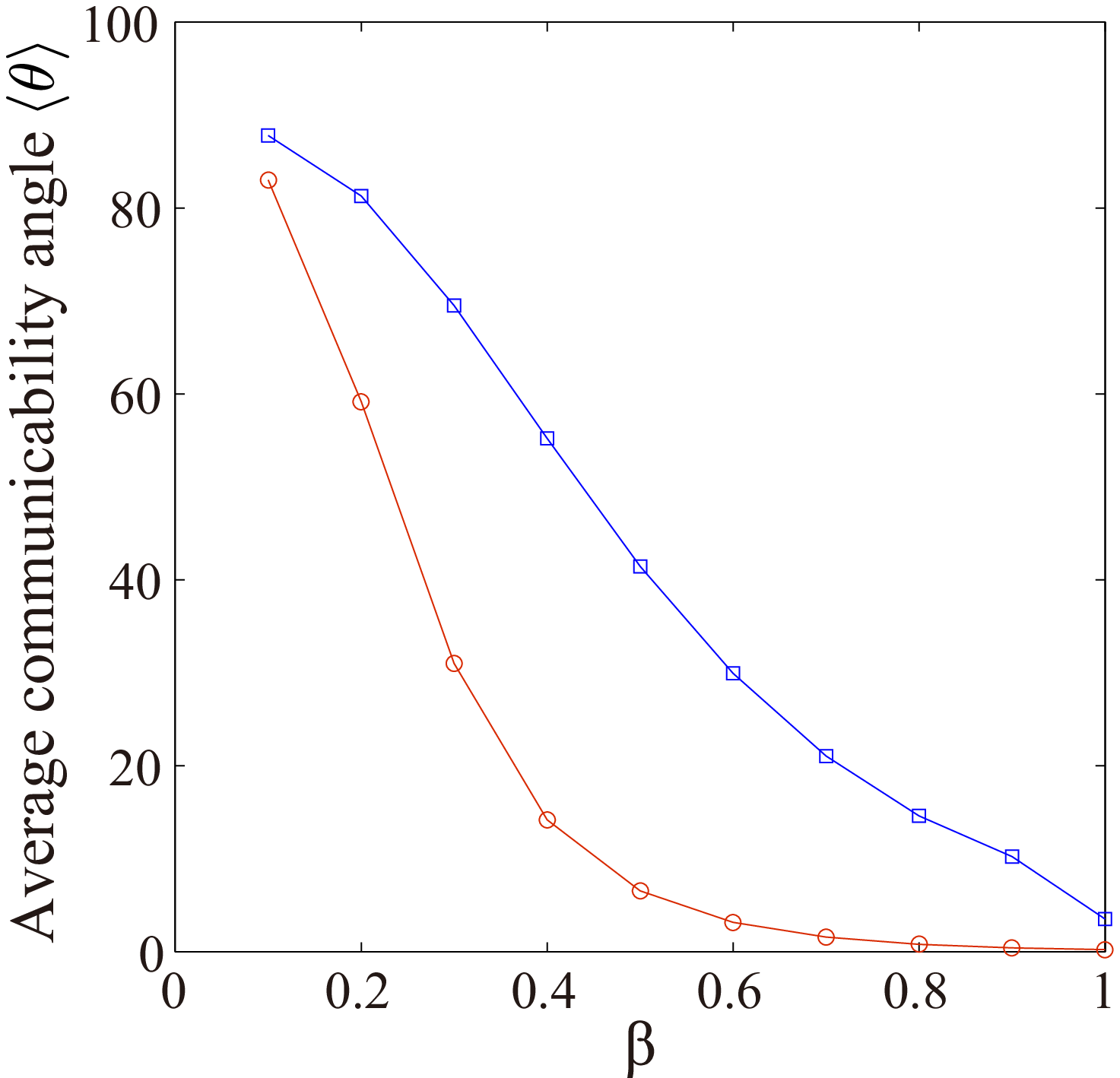}
\caption{(a) Effects of the inverse temperature $\beta$ on the average communicability
angle in two urban street networks, Rio Grande, Brazil (blue
squares) and Yuliang, China (red circles). (b) The
same for two visual-cortex networks: cat (red circles)
and macaque (blue squares).}
\label{fig15} 
\end{figure}
The city of Rio Grande dramatically improves its spatial efficiency
by increasing the average number of lanes of its streets. Although
the improvement for Yuliang is not so dramatic, there is still a decrease
in the average communicability angle of $20^{\circ}$. The causes
for the difference in the variation of $\left\langle \theta\right\rangle $
with the temperature for different networks is not a trivial one,
as there are likely to be many structural factors
involved. We do not investigate these causes here.

We next carry out the opposite experiment using two brain networks
representing the cat and macaque visual cortices. The average communicability
angle shows that both networks have a great spatial efficiency: $\left\langle \theta\right\rangle \approx0.22^\circ$
and $\left\langle \theta\right\rangle \approx3.52^\circ$, respectively.
We here raise the temperature, \textit{i.e.}, decrease $\beta$, and
see if it deteriorates the connections in the visual cortices in terms
of the average communicability angle $\langle\theta(\beta)\rangle$.
The decrease of $\beta$ can be regarded as any malfunctioning or
diseases.

Figure~\ref{fig15}(b) shows the results. Both networks dramatically
decrease their spatial efficiency as $\beta\rightarrow0$; obviously,
$\theta_{pq}\left(\beta=0\right)=90^{\circ}$. We
notice, however, that the cat visual cortex is more resistant to the
stress than the macaque one. For $\beta=0.6$, for example, the former
has $\left\langle \theta\right\rangle \approx3.15^\circ$ while the latter
has jumped up to $\left\langle \theta\right\rangle \approx29.9^\circ$.

The influence of the inverse temperature can be summarized as follows.
In the limit $\beta\rightarrow\infty$, we have $G_{pq}\rightarrow\psi_{1,p}\psi_{1,q}\exp\left(\beta\lambda_{1}\right)$.
This is equivalent to increase the good expansion properties of the
network. We recall from previous sections that for expanders the spectral
gap $\lambda_{1}-\lambda_{2}$ is very large, and consequently we have the above convergence.
This is exactly the effect that we see; 
when we increase $\beta$, the networks become more spatially efficient, \textit{i.e}., 
$\left\langle \theta\right\rangle \rightarrow0^{\circ}$.
In the limit $\beta\rightarrow0$, on the other hand, we have $G_{pq}\rightarrow1$,
which implies that $\left\langle \theta\right\rangle \rightarrow90^{\circ}$.
This is equivalent to reducing dramatically the capacity of each edge
of transmitting information in the network, which clearly decreases
its communication and spatial efficiencies.

In closing, the use of the empirical parameter $\beta$ allows us
to simulate the effects of external factors which can modify the spatial
efficiency of a network. This brings a modeling scenario to assaying
of strategies of improving the spatial efficiency of networks or to
analyses of their resilience to external stresses.

\section{Conclusions}

\label{sec:conclusions}

In a network, the more abstract spatial efficiency
refers to the average quality of communication among the nodes. Such
communication goodness is quantified as the ratio of the amount of
information successfully delivered to its destination to the one which
is frustrated in its delivery and returned to its
originators. This new paradigm is then mathematically formulated in
terms of the communicability angle between a pair of nodes. We have
provided analytical and empirical pieces of evidence which reaffirm
the idea that the communicability angle accounts for the spatial efficiency
of networks.

The richness of this approach goes beyond the results presented here;
there are a few immediate directions of research in this area which
can open new opportunities for the analysis of networks. The use of
the communicability angle for a pair of connected nodes can be seen
as an edge centrality measure which may reveal important characteristics
of individual edges in networks. The communicability angle averaged
over the edges incident to a given node can also represent a node
centrality index which indicates the contribution of the node to the
global spatial efficiency of a network. The study of the effects of
the inverse temperature on the spatial efficiency and the determination
of the most important structural factors that influence it is of tremendous
practical importance. These studies will allow us not only to predict
the effects of external stresses over the spatial efficiency of a
network but also to assay theoretical scenarios of improving this
efficiency in certain classes of networks. Last but not least, the
new concept of communicability angle can bring new possibilities to
the mathematical analysis of specific types of graphs and properties,
such as planarity and graph thickness among others.

\section*{Acknowledgement}

EE thanks the Royal Society for a Wolfson Research Merit Award. He
also thanks Dr.\ Sean Hanna (UCL) for the datasets of urban street
networks used in this work.



\begin{thebibliography}{99}

\bibitem{efficiency_3}S. Achard and Ed Bullmore, \textit{Efficiency and cost
of economical brain functional networks}, PLoS Comput. Biol. 3 (2007)
pp. e17.

\bibitem{AlonMilman} N. Alon and V. D. Milman, 
\textit{$\lambda_{1}$, Isoperimetric inequalities for graphs, and superconcentrators}, J. Combin. Theor. B, 38 (1985) pp.
73--88.

\bibitem{Residue networks} A. R. Atilgan, P. Akan, and C. Baysal,
\textit{Small-world communication of residues and significance for
protein dynamics}, Biophys. J. 86 (2004) pp. 85--91.

\bibitem{Spatial networks} M. Barth\'{e}lemy, \textit{Spatial networks},
Phys. Rep., 499 (2011) pp. 1--101.

\bibitem{PDB} H. M. Berman, J. Westbrook, Z. Feng, G. Gilliland,
T. N. Bhat, H. Weissig, I. N. Shindyalov, and P. E. Bourne, \textit{The
protein data bank}, Nucleic Acids Res. 28 (2000) pp. 235--242.

\bibitem{Boyer Myrvold} J. M. Boyer and W. J. Myrvold, \textit{On
the cutting edge: simplified $O\left(n\right)$ planarity by edge
addition}, J. Graph Algorith. Appl. 8 (2004) pp. 241--273.

\bibitem{Brain networks} E. Bullmore and O. Sporns, \textit{Complex
brain networks: graph theoretical analysis of structural and functional
systems}, Nature Rev. Neurosci., 10 (2009) pp. 186--198.

\bibitem{Urban street planarity} A. Cardillo, S. Scellato, V. Latora,
and S. Porta, \textit{Structural properties of planar graphs of urban
street patterns}, Phys. Rev. E, 73 (2006) pp. 066107.

\bibitem{LucianoReview} L. F. Costa, O. N. Oliveira Jr, G. Travieso,
F. A. Rodrigues, P. R. Villas Boas, L. Antiqueira, M. P. Viana, and
L. E. Correa Rocha, \textit{Analyzing and modeling real-world phenomena
with complex networks: a survey of applications}, Adv. Phys., 60 (2011)
pp. 329--412.

\bibitem{Protein folding} E. Estrada, \textit{Characterization of
the folding degree of proteins}, Bioinformatics, 18 (2002) pp. 697--704.

\bibitem{EstradaBook} E. Estrada, \textit{The Structure of Complex
Networks. Theory and Applications}, Oxford University Press, 2011.

\bibitem{CommDist} E. Estrada, \textit{The communicability distance
in graphs}, Lin. Alg. Appl., 436 (2012) pp. 4317--4328.

\bibitem{CommDistAppl} E. Estrada, \textit{Complex networks in the
Euclidean space of communicability distances}, Phys. Rev. E, 85 (2012)
pp. 066122.

\bibitem{Michaelbook} E. Estrada, \textit{Graphs and Networks}, in
M. Grinfeld, ed. \textit{Mathematical Tools for Physicists}, John
Wiley \& Sons, 2014.

\bibitem{Temperature} E. Estrada and N. Hatano, \textit{Statistical-mechanical
approach to subgraph centrality in complex networks}, Chem. Phys.
Lett., 439 (2007) pp. 247--251.

\bibitem{Communicability} E. Estrada and N. Hatano, \textit{Communicability
in complex networks}, Phys. Rev. E, 77 (2008) pp.
036111.

\bibitem{Estrada Hatano Benzi} E. Estrada, N. Hatano, and M. Benzi,
\textit{The physics of communicability in complex networks}, Phys.
Rep., 514 (2012) pp. 89--119.

\bibitem{Estrada Higham} E. Estrada and D. J. Higham, \textit{Network
properties revealed through matrix functions}, SIAM Rev., 52 (2010)
pp. 696--714.

\bibitem{Subgraph Centrality} E. Estrada and J.A. Rodr\'{i}guez-Vel\'{a}zquez,
\textit{Subgraph centrality in complex networks}, Phys. Rev. E, 71
(2005) pp. 056103.

\bibitem{Hyperspherical embedding} E. Estrada, M. G. Sanchez-Lirola,
and J. A. de la Peña, \textit{Hyperspherical Embedding of Graphs and
Networks in Communicability Spaces}, Discr. Appl. Math., 176 (2014)
pp. 53--77.

\bibitem{Protein packing} P. J. Fleming and F. M. Richards, \textit{Protein
packing: dependence on protein size, secondary structure and amino
acid composition}, J. Mol. Biol. 299 (2000) pp. 487--498.

\bibitem{Minimising effective resistance} A. Ghosh, S. Boyd, and
A. Saberi, \textit{Minimizing effective resistance of a graph}, SIAM
Rev., 50 (2008) pp. 37--66.

\bibitem{efficiency_4}J. Go\~{n}i, A. Avena-Koenigsberger, N. V. de Mendizabal,
M. van den Heuvel, R. Betzel and O. Sporns, \textit{Exploring the morphospace
of communication efficiency in complex networks}, PLoS One, 8 (2013)
pp. e58070.

\bibitem{PennyPacking} R. L. Graham and N. J. A. Sloane, \textit{Penny
packing and two-dimensional codes}, Discr. Comput. Geom., 5 (1990)
pp. 1-11.

\bibitem{Topological graph theory} J. L. Gross and T. W. Tucker,
\textit{Topological Graph Theory}, Dover Pub., Inc., Mineola, N.Y.,
1987.

\bibitem{Handbook Graph Theory} J.L. Gross, J. Yellen, and P. Zhang,
eds., \textit{Handbook of Graph Theory}, CRC Press, Boca Raton, 2013.

\bibitem{Expander graphs} S. Hoory, N. Linial, and A. Wigderson,
\textit{Expander graphs and their applications}, Bull. Am. Math. Soc.,
43 (2006) pp. 439--561.

\bibitem{Urban networks} B. Jiang and C. Claramunt, \textit{Topological
analysis of urban street networks}, Environ. Plan. B, 31 (2004) pp.
151--162.

\bibitem{Resistance distance} D. J. Klein and M. Randi\'{c}, \textit{Resistance
distance}, J. Math. Chem., 12 (1993) pp. 81--95.

\bibitem{efficiency_1}V. Latora and M. Marchiori, \textit{Efficient behavior
of small-world networks}, Phys. Rev. Lett. 87 (2001) pp. 198701.

\bibitem{efficiency_2}V. Latora and M. Marchiori, \textit{Economic small-world
behavior in weighted networks}, Eur. Phys. J. B 32 (2003) pp. 249-263.

\bibitem{Expanders_2} A. Lubotzky, \textit{Expander graphs in pure
and applied mathematics}, Bull. Am. Math. Soc., 49 (2012) pp. 113--162.

\bibitem{lost_in_space}U. V. Luxburg, A. Radl, and M. Hein, \textit{Getting
lost in space: Large sample analysis of the resistance distance}, In
Advances in Neural Information Processing Systems (2010) pp. 2622-2630.

\bibitem{Mohar isoperimetric}B. Mohar, \textit{Isoperimetric
inequalities, growth, and spectrum of graphs.} Linear
Algebra Appl., 103 (1983)
pp. 119-131.

\bibitem{Myers Software} C. R. Myers, \textit{Software systems as
complex networks: Structure, function, and evolvability of software
collaboration graphs}, Phys. Rev. E, 68 (2003) pp. 046116.

\bibitem{NewmanReview} M. E. J. Newman, \textit{The structure and
function of complex networks}, SIAM Rev., 45 (2003) pp. 167--256.

\bibitem{modularity}M. E. J. Newman, \textit{Modularity and community structure
in networks}, Proc. Nat. Acad. Sci. 103 (2006) pp. 8577-8582.

\bibitem{Harary_1}D. Plav\v{s}i\'{c}, S. Nikoli\'{c}, N. Trinajsti\'{c},
and Z. Mihali\'{c}, \textit{On the Harary index for the characterization of
chemical graphs}, J. Math. Chem. 12 (1993) pp. 235-250.

\bibitem{Sarnak} P. Sarnak, \textit{WHAT IS... an Expander?,} Notices
Am. Math. Soc., 51 (2004) pp. 762--763.

\bibitem{Spatial efficiency} A. Sarzynski and A. Levy, \textit{Spatial
Efficiency and Regional Prosperity: A Literature Review and Policy
Discussion}; prepared as background for GWIPP`s --- Implementing
Regionalism project, funded by the Surdna Foundation. http://www.gwu.edu/\textasciitilde{}gwipp/SpatialEfficiencyWPAug16.pdf
(downloaded on 20 November 2014).

\bibitem{Schnyder embedding}W. Schnyder, \textit{Embedding
planar graphs on the grid}. In SoDA, 90 (1990) pp.
138-148. 

\bibitem{Sierpinski}E. Teufl, and S. Wagner, \textit{The
number of spanning trees of finite sierpinski graphs.}
In Fourth Colloquium on Mathematics and Computer Science, volume AG
of DMTCS Proceedings, pp. 411-414. 2006.

\bibitem{VADAR} L. Willard, A. Ranjan, H. Zhang, H. Monzavi, R. F.
Boyko, B. D. Sykes, and D. S. Wishart, \textit{VADAR: a web server
for quantitative evaluation of protein structure quality}, Nucleic
Acids Res. 31 (2003) pp. 3316-3319.

\bibitem{resistance and Laplacian} W. Xiao and I. Gutman, \textit{Resistance
distance and Laplacian spectrum}, Theor. Chem. Acc., 110 (2003) pp.
284--289. 

\bibitem{Harary_3}K. Xu and K. C. Das, \textit{On Harary index of graphs},
Discr. Appl. Math. 159 (2011) pp. 1631-1640.

\bibitem{Harary_2}B. Zhou, X. Cai, and N. Trinajsti\'{c}, \textit{On Harary
index}, J. Math. Chem. 44 (2008) 611-618.

\end{thebibliography}
\end{document}